\newtheorem{theorem}{Theorem}
\newtheorem{lemma}[theorem]{Lemma}
\newtheorem{definition}[theorem]{Definition}
\newtheorem{proposition}[theorem]{Proposition}
\newtheorem{remark}[theorem]{Remark}
\renewcommand{\baselinestretch}{1.0}
\def\ln{{\rm ln}}
\title{Distributed Parameter Estimation in Sensor Networks:
Nonlinear Observation Models and Imperfect Communication}
\author{Soummya Kar, Jos\'e M.~F.~Moura and Kavita Ramanan
\thanks{Manuscript received August 29, 2008; revised February 05, 2011; accepted November 15, 2011. The work of Soummya Kar and Jos\'e M.~F.~Moura was supported by NSF under grants CCF-1011903 and CCF-1018509, and by AFOSR grant FA95501010291. The work of Kavita Ramanan was supported by the NSF under
grants DMS 0405343 and CMMI 0728064.}
\thanks{Soummya Kar and Jos\'e M.~F.~Moura are with the Department of Electrical and Computer Engineering,
Carnegie Mellon University, Pittsburgh, PA, USA 15213 (e-mail:
soummyak@andrew.cmu.edu, moura@ece.cmu.edu, ph: (412) 268-6341,
fax: (412) 268-3890.)}
\thanks{Kavita Ramanan is with the Division of Applied Mathematics, Brown University, Providence, RI 02912
(e-mail: Kavita{\_}Ramanan@brown.edu.)}
\thanks{Communicated by M.~Lops, Associate Editor for Detection and Estimation.}
\thanks{Digital Object Identifier 10.1109/TIT.2012.219450}
\thanks{\copyright 2012 IEEE. Personal use of this material is permitted. Permission from IEEE must be obtained for all other uses, in any current or future media, including reprinting/republishing this material for advertising or promotional purposes, creating new collective works, for resale or redistribution to servers or lists, or reuse of any copyrighted component of this work in other works.}}
\begin{document}
\maketitle 
\maketitle

\begin{abstract}
The paper studies distributed static parameter
(vector) estimation in sensor networks with nonlinear observation models and noisy inter-sensor communication. It introduces \emph{separably estimable} observation models that generalize the observability condition in linear centralized estimation to nonlinear distributed estimation. It studies two distributed estimation algorithms in separably estimable models, the $\mathcal{NU}$ (with its linear counterpart
$\mathcal{LU}$) and the $\mathcal{NLU}$. Their update rule combines a \emph{consensus} step (where each sensor updates the state by weight averaging it with its neighbors' states) and an \emph{innovation} step (where each sensor processes its local current observation.) This makes the three algorithms of the \textit{consensus + innovations} type, very different from traditional consensus. The paper proves consistency (all sensors
reach consensus almost surely and converge to the true parameter
value,) efficiency, and asymptotic unbiasedness. For $\mathcal{LU}$ and $\mathcal{NU}$, it proves asymptotic normality and provides convergence rate guarantees. The three algorithms are characterized by appropriately
chosen decaying weight sequences.
Algorithms $\mathcal{LU}$ and $\mathcal{NU}$ are analyzed in the framework of
stochastic approximation theory; algorithm $\mathcal{NLU}$
exhibits mixed time-scale behavior and biased perturbations, and
its analysis requires a different approach that is developed in the paper.
\end{abstract}
\textbf{Keywords}: Asymptotic normality, consensus, consensus + innovations, consistency, distributed parameter estimation, Laplacian, separable estimable, spectral graph theory,  stochastic approximation, unbiasedness.

\section{Introduction}
\label{introduction}

\subsection{Background and Motivation}
\label{backmot}
The paper studies distributed inference, in particular, distributed estimation, as  \textit{consensus+innovations} algorithms that generalize  distributed consensus by combining, at each time step, cooperation among agents (\emph{consensus}) with assimilation of their observations (\emph{innovations}). Our \textit{consensus + innovations} algorithms contrast with:
 \begin{inparaenum}[i)]
 \item standard consensus, see the extensive literature, e.g., \cite{SensNets:Olfati04,jadbabailinmorse03,Boyd,tsp07-K-M,Asilomar07-K-M,ICASSP08-K-M,Mesbahi,Tuncer,yildizscaglione07,Kashyap,Fagnani,Nedic,Huang,olfatisaberfaxmurray07}, where in each time step only local averaging of the neighbors' states occurs, and no observations are processed; and
 \item distributed estimation algorithms, see recent literature,e.g., \cite{Mesbahi-parameter,olfati-saber:2005,Giannakis-est,Khan-Moura,Schenato-Kalman}, where between measurement updates a large number of consensus steps (theoretically, an infinite number of steps) is taken.
 \end{inparaenum}
 \emph{Combined} consensus+innovations algorithms are natural when a distributed network estimates a spatially varying random field defined at~$M$ spatial locations, say, for simplicity, a temperature field. The goal is to reconstruct at each and every sensor an accurate image of the \emph{entire} spatial distribution of the $M$-dimensional field, assuming that at each time step each sensor makes a noisy measurement of the temperature at its single location. Without cooperation (no consensus step,) the processing of the successive temperature readings at each sensor (successive innovation steps) leads to a reliable estimate of the temperature at the sensor location--but provides no clue about the temperature distribution at the other $M-1$ locations. On the other hand, if sensors cooperate  (consensus iterates,) but only process the initial measurement, as in traditional consensus, they converge to the average temperature across the field, not to an estimate of the $M$-dimensional temperature distribution. The distributed consensus+innovations algorithms that we introduce achieve both; each sensor converges to an estimate of the entire $M$-dimensional field by combining \emph{consensus} and \emph{processing} of the sensors measurements. Subsequent to this paper, analysis of detection consensus+innovations type algorithms is, e.g., in~\cite{bajovicjakoveticxaviresinopolimoura-11,jakoveticmouraxavier-11}.

Important questions that arise with consensus+innovations algorithms include:
\begin{inparaenum}[i)]
\item \emph{convergence}: \label{convergence-1} do the algorithms converge and if so in what sense;
 \item \emph{consensus}: do the agents reach a consensus on their field estimates; \item \emph{distributed versus centralized}: \label{goodness-1} how good is the distributed field estimate at each sensor when compared with the \emph{centralized} estimate obtained by a fusion center, in other words are the distributed estimate sequences consistent, and asymptotically unbiased, efficient, or normal; and \item \emph{rate of convergence}: \label{rateconvergence-1}  what is the rate at which the distributed estimators converge.
\end{inparaenum}
These questions are very distinct from the convergence issues considered in the ``consensus only'' literature.

 We present three distributed consensus+innovations inference algorithms: $\mathcal{LU}$ for linear observation models  (as when each sensor makes a noisy reading of the temperature at its location, see Section~\ref{lbm-numstud};) and two algorithms, $\mathcal{NU}$ and $\mathcal{NLU}$, for nonlinear observation models (like in power grids when each sensor measures a phase differential through a sinusoidal modulation, see Section~\ref{NLU-power}.)   The paper introduces the conditions on the sensor observations model (separable estimability that we define) \emph{and} on the communication network (connectedness on average) for the distributed estimates to converge.
         The \emph{separable estimability}, akin to global observability, and \emph{connectedness}, is an intuitively pleasing condition and in a sense minimal--distributed estimation cannot do better than (the optimal) centralized estimator, hence, the model better be (globally) observable (but not necessarily locally observable;) and if the sensors need to cooperate to assimilate the data collected by the distributed network, the network should be connected (on average,) or the sensors will work in isolation.

 In contrast with other settings, e.g., \emph{linear} distributed least-mean-square (LMS) approaches to parameter estimation, e.g, \cite{Sayed-LMS,Stankovic-parameter,Giannakis-LMS,Nedic-parameter}, we study distributed estimation in the usual framework of linear or nonlinear observations of a (vector) parameter in noise\footnote{In this paper, we restrict attention to static parameter (fields). Time-varying parameters/signals are considered elsewhere, see, for example, \cite{KM-JSTSP},\cite{Riccati-weakcons} for estimation/filtering of fading (non-stationary) parameters and general time-varying linear dynamical systems, respectively.}, when the dimension, $M_n$, of the observation at each sensor~$n$ is $M_{n}\ll M$, and the parameter estimation model is locally unobservable, i.e., each individual sensor cannot recover the entire $M$-dimensional parameter from its $M_n$-dimensional observation, even if noiseless. Through cooperation (consensus) a (local) sensor estimator may converge to an estimate of the entire $M$-dimensional field, by simultaneously combining at each time~$i$, its estimate, its observation (innovation), and the estimates received from the sensors with which it communicates. We show in the paper conditions under which this holds.

We extend this distributed estimation model to include sensor and link or communication channel failures, random communication protocols, and quantized communication. These conditions make the problem more realistic when a large number of agents are involved since inexpensive sensors are bounded to fail at random times,  packet loss in wireless digital communications cause links to fail intermittently, agents can communicate asynchronously via a random protocol like gossip or one of its variants, and the  agents may be resource constrained and have a limited bit budget for communication. We make no distributional assumptions on the sensors and link failures,  they can be spatially correlated, \cite{jakoveticxaviermoura-jul10}, but are temporally uncorrelated\footnote{This dynamic network is more general and subsumes the erasure network model, where the link failures are independent over space \emph{and} time.}.
We show that, under these broad conditions, the three \emph{distributed} estimation algorithms, $\mathcal{LU}$, $\mathcal{NU}$, and $\mathcal{NLU}$, are consistent if the observation model is separable estimable  (see Section~\ref{nonlin-obs-mod}) and the network is connected on average.

\textbf{Algorithms $\mathcal{LU}$, $\mathcal{NU}$, and $\mathcal{NLU}$}: $\mathcal{LU}$ applies when the noisy observations are linear on the parameter.   For the linear model, the separably estimable condition reduces to a rank condition on the global observability Grammian. $\mathcal{LU}$ combines at each time iteration the consensus term  with the innovations associated with the new observation. Note that, in this algorithm, as well as with the other two nonlinear algorithms, the dimension of the local observation for sensor~$n$, $M_n$, is much smaller than the dimension~$M$ of the field, i.e., $M_{n}\ll M$. The algorithm $\mathcal{NU}$ generalizes $\mathcal{LU}$ to nonlinear separably estimable models. It is very important to note that, in both algorithms, $\mathcal{LU}$ and $\mathcal{NU}$, the same asymptotically decaying to zero time-varying weight sequence is associated with the consensus and innovation updates;  in other words, both the consensus and innovation terms of the algorithm exhibit the same decay rate. Because of this, it is enough to resort to stochastic  approximation techniques to prove consistency, asymptotic unbiasedness, and asymptotic normality for both algorithms, $\mathcal{LU}$ and $\mathcal{NU}$. For a treatment of general distributed stochastic algorithms see \cite{tsitsiklisphd84,tsitsiklisbertsekasathans86,Bertsekas-survey,Kushner-dist}. Beyond consistency, we characterize explicitly for the $\mathcal{LU}$ algorithm the asymptotic variance and compare it with the asymptotic variance of the centralized optimal scheme. For the $\mathcal{NU}$ algorithm, and general models, it is difficult to find explicitly a Lyapounov function (as needed by stochastic approximation). However, with  weaker assumptions on the nonlinear observation model (Lipschitz continuity and certain growth properties,) we guarantee the existence of a Lyapounov function; hence, demonstrate asymptotic normality of the $\mathcal{NU}$ estimates, see Theorems~\ref{nlth} and~\ref{thnl} in Section~\ref{nlalg1}. These conditions are much easier to verify than guessing the form of a Lyapounov function. Also, in the proof of Theorem~\ref{nlth}, we actually  show how to use these conditions to determine a Lyapounov function explicitly, which can then be used to analyze convergence rates.

The third algorithm, $\mathcal{NLU}$, applies when the observation models are only continuous and not Lipschitz continuous.  $\mathcal{NLU}$ is however a mixed time-scale algorithm, where the consensus time-scale dominates the innovations time-scale, and consists of unbiased perturbations (detailed explanation is provided in the paper.) Because of this mixed time-scales, the $\mathcal{NLU}$ algorithm does not fall under the purview of standard stochastic approximation theory, and to show its consistency requires an altogether different framework as developed in the paper, see Theorems~\ref{theorem_tilde} and~\ref{theorem_untrans}, in Section~\ref{nlalg2}.

The two algorithms $\mathcal{NLU}$ and $\mathcal{NU}$ represent different tradeoffs. We show consistency for $\mathcal{NLU}$ under weaker assumptions (observation model continuity) than for $\mathcal{NU}$ (Lipschitz continuity plus growth conditions.) On the other hand, when these more stringent conditions hold, $\mathcal{NU}$  provides convergence rate guarantees and asymptotic normality; these follow from standard stochastic approximation theory that apply to $\mathcal{NU}$ but not to $\mathcal{NLU}$.

\textbf{Brief comment on the literature.} We contrast our work with relevant recent literature on distributed estimation. Papers~\cite{Mesbahi-parameter,Giannakis-est,tsp06-K-A-M,Khan-Moura} study estimation in \emph{static} networks, where either the sensors take a single snapshot of the field at the start and then
initiate distributed consensus protocols (or more generally distributed optimization, as  in~\cite{Giannakis-est}) to fuse the initial estimates, or the observation rate of the sensors is assumed to be much slower than the inter-sensor communicate rate, thus permitting a separation of the two time-scales. On the contrary, our consensus+innovations algorithm combines fusion (consensus) and observation (innovation) updates in the same iteration. The network is \emph{dynamic} with channel failures, the protocols are \emph{random}, and the sensors fail. Unlike~\cite{Mesbahi-parameter,Giannakis-est,tsp06-K-A-M,Khan-Moura}, our approach does not require distributional assumptions on the observation noise, and we make explicit the structural assumptions on the observation model (separable estimability) and network connectivity needed to guarantee consistent parameter estimates at every sensor. These structural assumptions are quite weak and are necessary for centralized estimators to obtain consistency.

There is considerable work in \emph{linear} distributed least-mean-square (LMS) approaches to parameter estimation in \emph{static} networks, e.g., \cite{Sayed-LMS,Stankovic-parameter,Giannakis-LMS,Nedic-parameter}. While LMS is also a consensus+innovation type algorithm, we show how our linear algorithm $\mathcal{LU}$ and LMS are quite distinct, with a very different setup and goal. In LMS, for example, for channel estimation or channel equalization, \cite{alisayed-adaptive}, in adaptive filtering, \cite{VictorSolo}, or in system identification, see~\cite{Ljung-book}, the observations $z_n(i)$ are the output of a noisy finite impulse response channel (or a linear system to be identified) excited by a random input sequence $u(i)$ (these random input sequences are the regressors.) The unknown channel impulse response~$\theta$ is to be estimated by a stochastic gradient type algorithm that has available (in the channel estimation or training phase) \emph{both} the random inputs \emph{and} the regressors. In contrast, in the distributed estimation problem we study, for example, for the $\mathcal{LU}$, the observations at sensor~$n$ and time~$i$ are
\begin{equation}
\label{eqn:luobservation1}
\mathbf{z}_{n}(i)=H_{n}(i)\mathbf{\theta}^{\ast}+\mathbf{\zeta}_{n}(i)
\end{equation}
For example, in~\eqref{eqn:luobservation1}, the observation matrices $H_{n}(i)$ could be of the form,
\begin{equation}
\label{eqn:sensorfailures-1}
H_{n}(i)=\frac{1}{p}\delta_n(i)\overline{H}_n
\end{equation}
where $\delta_n(i)$ is a zero-one Bernoulli variable to account for sensor failures, $p>0$ denotes the sensing probability, and the mean value $\overline{H}_n$ models the normal operation of the sensor, e.g., measuring the local temperature, or an average of local temperatures. Equation~(\ref{eqn:luobservation1}) is the usual model in parameter estimation or waveform filtering, while~(\ref{eqn:sensorfailures-1}) extends this model in a significant way to random intermittent measurements. In our distributed estimation algorithms, we do \emph{not} know the random observation matrix~$H_{n}(i)$ (only its first and second moment), while in the LMS where they play the role of the regressors, they are usually known to the LMS algorithm.

We contrast further our \emph{linear distributed} $\mathcal{LU}$ with  linear distributed LMS. References~\cite{Sayed-LMS,Giannakis-LMS,Nedic-parameter}  use non-decaying combining weights that lead to a residual tracking error; under appropriate assumptions, these algorithms can be adapted to certain time-varying tracking scenarios; we consider time varying processes in other work, \cite{KM-JSTSP,Riccati-weakcons}. Reference~\cite{Stankovic-parameter} considers decaying weight sequences as we do in $\mathcal{LU}$, thereby establishing also $\mathcal{L}_{2}$ convergence to the desired parameter value. All these works deal with distributed \emph{linear} problems, while our work emphasizes distributed estimators for linear and nonlinear sensor observation models and establishes their convergence properties. We present the necessary (minimal) structural conditions that the distributed sensing model (given) and the inter-sensor communication network should satisfy to guarantee the existence of \emph{successful} distributed estimators. Also, apart from treating generic separably estimable nonlinear observation models, in the linear case, our algorithms $\mathcal{NU}$ and $\mathcal{LU}$ lead to asymptotic normality in addition to consistency and asymptotic unbiasedness in random time-varying networks with quantized inter-sensor communication and sensor failures.

\textbf{Remark.} We noted that the $\mathcal{NLU}$ algorithm is mixed time scale; this means a stochastic algorithm where two potentials act in the same update step with different weight or gain sequences. This should not be confused with (centralized) stochastic algorithms with coupling (see~\cite{Borkar-stochapp}), where a quickly switching parameter influences the relatively slower dynamics of another state, leading to \emph{averaged} dynamics. We note further in this context that~\cite{Gelfand-Mitter} (and references therein) develops methods to analyze mixed time scale (centralized) algorithms in the context of simulated annealing. In~\cite{Gelfand-Mitter}, the role of our innovation (new observation) potential is played by a  martingale difference term. However, in our study, the innovation is not a martingale difference process, and a key step in the analysis is to derive pathwise strong approximation results to characterize the rate at which the innovation process converges to a martingale difference process.

%
%
%
%

We briefly comment on the organization of the remaining of the paper.
The rest of this section introduces notation and preliminaries to
be adopted throughout the paper. To motivate the generic nonlinear
problem, we study the linear case (algorithm $\mathcal{LU}$) in
Section~\ref{lbm}. Section~\ref{nonlin} studies the generic
separably estimable models and the algorithm $\mathcal{NU}$,
whereas algorithm $\mathcal{NLU}$ is presented in
Section~\ref{nlalg2}. Finally, Section~\ref{conclusion} concludes
the paper.

\subsection{Notation}
\label{notgraph} For completeness, this subsection sets notation and presents preliminaries on algebraic graph theory, matrices, and dithered quantization to be used in the sequel.

\textbf{Preliminaries}: We adopt the following notation. $\mathbb{R}^{k}$: the $k$-dimensional Euclidean space; $I_{k}$: $k\times k$ identity matrix; $\mathbf{1}_{k},\mathbf{0}_{k}$: column vector of ones and zeros in $\mathbb{R}^{k}$, respectively; $P_{k}=\frac{1}{k}\mathbf{1}_{k}\mathbf{1}_{k}^{T}$: the rank one $k\times k$ projector, whose
 only non-zero eigenvalue is one, and the corresponding normalized eigenvector is
$\left(1/\sqrt{k}\right)\mathbf{1}_{k}$; $\left\|\cdot\right\|$: the standard
Euclidean 2-norm when applied to a vector and the induced 2-norm when applied to matrices, which is equivalent to the matrix spectral radius for symmetric matrices; $\theta\in \mathcal{U}\subset \mathbb{R}^{M}$: the parameter to be estimated; $\mathbf{\theta}^{\ast}$:  the true (but unknown) value of the
parameter $\theta$; $\mathbf{x}_{n}(i)\in\mathbb{R}^{M}$: the
estimate of~$\mathbf{\theta}^{\ast}$ at time~$i$ at sensor~$n$--without
loss of generality (wlog), the initial estimate, $\mathbf{x}_{n}(0)$, at time~$0$ at sensor~$n$ is a non-random quantity; $\left(\Omega,\mathcal{F}\right)$: common measurable space where all the random objects are defined; $\mathbb{P}_{\mathbf{\theta}^{\ast}} \left[\cdot\right]$ and $\mathbb{E}_{\mathbf{\theta}^{\ast}}\left[\cdot\right]$: the probability and expectation operators when the true (but unknown) parameter value is $\mathbf{\theta}^{\ast}$--when the context is clear, we abuse notation by dropping the subscript. All inequalities involving random variables are to be interpreted a.s.~(almost surely.)

\textbf{Spectral graph theory}:  For the \emph{undirected} graph $G=(V,E)$, $V=\left[1\cdots N\right]$ is the set of nodes or vertices, $|V|=N$, and~$E$ is the set of edges. The unordered pair $(n,l)\in E$ if there exists an edge between nodes~$n$ and~$l$. The graph~$G$ is simple if devoid of self-loops and multiple edges and connected if there exists a path\footnote{A path between nodes $n$ and $l$ of length $m$ is a sequence
$(n=i_{0},i_{1},\cdots,i_{m}=l)$ of vertices, such that $(i_{k},i_{k+1})\in E\:\forall~0\leq k\leq m-1$.} between each pair of nodes. The neighborhood of node~$n$ is $\Omega_{n}=\left\{l\in V\,|\,(n,l)\in E\right\}$.
The degree $d_{n}=|\Omega_{n}|$ of node~$n$ is the number of edges with~$n$ as one end point, and $D=\mbox{diag}\left(d_{1}\cdots d_{N}\right)$ is the degree matrix, the diagonal matrix with diagonal entries the degrees~$d_n$. The structure of the graph can be described by the symmetric $N\times N$ adjacency matrix, $A=\left[A_{nl}\right]$, $A_{nl}=1$, if $(n,l)\in E$, $A_{nl}=0$, otherwise.
  The graph Laplacian matrix, $L$, is $L=D-A$; it is a
 a positive semidefinite matrix whose eigenvalues can be ordered as $0=\lambda_{1}(L)\leq\lambda_{2}(L)\leq\cdots\leq\lambda_{N}(L)$.
The smallest eigenvalue $\lambda_{1}(l)$ is zero, with $\left(1/\sqrt{N}\right)\mathbf{1}_{N}$ being the corresponding normalized eigenvector. The multiplicity of the zero eigenvalue equals the number of connected components of the network; for a connected graph, $\lambda_{2}(L)>0$. This second eigenvalue is the algebraic connectivity or the Fiedler value of
the network; see \cite{FanChung,Mohar,SensNets:Bollobas98} for detailed treatment of graphs and their spectral theory.

\textbf{Kronecker product}: Since we are dealing with vector parameters, most of the matrix manipulations will involve Kronecker products. For example, the Kronecker product of the $N\times N$ matrix~$L$ and $I_{M}$ will be an $NM\times NM$ matrix, denoted by $L\otimes I_{M}$. We will deal often with matrices of the form $C=\left[I_{NM} -bL\otimes I_{M}-aI_{NM}-P_{N}\otimes I_{M}\right]$. It follows from the properties of Kronecker products and of the matrices $L$ and $P_{N}$ that the eigenvalues of the matrix~$C$ are $-a$ and $1-b\lambda_{i}(L)-a,\:2\leq i\leq N$, each being repeated~$M$ times.

We now review results from statistical quantization theory.

\textbf{Dithered quantization}: We assume that all sensors are equipped with identical uniform, dithered quantizers  $\mathbf{q}(\cdot): \mathbb{R}^{M}\rightarrow\mathcal{Q}^{M}$ applied componentwise, with countable alphabet $\mathcal{Q}^{M}=\left\{[k_{1}\Delta,\cdots,k_{M}\Delta]^{T}\left|\rule{0cm}{.35cm}\right.k_{i}\in\mathbb{Z},~\forall
i\right\}$. We assume the dither satisfies the Schuchman conditions (see~\cite{Schuchman,Lipshitz,SripadSnyder,Gray},) so that the error sequence for subtractively dithered systems (\cite{Lipshitz})
$\left\{\varepsilon(i)\right\}_{i\geq 0}$
\begin{equation}
\label{vareps}
\varepsilon(i)=q(y(i)+\nu(i))-(y(i)+\nu(i))
\end{equation}
is an i.i.d. sequence of uniformly distributed random variables on $[-\Delta/2,\Delta/2)$, which is independent of the input sequence $\left\{y(i)\right\}_{i\geq 0}$. In~(\ref{vareps}), the dither sequence $\left\{\nu(i) \right\}_{i\geq 0}$ is i.i.d. uniformly distributed random variables on $[-\Delta/2,\Delta/2)$, independent of the input sequence $\left\{y(i)\right\}_{i\geq 0}$; we refer to~\cite{karmoura-quantized} where we use this model and make further relevant comments.

\textbf{Consistency and asymptotic unbiasedness}: We recall standard definitions from
sequential estimation theory (see, for example, \cite{Lehmann-estimation}).
\begin{definition}[Consistency]: A sequence of
estimates $\left\{\mathbf{x}^{\bullet}(i)\right\}_{i\geq 0}$ is
called consistent if
\begin{equation}
\label{consis_def}
\mathbb{P}_{\mathbf{\theta}^{\ast}}
\left[\lim_{i\rightarrow\infty}\mathbf{x}^{\bullet}(i)
=\mathbf{\theta}^{\ast}\right]=1,
\:\:\forall\mathbf{\theta}^{\ast}\in\mathcal{U}
\end{equation}
or, in other words, if the estimate sequence converges a.s.~to the
true parameter value. The above definition of consistency is also
called strong consistency. When the convergence is in probability, we get weak consistency. In this paper, we use the term consistency to mean strong consistency, which implies weak consistency.
\end{definition}

\begin{definition}[Asymptotic Unbiasedness]:
\end{definition}
A sequence of estimates
$\left\{\mathbf{x}^{\bullet}(i)\right\}_{i\geq 0}$ is called
asymptotically unbiased if
\begin{equation}
\label{eq:consis_def}
\lim_{i\rightarrow\infty}\mathbb{E}_{\mathbf{\theta}^{\ast}}\left[\mathbf{x}^{\bullet}(i)\right]=\mathbf{\theta}^{\ast},~~\forall\mathbf{\theta}^{\ast}\in\mathcal{U}
\end{equation}
\section{Distributed Linear Parameter Estimation: Algorithm~$\mathcal{LU}$}
\label{lbm}
In this section, we consider the algorithm $\mathcal{LU}$ for \emph{distributed} parameter estimation when the observation model is linear. This problem motivates the generic \emph{separably estimable} nonlinear observation models considered in Sections~\ref{nonlin} and~\ref{nlalg2}. Section~\ref{lbm-probform} sets up the distributed linear estimation problem and presents the algorithm~$\mathcal{LU}$. Section~\ref{lbm-conv} establishes the consistency and asymptotic unbiasedness of the $\mathcal{LU}$ algorithm, where we show that, under the $\mathcal{LU}$ algorithm, all sensors converge a.s.~to the true parameter value, $\mathbf{\theta}^{\ast}$. Convergence rate analysis (asymptotic normality) is carried out in Section~\ref{lbm-convrate}, while Section~\ref{lbm-numstud} illustrates~$\mathcal{LU}$ with an example.
\subsection{Problem Formulation: Algorithm $\mathcal{LU}$}
\label{lbm-probform}
Let $\mathbf{\theta}^{\ast}\in\mathbb{R}^{M}$ be an $M$-dimensional parameter to be estimated by a network of~$N$ sensors. Sensor~$n$ makes the observations:
\begin{equation}
\label{obsmod}
\mathbf{z}_{n}(i)=H_{n}(i)\mathbf{\theta}^{\ast}+\mathbf{\zeta}_{n}(i)\:\in\mathbb{R}^{M_{n}}
\end{equation}
Each sensor observes only a subset of~$M_n$ of the components of~$\theta^{\ast}$, or $M_n$ linear combinations of a few components of $\theta^{\ast}$, with $M_{n}\ll M$. We make the following assumptions.
\begin{itemize}[\setlabelwidth{A.1)}]
\item{\textbf{(A.1)}}\textbf{Observation Noise}: The
observation noise process,
$\left\{\mathbf{\zeta}(i)=\left[\mathbf{\zeta}^{T}_{1}(i),
\cdots,\mathbf{\zeta}^{T}_{N}(i)\right]^{T}\right\}_{i\geq 0}$ is
 i.i.d.~zero mean, with finite second moment. In
particular, the observation noise covariance is bounded and independent of~$i$
\begin{equation}
\label{obsnoise}
\mathbb{E}\left[\mathbf{\zeta}(i)\mathbf{\zeta}^{T}(j)\right]=S_{\mathbf{\zeta}} \delta_{ij}, \:\forall i,j\geq 0
\end{equation}
where the Kronecker symbol $\delta_{ij}=1$ if $i=j$ and zero
otherwise. Note that the observation noises at different sensors
may be correlated during a particular iteration.
Eqn.~(\ref{obsnoise}) states only temporal independence. The
spatial correlation of the observation noise makes our model
applicable to practical sensor network problems, for instance, for
distributed target localization, where the observation noise is
generally correlated across sensors.
%
\item{\textbf{(A.2)}}\textbf{Sensor Failures}: The
observation matrices,
$\left\{\left[H_{1}(i),\cdots,H_{N}(i)\right]\right\}_{i\geq 0}$,
form an i.i.d. sequence with mean $\left[\overline{H}_{1},\cdots,
\overline{H}_{N}\right]$ and finite second moment. In particular,
we have
\begin{equation}
\label{Hcond}H_{n}(i) =
\overline{H}_{n}+\widetilde{H}_{n}(i),\:\forall i,n
\end{equation}
where, $\overline{H}_{n}=\mathbb{E}\left[H_{n}(i)\right],\:\forall
i,n$ and the sequence $\left\{\left[\widetilde{H}_{1}(i),\cdots,\widetilde{H}_{N}(i) \right]\right\}_{i\geq 0}$ is  zero mean i.i.d.~with finite second moment.
 Here, also, we require only temporal independence of the observation matrices, but allow them to be spatially correlated. For example, $H_n(i)=\delta_n(i) \overline{H}_{n}$, with $\delta_n(i)$ an iid sequence of Bernoulli variables modeling intermittent sensor failures.

\begin{remark}
\label{rem:LU100} The $\mathcal{LU}$ update does not use the instantaneous observation matrices, $H_{n}(i)$, only their ensemble averages. This is a distinction between $\mathcal{LU}$ and LMS. LMS (for example, in adaptive filtering) assumes  the random matrices $H_{n}(i)$ are, together with the observations, also available  (see Chapter 4 of~\cite{Chen-Guo}). In parameter estimation, the $H_{n}(i)$ model sensor failures and  $\mathcal{LU}$ has no control over their instantiations. Hence, while in LMS it may be reasonable to use the instantaneous values of the random regressors $H_{n}(i)$, in the setting we consider, the instantaneous realizations of the observation matrices are not available.
\end{remark}
\item{\textbf{(A.3)}}\textbf{Mean Connectedness, Link Failures, and Random Protocols}: The graph Laplacians
\begin{equation}
\label{Lcond}
L(i) = \overline{L}+\widetilde{L}(i),~\forall i\geq 0
\end{equation}
are a sequence of i.i.d.~matrices with mean $\overline{L} = \mathbb{E}\left[L(i)\right]$. We make no distributional assumptions on the $\{L(i)\}$. Although independent at different times, during the same iteration, the link failures can be spatially dependent, i.e., correlated. This is more general
and subsumes the erasure network model, where the link failures are independent over space \emph{and} time. Wireless sensor networks motivate this model since interference among the wireless communication channels correlates the link failures over space, while, over time, it is still reasonable to assume that the
channels are memoryless or independent. Connectedness of the graph is an important issue. The random instantiations~$G(i)$ of the graph need not be connected; in fact, all these instantiations may be disconnected. We only require  the graph to be connected on \emph{average}. This is captured by requiring that $\lambda_{2} \left(\overline{L}\right)>0$, enabling us to capture a broad class of asynchronous communication models; for example, the random asynchronous gossip protocol analyzed
in~\cite{Boyd-Gossip} satisfies $\lambda_{2}\left(\overline{L} \right)>0$ and hence falls under this framework.
\item{\textbf{(A.4)}} \textbf{Independence}: The
sequences $\left\{L(i)\right\}_{i\geq
0}$, $\left\{\mathbf{\zeta}_{n}(i) \right\}_{1\leq n\leq N,~i\geq
0}$, $\left\{H_{n}(i)\right\}_{1\leq n\leq N, i\geq
0}$, $\left\{\nu_{nl}^{m}(i)\right\}$ are mutually independent.

We introduce the distributed observability condition  required for convergence of the  $\mathcal{LU}$ linear estimation algorithm.
\begin{definition}[\hspace{-.2cm}Distributed observability]
 \label{def:distributedobservability}
 The observation system~\eqref{obsmod} is \emph{distributedly observable} if the matrix~$G$ is full rank
\begin{equation}
\label{G} G = \sum_{n=1}^{N}\overline{H}_{n}^{T}\overline{H}_{n}
\end{equation}
\label{def:distributedobservability-1}
\end{definition}
This distributed observability extends the observability condition for a centralized estimator that is needed to get a consistent estimate of the parameter $\mathbf{\theta}^{\ast}$. We note that the information available to the $n$-th sensor at any time~$i$ about the corresponding observation matrix is just the mean $\overline{H}_{n}$, and \emph{not} the random $H_{n}(i)$. Hence, the state update equation uses only the
$\overline{H}_{n}$'s, as given in~\eqref{algRE} below.
\item{\textbf{(A.5)}} \textbf{Observability}: The distributed observation system~\eqref{obsmod} is \emph{distributedly observable} in the sense of definition~\ref{def:distributedobservability}.
\end{itemize}

\textbf{Algorithm $\mathcal{LU}$}: We consider now the algorithm $\mathcal{LU}$ for distributed parameter estimation in the linear observation model~(\ref{obsmod}). Starting from some initial deterministic estimate of the parameters\footnote{The initial states may be random, we assume deterministic for notational simplicity.}, $\mathbf{x}_{n}(0)\in \mathbb{R}^{M}$, each sensor~$n$ generates  a sequence of estimates, $\left\{ \mathbf{x}_{n}(i)\right\}_{i\geq 0}$
%
%
%
%
%
%
 by the following distributed iterative algorithm:
\begin{align}
\label{algRE}
&\mathbf{x}_{n}(i+1)=\mathbf{x}_{n}(i)-\alpha(i)
\left[b\sum_{l\in\Omega_{n}(i)}\left(\mathbf{x}_{n}(i)
-\mathbf{q}\left(\mathbf{x}_{l}(i)\right.\right.
\right.
\\
\nonumber
&
\left.
\phantom{\sum_{l\in\Omega_{n}(i)}}
\left.\left.
+\mathbf{\nu}_{nl}(i)\right)\right)
-\overline{H}_{n}^{T}\left(\mathbf{z}_{n}(i)-\overline{H}_{n}\mathbf{x}_{n}(i)\right)\right]
\end{align}
where  $\left\{\mathbf{q}(\mathbf{x}_{l}(i)+ \mathbf{\nu}_{nl}(i))\right\}_{l\in\Omega_{n}(i)}$ is the dithered quantized exchanged data. In~(\ref{algRE}),  the sequence of weights $\left\{\alpha(i)\right\}$ satisfies the persistence condition \textbf{B5} given in the Appendix~\ref{res_stoch_app}; $b>0$ is a constant and  $\left\{\alpha(i)\right\}_{i\geq0}$ is a sequence of weights with properties to be defined below. The algorithm~(\ref{algRE}) is distributed because for sensor~$n$ it involves only the data from the sensors in its neighborhood~$\Omega_n(i)$. Using~\eqref{vareps}, the state update can be written as
\begin{align}
\nonumber
&\mathbf{x}_{n}(i+1)=\mathbf{x}_{n}(i)-\alpha(i)
\left[b\sum_{l\in\Omega_{n}(i)}\left(\mathbf{x}_{n}(i)-\mathbf{x}_{l}(i)\right)
\right.
\\
\label{algRE1-b}
&\left.
\hspace{-1cm}
\phantom{\sum_{l\in\Omega_{n}(i)}}
-\overline{H}_{n}^{T}\left(\mathbf{z}_{n}(i)
-\overline{H}_{n}\mathbf{x}_{n}(i)\right)
-b\mathbf{\nu}_{nl}(i)-b\mathbf{\varepsilon}_{nl}(i)\right]
\end{align}
We rewrite~(\ref{algRE1-b}) in compact form. Define the random
vectors, $\mathbf{\Upsilon}(i)$ and
$\mathbf{\Psi}(i)\in\mathbb{R}^{NM}$ with vector
components
\begin{equation}
\label{defUpsilon}
\mathbf{\Upsilon}_{n}(i)=-\sum_{l\in\Omega_{n}(i)}\mathbf{\nu}_{nl}(i)\:\:\mbox{and}\:\:
 \mathbf{\Psi}_{n}(i) =
-\sum_{l\in\Omega_{n}(i)}\mathbf{\varepsilon}_{nl}(i)
\end{equation}
It follows from the Schuchman conditions on the dither, see Section~\ref{notgraph} and \cite{karmoura-quantized}, that
\begin{align}
\label{dith1}
\mathbb{E}\left[\mathbf{\Upsilon}(i)\right]&=
\mathbb{E}\left[\mathbf{\Psi}(i)\right]=\mathbf{0}, \: \forall i
\\
\nonumber
\sup_{i}\mathbb{E}\left[\left\|\mathbf{\Upsilon}(i)\right\|^{2}\right]
&=
\sup_{i}\mathbb{E}\left[\left\|\mathbf{\Psi}(i)\right\|^{2}\right]
\\
\label{dith2}
&\leq\frac{N(N-1)M\Delta^{2}}{12}
\end{align}
from which we then have
\begin{align}
\nonumber
&
\hspace{-.225cm}
\sup_{i}\mathbb{E}\left[\left\|\mathbf{\Upsilon}(i)+\mathbf{\Psi}(i)\right\|^{2}\right]
\leq
2\sup_{i}\mathbb{E}\left[\left\|\mathbf{\Upsilon}(i)\right\|^{2}\right]
+
\\
\label{dith40}
&
\hspace{.31cm}
+2\sup_{i}\mathbb{E}\left[\left\|\mathbf{\Psi}(i)\right\|^{2}\right]
\leq
 \frac{N(N-1)M\Delta^{2}}{3}
  =  \eta_{q}
\end{align}
The iterations in~(\ref{algRE}) can be written in compact form. Stack all sensors state estimates in a long state vector estimate $\mathbf{x}(i)=\left[\mathbf{x}^{T}_{1}(i)\cdots\mathbf{x}^{T}_{N}(i)\right]^{T}$ and define the matrices
\begin{align}
\label{D-H}
\overline{D}_{\overline{H}}&=\mbox{diag}\left[\overline{H}_{1}^{T}\cdots
\overline{H}_{N}^{T}\right]
\\
\label{D-H-2}
D_{\overline{H}}&=\overline{D}_{\overline{H}}\overline{D}^T_{\overline{H}}
=\mbox{diag}\left[\overline{H}_{1}^{T}\overline{H}_{1}\cdots
\overline{H}_{N}^{T}\overline{H}_{N}\right]
\end{align}
Then, the compact vector form of the $\mathcal{LU}$ algorithm is
\begin{align}
\label{algRE1}
\mathbf{x}(i+1)&=\mathbf{x}(i)-\alpha(i)\left[b(L(i)\otimes
I_{M})\mathbf{x}(i)
\overline{D}^T_{\overline{H}}\right.
\\
\nonumber
&
\left.
-\overline{D}_{\overline{H}}
\left(\mathbf{z}(i)-\overline{D}^T_{\overline{H}}\mathbf{x}(i)\right)
+b\mathbf{\Upsilon}(i)+b\mathbf{\Psi}(i)\right]
\end{align}
In the $\mathcal{LU}$ algorithm~(\ref{algRE1}), the  covariance matrix of the noise is defined as
\begin{equation}
\label{dith5}
S_{q}=\mathbb{E}\left[\left(\mathbf{\Upsilon}(i)
+\mathbf{\Psi}(i)\right)\left(\mathbf{\Upsilon}(i)
+\mathbf{\Psi}(i)\right)^{T}\right]
\end{equation}
\textbf{Markov}: We characterize the state vector estimate $\mathbf{x}(i)$. Consider the filtration, $\left\{\mathcal{F}^{\mathbf{x}}_{i} \right\}_{i\geq 0}$, given by
\begin{equation}
\label{natF}
\mathcal{F}^{\mathbf{x}}_{i}
=\sigma\left(\mathbf{x}(0),\left\{L(j),\mathbf{z}(j),
\mathbf{\Upsilon}(j),\mathbf{\Psi}(j)\right\}_{0\leq j<i}\right)
\end{equation}
From~\textbf{(A1)}--\textbf{(A4)},  $L(i),\mathbf{z}(i)$,
$\mathbf{\Upsilon}(i)$, $\mathbf{\Psi}(i)$ are independent of
$\mathcal{F}^{\mathbf{x}}_{i}$; so,  $\left\{\mathbf{x}(i),
\mathcal{F}^{\mathbf{x}}_{i}\right\}_{i\geq 0}$ is a Markov process.
\vspace*{-.25cm}
\subsection{Consistency of $\mathcal{LU}$}
\label{lbm-conv}
We consider consistency and asymptotic unbiasedness.
%
\begin{lemma}
\label{possemdef} Consider $\mathcal{LU}$ under
Assumptions~\textbf{(A.1)-(A.5)}. Then,
$\left[b\overline{L}\otimes I_{M}+D_{\overline{H}}\right]$ is
symmetric positive definite.
\end{lemma}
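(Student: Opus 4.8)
The plan is to establish symmetry first and then verify strict positivity by examining when the associated quadratic form can vanish. Symmetry is immediate: $\overline{L}$ is the expectation of the symmetric Laplacians $L(i)$ and hence symmetric, so $\overline{L}\otimes I_{M}$ is symmetric; and $D_{\overline{H}}$ is block-diagonal with symmetric blocks $\overline{H}_{n}^{T}\overline{H}_{n}$, so it too is symmetric. With $b>0$, the matrix in question is a sum of symmetric matrices and is therefore symmetric. Moreover, each summand is positive semidefinite: $\overline{L}$ is a Laplacian (the expectation of positive semidefinite Laplacians) and thus $\overline{L}\otimes I_{M}\succeq 0$, while $D_{\overline{H}}\succeq 0$ since each block $\overline{H}_{n}^{T}\overline{H}_{n}\succeq 0$. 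Consequently $b\overline{L}\otimes I_{M}+D_{\overline{H}}\succeq 0$, and it remains only to rule out a nonzero null vector.

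To this end I would suppose $\mathbf{x}\in\mathbb{R}^{NM\times 1}$ satisfies $\mathbf{x}^{T}\left(b\overline{L}\otimes I_{M}+D_{\overline{H}}\right)\mathbf{x}=0$. Because both summands are positive semidefinite, the quadratic form of the sum vanishes if and only if each term vanishes separately; thus, using $b>0$, both $\mathbf{x}^{T}(\overline{L}\otimes I_{M})\mathbf{x}=0$ and $\mathbf{x}^{T}D_{\overline{H}}\mathbf{x}=0$ hold simultaneously. Writing $\mathbf{x}=\left[\mathbf{x}_{1}^{T},\cdots,\mathbf{x}_{N}^{T}\right]^{T}$ with $\mathbf{x}_{n}\in\mathbb{R}^{M\times 1}$, the first condition together with $\overline{L}\otimes I_{M}\succeq 0$ forces $\mathbf{x}$ into the null space of $\overline{L}\otimes I_{M}$. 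Here the assumption $\lambda_{2}(\overline{L})>0$ is decisive: it guarantees that the null space of $\overline{L}$ is exactly $\mbox{span}\{\mathbf{1}_{N}\}$, so by the eigenstructure of Kronecker products the null space of $\overline{L}\otimes I_{M}$ is the \emph{consensus subspace} $\{\mathbf{1}_{N}\otimes\mathbf{a}:\mathbf{a}\in\mathbb{R}^{M\times 1}\}$. Hence $\mathbf{x}_{1}=\cdots=\mathbf{x}_{N}=\mathbf{a}$ for some common $\mathbf{a}\in\mathbb{R}^{M\times 1}$.

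Substituting this form into the remaining condition yields
\[
0=\mathbf{x}^{T}D_{\overline{H}}\mathbf{x}=\sum_{n=1}^{N}\mathbf{x}_{n}^{T}\overline{H}_{n}^{T}\overline{H}_{n}\mathbf{x}_{n}=\mathbf{a}^{T}\Big(\sum_{n=1}^{N}\overline{H}_{n}^{T}\overline{H}_{n}\Big)\mathbf{a}=\mathbf{a}^{T}G\mathbf{a},
\]
where $G$ is the global observability Grammian of~(\ref{G}). By Assumption~\textbf{A.2}, $G$ is full rank, hence positive definite, so $\mathbf{a}^{T}G\mathbf{a}=0$ forces $\mathbf{a}=\mathbf{0}$ and therefore $\mathbf{x}=\mathbf{0}$. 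Thus the only null vector is trivial, which establishes positive definiteness.

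The argument is essentially a reduction and presents no serious analytic obstacle; the one point requiring care is the characterization of the intersection of the two null spaces, which is the crux. The average-connectivity hypothesis $\lambda_{2}(\overline{L})>0$ is precisely what collapses the null space of $\overline{L}\otimes I_{M}$ to the consensus subspace, and it is only on this subspace that the distributed observability of $G$ becomes effective. Neither the connectivity-in-the-mean condition nor the global observability condition suffices on its own; it is their interplay on the consensus subspace that delivers definiteness.
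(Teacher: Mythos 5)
Your proof is correct and follows essentially the same route as the paper's: splitting the vanishing quadratic form into the two positive semidefinite summands, using $\lambda_{2}(\overline{L})>0$ to force the null vector onto the consensus subspace, and then invoking the full rank of $G$ from Assumption~\textbf{A.2} to conclude the vector is zero. The only difference is presentational—you argue directly that the null space of the quadratic form is trivial, while the paper phrases the identical reduction as a proof by contradiction.
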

\begin{proof}
Symmetricity is obvious. It also follows from the properties of
Laplacian matrices and the structure of $D_{\overline{H}}$ that
these matrices are positive semidefinite. Then the matrix
$\left[b\overline{L}\otimes I_{M}+D_{\overline{H}}\right]$ is
positive semidefinite, being the sum of two positive semidefinite
matrices. To prove positive definiteness, assume, on the
contrary, that the matrix $\left[b\overline{L}\otimes
I_{M}+D_{\overline{H}}\right]$ is not positive definite. Then,
there exists, $\mathbf{x}\in\mathbb{R}^{NM}$, such that
$\mathbf{x}\neq\mathbf{0}$ and
\begin{equation}
\label{possemdef:1} \mathbf{x}^{T}\left[b\overline{L}\otimes
I_{M}+D_{\overline{H}}\right]\mathbf{x}=\mathbf{0}
\end{equation}
From the positive semidefiniteness of $\overline{L}\otimes I_{M}$ and
$D_{\overline{H}}$, and the fact that $b>0$, it follows
\begin{equation}
\label{possemdef:2} \mathbf{x}^{T}\left[\overline{L}\otimes
I_{M}\right]\mathbf{x}=0,\:\:\mathbf{x}^{T}D_{\overline{H}}\mathbf{x}=0
\end{equation}
Partition $\mathbf{x}$ as
$\mathbf{x}=\left[\mathbf{x}_{1}^T\cdots\mathbf{x}_{N}^T\right]^{T},
\mathbf{x}_{n}\in\mathbb{R}^{M}, \forall 1\leq n\leq N$.
It follows from the properties of Laplacian matrices and the fact
that $\lambda_{2}(\overline{L})>0$, that~\eqref{possemdef:2}
holds \emph{iff}
\begin{equation}
\label{possemdef:4} \mathbf{x}_{n}=\mathbf{a},~\forall n
\end{equation}
where $\mathbf{a}\in\mathbb{R}^{M}$, and
$\mathbf{a}\neq\mathbf{0}$. Also, \eqref{possemdef:2} implies
\begin{equation}
\label{possemdef:5}
\sum_{n=1}^{N}\mathbf{x}_{n}^{T}\overline{H}_{n}^{T}\overline{H}_{n}\mathbf{x}_{n}=0
\end{equation}
Let $G$ be as in~\eqref{G}. Equations~\eqref{possemdef:5} and~\eqref{possemdef:4} imply
\begin{equation}
\label{possemdef:6} \mathbf{a}^{T}G\mathbf{a}=0,
\end{equation}
a contradiction, since $G>0$  by Assumption~\textbf{(A.5)} and $\mathbf{a}\neq\mathbf{0}$. Thus,  $\left[b\overline{L}\otimes
I_{M}+D_{\overline{H}}\right]>0$.
\end{proof}
%
\vspace*{-.3cm}
\begin{theorem}[$\mathcal{LU}$: Asymptotic unbiasedness]
\label{asymunb} Let the $\mathcal{LU}$ algorithm under~\textbf{(A.1)-(A.5)}. Then,
$\left\{\mathbf{x}_{n}(i)\right\}_{i\geq 0}$,  at sensor~$n$ is
asymptotically unbiased
\begin{equation}
\label{unb1}
\lim_{i\rightarrow\infty}\mathbb{E}
\left[\mathbf{x}_{n}(i)\right]=\mathbf{\theta}^{\ast},\:\:1\leq n \leq N
\end{equation}
\end{theorem}
\begin{proof}
Taking expectations on both sides of \eqref{algRE1} and by
the independence assumption~\textbf{(A.4)},
\begin{align}
\nonumber
\mathbb{E}\left[\mathbf{x}(i+1)\right]&=
\mathbb{E}\left[\mathbf{x}(i)\right]-\alpha(i)\left[b\left(\overline{L}\otimes
I_{M}\right)\mathbb{E}\left[\mathbf{x}(i)\right]+\right.
\\
\label{unb2}
&
\left.
+D_{\overline{H}}\mathbb{E}\left[\mathbf{x}(i)\right]
-\overline{D}_{\overline{H}}\mathbb{E}\left[\mathbf{z}(i)\right]\right]
\end{align}
Subtracting $\mathbf{1}_{N}\otimes\mathbf{\theta}^{\ast}$ from
both sides of~\eqref{unb2}, noting that
\begin{align}
\label{unb0}
\left(\overline{L}\otimes
I_{M}\right)\left(\mathbf{1}_{N}\otimes\mathbf{\theta}^{\ast}\right)&=\mathbf{0},
\\
\overline{D}_{\overline{H}}\mathbb{E}\left[\mathbf{z}(i)\right]
&=D_{\overline{H}}\left(\mathbf{1}_{N}\otimes\mathbf{\theta}^{\ast}\right)
\end{align}
we have
\begin{align}
\nonumber
\mathbb{E}\left[\mathbf{x}(i+1)\right]
&-\mathbf{1}_{N}\otimes\mathbf{\theta}^{\ast}
=\left[I_{NM}-\alpha(i)\left(b\overline{L}\otimes
I_{M}+\right.\right.
\\
\label{unb3}
&+
\left.
\left.
D_{\overline{H}}\right)\right]\left[\mathbb{E}\left[\mathbf{x}(i)\right]
- \mathbf{1}_{N}\otimes\mathbf{\theta}^{\ast}\right]
\end{align}
Let $\lambda_{\min}\left(b\overline{L}\otimes
I_{M}+D_{\overline{H}}\right)$, $\lambda_{\max}\left(b\overline{L}\otimes
I_{M}+D_{\overline{H}}\right)$ be the smallest and largest eigenvalues of
the positive definite matrix $\left[b\overline{L}\otimes
I_{M}+D_{\overline{H}}\right]$ (Lemma~\ref{possemdef}.) Since
$\alpha(i)\rightarrow 0$ (Assumption~\textbf{(B.5)}, Appendix~\ref{res_stoch_app}),
\begin{equation}
\label{unb4}
\exists i_{0} \ni: \:\: \alpha(i_{0})\leq\frac{1}{\lambda_{\max}\left(b\overline{L}\otimes
I_{M}+D_{\overline{H}}\right)},\:\:\forall i\geq i_{0}
\end{equation}
Continuing the recursion in \eqref{unb3}, we have, for
$i>i_{0}$,
\begin{align}
\nonumber
\mathbb{E}\left[\mathbf{x}(i)\right]
-\mathbf{1}_{N}\otimes\mathbf{\theta}^{\ast}
&=\left(\prod_{j=i_{0}}^{i-1}\left[I_{NM}-\alpha(j)\left(b\overline{L}\otimes
I_{M}+
\right.\right.\right.
\\
\label{unb5-1}
&
\hspace{-1.5cm}
\left.
\phantom{\prod_{j=i_{0}}^{i-1}}
+
\left.
\left.
D_{\overline{H}}\right)\right]
\right)
\left[\mathbb{E}\left[\mathbf{x}(i_{0})\right]
-\mathbf{1}_{N}\otimes\mathbf{\theta}^{\ast}\right]
\end{align}
Eqn.~\eqref{unb5-1} implies
\begin{align}
\label{unb6}
&
\left\|\mathbb{E}\left[\mathbf{x}(i)\right]
-\mathbf{1}_{N}\otimes\mathbf{\theta}^{\ast}\right\|
\leq\left(\prod_{j=i_{0}}^{i-1}\left\|I_{NM}-\alpha(j)
\right.\right.
\\
\nonumber
&
\left.
\hspace{-.8cm}
\phantom{\prod_{j=i_{0}}^{i-1}}
\left.
\left(b\overline{L}\otimes
I_{M}+D_{\overline{H}}\right)\right\|\right)\left\|
\mathbb{E}\left[\mathbf{x}(i_{0})\right]
-\mathbf{1}_{N}\otimes\mathbf{\theta}^{\ast}\right\|,~~i>i_{0}
\end{align}
It follows from \eqref{unb4}
\begin{align}
\label{unb7}
&
\hspace{-1cm}
\left\|I_{NM}-\alpha(j)\left(b\overline{L}\otimes
I_{M}+D_{\overline{H}}\right)\right\|=
\\
\nonumber
&
1-\alpha(j)\lambda_{\min}
\left(b\overline{L}\otimes I_{M}+D_{\overline{H}}\right),\:\:j\geq i_{0}
\end{align}
Eqns.~(\ref{unb6},\ref{unb7}) now give for $i>i_{0}$
\begin{align}
\label{unb8}
&\left\|\mathbb{E}\left[\mathbf{x}(i)\right]
-\mathbf{1}_{N}\otimes\mathbf{\theta}^{\ast}\right\|
\leq
\left(\prod_{j=i_{0}}^{i-1}\left(1-
\alpha(j)
\right.
\right.
\\
\nonumber
&
\hspace{+.5cm}
\left.
\phantom{\prod_{j=i_{0}}^{i-1}}
\left.
\lambda_{\min}
\left(b\overline{L}\otimes I_{M}+D_{\overline{H}}\right)
\right)\right)
\left\|\mathbb{E}\left[\mathbf{x}(i_{0})\right]
-\mathbf{1}_{N}\otimes\mathbf{\theta}^{\ast}\right\|,
\end{align}
Finally, from the inequality $1-a\leq e^{-a}$, $0\leq a\leq 1$, get
\begin{align}
\nonumber
&
\hspace{-.2cm}
\left\|\mathbb{E}\left[\mathbf{x}(i)\right]
-\mathbf{1}_{N}\otimes\mathbf{\theta}^{\ast}\right\|
\leq
e^{-\lambda_{\min}\left(b\overline{L}\otimes I_{M}+D_{\overline{H}}\right) \sum_{j=i_{0}}^{i-1}
\alpha(j)}
\\
\label{unb9}
&
\hspace{2.25cm}
\left\|\mathbb{E}\left[\mathbf{x}\left(i_{0}\right)\right]
-\mathbf{1}_{N}\otimes\mathbf{\theta}^{\ast}\right\|,
\hspace{.25cm}
i>i_{0}
\end{align}
Since, $\lambda_{\min}\left(b\overline{L}\otimes I_{M}+D_{\overline{H}}\right)>0$ and the weight sequence sums to infinity, the theorem follows since
\begin{equation}
\label{unb10}
\lim_{i\rightarrow\infty}\left\|\mathbb{E}\left[\mathbf{x}(i)\right]
-\mathbf{1}_{N}\otimes\mathbf{\theta}^{\ast}\right\|=0
\end{equation}
\end{proof}
%
%
%
%
Before proceeding to Theorems~\ref{REas} and~\ref{asynorm} establishing the consistency and asymptotic normality of the $\mathcal{LU}$, the reader may refer to Appendix~\ref{res_stoch_app}, where useful results on stochastic approximation are discussed.
\begin{theorem}[$\mathcal{LU}$: Consistency]
\label{REas} Consider  $\mathcal{LU}$  under (\textbf{A.1})--(\textbf{A.5}). Then, the estimate sequence $\left\{\mathbf{x}_{n}(i)\right\}_{i\geq 0}$ at sensor $n$ is  consistent
\begin{equation}
\label{REas:1}
\mathbb{P}\left[\lim_{i\rightarrow\infty}\mathbf{x}_{n}(i)
=\mathbf{\theta}^{\ast},~\forall n\right]=1
\end{equation}
\end{theorem}
\begin{proof}
The proof follows by showing that $\left\{\mathbf{x}(i)\right\}_{i\geq 0}$ satisfies the Assumptions~\textbf{(B.1)-(B.5)} of Theorem~\ref{RM} (Appendix~\ref{res_stoch_app}).
Recall the filtration, $\left\{\mathcal{F}_{i}^{\mathbf{x}}\right\}_{i\geq
0}$, in \eqref{natF}. Rewrite \eqref{algRE1} by adding and subtracting the
vector $\mathbf{1}_{N}\otimes\mathbf{\theta}^{\ast}$ and noting that
\begin{align}
\label{REas:2}
&
\left(\overline{L}\otimes I_{M}\right) \left(\mathbf{1}_{N}\otimes\mathbf{\theta}^{\ast}\right)=\mathbf{0}\\
\label{Reas:3}
&\mathbf{x}(i+1)=
\mathbf{x}(i)-\alpha(i)\left[b\left(\overline{L}\otimes
I_{M}\right)\left(\mathbf{x}(i)-
\right.
\hspace{-.3cm}
\phantom{{\overline{D}^T}^T_{\overline{H}}}
\right.
\\
\nonumber
&
\left.
\left.
\hspace{2cm}-
\mathbf{1}_{N}\otimes\mathbf{\theta}^{\ast}\right)
+b\left(\widetilde{L}(i)\otimes I_{M}\right) \mathbf{x}(i)
+
\right.
 \\
\nonumber
&
 \left.
 +D_{\overline{H}}
\left(\mathbf{x}(i)
\hspace{-.05cm}
-
\hspace{-.05cm}
\mathbf{1}_{N}\otimes\mathbf{\theta}^{\ast}\right)
\hspace{-.05cm}
-
\hspace{-.05cm}
\overline{D}_{\overline{H}}\left(\mathbf{z}(i)
\hspace{-.05cm}
-
\hspace{-.05cm}
\overline{D}^T_{\overline{H}}\mathbf{1}_{N}\otimes\mathbf{\theta}^{\ast}\right)
\hspace{-.1cm}
+
\right.
\\
\nonumber
&
\left.
\hspace{-.3cm}
\phantom{{\overline{D}^T}^T_{\overline{H}}}
\hspace{+3cm}
+\,\,b\mathbf{\Upsilon}(i)+b\mathbf{\Psi}(i)\right]
\end{align}
In the notation of Theorem~\ref{RM}, Appendix~\ref{res_stoch_app}, let $R\left(\mathbf{x}\right)$ and $\Gamma\left(i+1,\mathbf{x},\omega\right)$ as in~\eqref{Reas:5} and~\eqref{Reas:6} below and rewrite \eqref{Reas:3}
\begin{align}
\nonumber
&
\mathbf{x}(i+1)=\mathbf{x}(i)+\alpha(i)\left[R(\mathbf{x}(i))+\Gamma
\left(i+1,\mathbf{x}(i),\omega\right)\right]\\
\label{Reas:5}
&R\left(\mathbf{x}\right)=-\left[b\overline{L}\otimes
I_{M}+D_{\overline{H}}\right]\left(\mathbf{x}-\mathbf{1}_{N}\otimes\mathbf{\theta}^{\ast}\right)
\\
\label{Reas:6}
&\Gamma\left(i+1,\mathbf{x},\omega\right)=-\left[b\left(\widetilde{L}(i)\otimes
I_{M}\right)\mathbf{x}-
\right.
\\
\nonumber
&
\left.
\hspace{1cm}
-
\overline{D}_{\overline{H}}\left(\mathbf{z}(i)
-\overline{D}^T_{\overline{H}}\mathbf{1}_{N}\otimes\mathbf{\theta}^{\ast}\right)
+b\mathbf{\Upsilon}(i)+b\mathbf{\Psi}(i)\right]
\end{align}
Under the Assumptions~\textbf{(A.1)-(A.5)}, for fixed $i+1$, the
random family,
$\left\{\Gamma\left(i+1,\mathbf{x},\omega\right)
\right\}_{\mathbf{x}\in\mathbb{R}^{NM}}$, is $\mathcal{F}_{i+1}^{\mathbf{x}}$ measurable, zero-mean
and independent of $\mathcal{F}_{i}^{\mathbf{x}}$. Hence, the
assumptions~\textbf{(B.1)-(B.2)} of Theorem~\ref{RM} are satisfied.

We now show the existence of a stochastic potential function
$V(\cdot)$ satisfying the remaining Assumptions~\textbf{(B.3)-(B.4)}
of Theorem~\ref{RM}. To this end, define
\begin{align}
\label{Reas:7}
V\left(\mathbf{x}\right)&=\left(\mathbf{x}-\mathbf{1}_{N}\otimes
\mathbf{\theta}^{\ast}\right)^{T}\left[b\overline{L}\otimes I_{M} +D_{\overline{H}}\right]\\
&
\nonumber
\hspace{4.5cm}
\left(\mathbf{x}-\mathbf{1}_{N}\otimes\mathbf{\theta}^{\ast}\right)
\end{align}
Clearly, $V\left(\mathbf{x}\right)\in\mathbb{C}_{2}$ with bounded second
order partial derivatives. It follows from the positive
definiteness of $\left[b\overline{L}\otimes
I_{M}+D_{\overline{H}}\right]$ (Lemma~\ref{possemdef}), that
\begin{equation}
\label{Reas:8}
V\left(\mathbf{1}_{N}\otimes\mathbf{\theta}^{\ast}\right)
=0,~~V\left(\mathbf{x}\right)>0,~\mathbf{x}\neq\mathbf{1}_{N}
\otimes\mathbf{\theta}^{\ast}
\end{equation}
Since the matrix $\left[b\overline{L}\otimes
I_{M}+D_{\overline{H}}\right]$ is positive definite, the matrix
$\left[b\overline{L}\otimes I_{M}+D_{\overline{H}}\right]^{2}$ is
also positive definite and hence, there exists a constant
$c_{1}>0$, such that
\begin{align}
\nonumber
&\left(\mathbf{x}-\mathbf{1}_{N}\otimes\mathbf{\theta}^{\ast}\right)^{T}
\left[b\overline{L}\otimes
I_{M}+D_{\overline{H}}\right]^{2}
\left(\mathbf{x}-\mathbf{1}_{N}\otimes\mathbf{\theta}^{\ast}\right)\geq
\\
\label{Reas:9}
&
\hspace{1.5cm}
\geq
c_{1}\|\mathbf{x}-\mathbf{1}_{N}\otimes\mathbf{\theta}^{\ast}\|^{2}, \:\:\:\forall
\mathbf{x}\in\mathbb{R}^{NM}
\end{align}
It then follows that
\begin{align}
\label{Reas:10}
&\sup_{\left\|\mathbf{x}-\mathbf{1}_{N}\otimes\mathbf{\theta}^{\ast}\right\|
>\epsilon}\left(R\left(\mathbf{x}\right),V_{\mathbf{x}}\left(\mathbf{x}\right)\right)
=
\\
&
\nonumber
-2\inf_{\left\|\mathbf{x}-\mathbf{1}_{N}\otimes\mathbf{\theta}^{\ast}\right\|
>\epsilon}\left\{\left(\mathbf{x}-\mathbf{1}_{N}\otimes\mathbf{\theta}^{\ast}\right)^{T}
\left[b\overline{L}\otimes
I_{M}+D_{\overline{H}}\right]^{2}\right.
\\
&
\nonumber
\left.\phantom{\left[b\overline{L}\otimes I_{M}+D_{\overline{H}}\right]^{2}|}
\left(\mathbf{x}-\mathbf{1}_{N}\otimes\mathbf{\theta}^{\ast}\right)
\right\}
\nonumber
\\
&
\leq
-2\inf_{\left\|\mathbf{x}-\mathbf{1}_{N}\otimes\mathbf{\theta}^{\ast}\right\|
>\epsilon}c_{1}\left\|\mathbf{x}-\mathbf{1}_{N}\otimes
\mathbf{\theta}^{\ast}\right\|^{2}
\\
&
\nonumber
\leq -2c_{1}\epsilon^{2}
<0
\end{align}
Thus, Assumption~\textbf{(B.3)} is satisfied. From
\eqref{Reas:5}
\begin{align}
\nonumber
&
\left\|R\left(\mathbf{x}\right)\right\|^{2}
=
\\
&
\nonumber
 =
\left(\mathbf{x}-\mathbf{1}_{N}\otimes\mathbf{\theta}^{\ast}\right)^{T}
\left[b\overline{L}\otimes
I_{M}+D_{\overline{H}}\right]^{2}
\left(\mathbf{x}-\mathbf{1}_{N}\otimes\mathbf{\theta}^{\ast}\right)\nonumber
\\
\label{Reas:11}
&
 =
-\frac{1}{2}\left(R\left(\mathbf{x}\right),V_{\mathbf{x}}
\left(\mathbf{x}\right)\right)
\end{align}
From \eqref{Reas:6} and the independence Assumption~\textbf{(A.4)}
\begin{align}
\label{Reas:12}
&\mathbb{E}\left[\left\|\Gamma\left(i+1,\mathbf{x},\omega\right)\right\|^{2}\right] =
\\
\nonumber
&
=
\mathbb{E}\left[\left(\mathbf{x}-
\mathbf{1}_{N}\otimes\mathbf{\theta}^{\ast}\right)^{T}
\left(b\widetilde{L}(i)\otimes I_{M}\right)^{2}
\left(\mathbf{x}-\mathbf{1}_{N}\otimes\mathbf{\theta}^{\ast}\right)\right]
\nonumber
\\ &
\nonumber
\hspace{1.5cm}
+\mathbb{E}\left[\left\|\overline{D}_{\overline{H}}\left(\mathbf{z}(i)
-\overline{D}^T_{\overline{H}}\mathbf{1}_{N}
\otimes\mathbf{\theta}^{\ast}\right)\right\|^{2}\right]+
\\
&
\nonumber
\hspace{1.5cm}
+b^{2}\mathbb{E}\left[\left\|\mathbf{\Upsilon}(i)
+\mathbf{\Psi}(i)\right\|^{2}\right]
\end{align}
Since the random matrix $\widetilde{L}(i)$ takes values in a
finite set, there exists a constant $c_{2}>0$, such that, $\forall
\mathbf{x}\in\mathbb{R}^{NM}$,
\begin{align}
\nonumber
&
\left(\mathbf{x}-\mathbf{1}_{N}\otimes\mathbf{\theta}^{\ast}\right)^{T}
\left(b\widetilde{L}(i)\otimes I_{M}\right)^{2} \left(\mathbf{x}-\mathbf{1}_{N}\otimes\mathbf{\theta}^{\ast}\right)\leq
\\
\label{Reas:13}
&
\hspace{3.5cm}\leq
c_{2}\|\mathbf{x}-\mathbf{1}_{N}\otimes\mathbf{\theta}^{\ast}\|^{2}
\end{align}
Again, since $\left(b\overline{L}\otimes
I_{M}+D_{\overline{H}}\right)$ is positive definite, there exists a
constant $c_{3}>0$, such that, $\forall
\mathbf{x}\in\mathbb{R}^{NM}$,
\begin{align}
\nonumber
&
\left(\mathbf{x}-\mathbf{1}_{N}\otimes\mathbf{\theta}^{\ast}\right)^{T}\left[b\overline{L}\otimes
I_{M}+D_{\overline{H}}\right]\left(\mathbf{x}-\mathbf{1}_{N}\otimes\mathbf{\theta}^{\ast}\right)\geq
\\
\label{Reas:14}
&
\hspace{3.5cm}
\geq
c_{3}\|\mathbf{x}-\mathbf{1}_{N}\otimes\mathbf{\theta}^{\ast}\|^{2}.
\end{align}
We then have from \eqref{Reas:13}-\eqref{Reas:14}
\begin{align}
\nonumber
&
\mathbb{E}\left[\left(\mathbf{x}-\mathbf{1}_{N}\otimes\mathbf{\theta}^{\ast}\right)^{T}\left(b\widetilde{L}(i)\otimes
I_{M}\right)^{2}\left(\mathbf{x}-\mathbf{1}_{N}\otimes\mathbf{\theta}^{\ast}\right)\right]
 \leq
\\
\nonumber
&
\leq
\hspace{-.05cm}
\frac{c_{2}}{c_{3}}\left(\mathbf{x}-\mathbf{1}_{N}\otimes
\mathbf{\theta}^{\ast}\right)^{T}
\hspace{-.05cm}
\left[b\overline{L}\otimes
I_{M}+D_{\overline{H}}\right]
\hspace{-.05cm}
\left(\mathbf{x}-\mathbf{1}_{N}\otimes\mathbf{\theta}^{\ast}\right)\nonumber
\\
\label{Reas:15}
&
= c_{4}V\left(\mathbf{x}\right)
\end{align}
for some constant $c_{4}=\frac{c_{2}}{c_{3}}>0$. The term
\[
\mathbb{E}
\hspace{-.03cm}
\left[
\hspace{-.03cm}
\left\|\overline{D}_{\overline{H}}\mathbf{z}(i)
-D_{\overline{H}}\mathbf{1}_{N}\otimes\mathbf{\theta}^{\ast}
\right\|^{2}
\hspace{-.03cm}
\right]
\hspace{-.03cm}
+b^{2}\mathbb{E}\left[\left\|\mathbf{\Upsilon}(i)+\mathbf{\Psi}(i)\right\|^{2}\right]
\]
is bounded by a finite constant $c_{5}>0$, as it follows from
Assumptions~\textbf{(A.1)-(A.5)}. We then have from
\eqref{Reas:11}-\eqref{Reas:12}
\begin{align}
\label{Reas:16}
&\|R\left(\mathbf{x}\right)\|^{2}+\mathbb{E}\left[\left\|\Gamma\left(i+1,\mathbf{x},\omega\right)\right\|^{2}\right]
\leq
\\
&
\leq
-\frac{1}{2}\left(R\left(\mathbf{x}\right),V_{\mathbf{x}}\left(\mathbf{x}\right)\right)+
c_{4}V\left(\mathbf{x}\right)+c_{5}\leq
\nonumber
\\
\nonumber
&
\leq
c_{6}\left(1+V\left(\mathbf{x}\right)\right)-\frac{1}{2}\left(R\left(\mathbf{x}\right),V_{\mathbf{x}}\left(\mathbf{x}\right)\right)
\end{align}
where $c_{6}=\max\left(c_{4},c_{5}\right)>0$. This verifies Assumption~\textbf{(B.4)} of Theorem~\ref{RM}. Assumption~\textbf{(B.5)} is satisfied by the choice of $\left\{\alpha(i)\right\}_{i\geq 0}$. It then
follows that the process $\left\{\mathbf{x}(i)\right\}_{i\geq 0}$ converges
a.s. to $\mathbf{1}_{N}\otimes\mathbf{\theta}^{\ast}$. In other
words,
\begin{equation}
\label{Reas:17}
\mathbb{P}[\lim_{i\rightarrow\infty}\mathbf{x}_{n}(i)=\mathbf{\theta}^{\ast},~\forall
n]=1
\end{equation}
which establishes consistency of  $\mathcal{LU}$.
\end{proof}
The proof above can be modified to show $\mathcal{L}_{2}$ convergence of the sensor estimates to $\mathbf{\theta}^{\ast}$. Due to the fact that the $\mathcal{LU}$ update rule is linear, the driving noise terms are $\mathcal{L}_{2}$ bounded, and the stable (as shown in the proof) Lyapunov function $V(\cdot)$ assumes a positive definite quadratic form. Hence, by studying the recursion of the deterministic sequence $\{\mathbb{E}[V(\mathbf{x}(i))]\}$ and by similar arguments\footnote{Note, that Lemma 4 in~\cite{karmoura-quantized} does not assume the additional term due to new observations at each iteration. However, this does not pose difficulties as the observation weights are the same as the consensus weights.}  as in~\cite{karmoura-quantized} (Lemma 4), we conclude the following:
\begin{lemma}[Mean square convergence]
\label{L_2conv}
Let the hypotheses of Theorem~\ref{REas} hold and, in addition, the weight sequence $\{\alpha(i)\}$ satisfy the following:
\begin{equation}
\label{L_2conv1}
\alpha(i)=\frac{a}{(i+1)^{\tau}}
\end{equation}
where $a>0$ and $.5<\tau\leq 1$. Then, the a.s. convergence in Theorem~\ref{REas} holds in $\mathcal{L}_{2}$ also, i.e., for all $n$,
\begin{equation}
\label{L_2conv2}
\lim_{i\rightarrow\infty}\mathbb{E}\left[\left\|\mathbf{x}_{n}(i)-\mathbf{\theta}^{\ast}\right\|^{2}\right]=0
\end{equation}
\end{lemma}

\subsection{Asymptotic Variance: $\mathcal{LU}$}
\label{lbm-convrate} In this subsection, we carry out a
convergence rate analysis of the $\mathcal{LU}$ algorithm by
studying its moderate deviation characteristics. We summarize here
some definitions and terminology from the statistical literature,
used to characterize the performance of sequential estimation
procedures (see~\cite{Lehmann-estimation}).

\begin{definition}[Asymptotic Normality] A sequence of
estimates $\left\{\mathbf{x}^{\bullet}(i)\right\}_{i\geq 0}$ is asymptotically normal if for every $\mathbf{\theta}^{\ast}\in\mathcal{U}$, there exists a positive semidefinite matrix $S(\mathbf{\theta}^{\ast})\in\mathbb{R}^{M\times M}$, such that,
\begin{equation}
\label{def_normality}
\lim_{i\rightarrow\infty}\sqrt{i}\left(\mathbf{x}^{\bullet}(i)- \mathbf{\theta}^{\ast}\right)\Longrightarrow\mathcal{N} \left(\mathbf{0}_{M},S(\mathbf{\theta}^{\ast})\right)
\end{equation}
The matrix $S(\mathbf{\theta}^{\ast})$ is called the asymptotic
variance of the estimate sequence
$\left\{\mathbf{x}^{\bullet}(i)\right\}_{i\geq 0}$.
\end{definition}

In the following we prove the asymptotic normality of the
$\mathcal{LU}$ algorithm and explicitly characterize the resulting
asymptotic variance. To this end, define
\begin{align}
\nonumber
S_{H}&=\mathbb{E}\left[\left(\overline{D}_{\overline{H}}\left[
\begin{array}{lll}
                    \widetilde{H}_{1}(i) &  &  \\
                    \ddots & \ddots & \ddots \\
                    & & \widetilde{H}_{N}(i)
                   \end{array}
          \right]\mathbf{1}_{N}\mathbf{\theta}^{\ast}\right)
          \right.
\\
\label{asnor1}
&
\left.
          \left(\overline{D}_{\overline{H}}\left[
\begin{array}{lll}
                    \widetilde{H}_{1}(i) &  &  \\
                    \ddots & \ddots & \ddots \\
                    & & \widetilde{H}_{N}(i)
                   \end{array}
          \right]\mathbf{1}_{N}\mathbf{\theta}^{\ast}\right)^{T}\right]
\end{align}
Let $\lambda_{\min}\left(b\overline{L}\otimes I_{M}+D_{\overline{H}}\right)$ be the smallest eigenvalue of $\left[b\overline{L}\otimes
I_{M}+D_{\overline{H}}\right]$ and recall $S_{\mathbf{\zeta}},S_{q}$ in \eqref{obsnoise} and \eqref{dith5}.

We now state the main result of this subsection, establishing the
asymptotic normality of the $\mathcal{LU}$ algorithm.

\begin{theorem}[$\mathcal{LU}$: Asymptotic efficiency/ normality]
\label{asynorm} Let the $\mathcal{LU}$ algorithm under~\textbf{(A.1)-(A.5)} with link weight sequence, $\left\{\alpha(i)\right\}_{i\geq 0}$ that is given by:
\begin{equation}
\label{asynorm1} \alpha(i)=\frac{a}{i+1},\: \forall i
\end{equation}
for some constant $a>0$. Let $\left\{\mathbf{x}(i)\right\}_{i\geq 0}$ be the
state sequence generated. Then, if $a>\frac{1}{2\lambda_{\min}\left(b\overline{L} \otimes I_{M}+D_{\overline{H}}\right)}$, we have
\begin{equation}
\label{asynorm2}
\sqrt(i)\left(\mathbf{x}(i)-\mathbf{1}_{N}\otimes \mathbf{\theta}^{\ast}\right)\Longrightarrow \:\mathcal{N}(\mathbf{0},S(\mathbf{\theta}^{\ast}))
\end{equation}
where
\begin{align}
\label{asynorm3}
S(\mathbf{\theta}^{\ast})&=a^{2}\int_{0}^{\infty}e^{\Sigma
v}S_{0}e^{\Sigma v}dv,
\\
\label{asynorm3-b}
%
 \Sigma&=-a\left[b\overline{L}\otimes
I_{M}+D_{\overline{H}}\right]+\frac{1}{2}I,
\end{align}
and
%
\begin{align}
\label{asynorm3-c}
S_{0}&=S_{H}+\overline{D}_{\overline{H}}S_{\mathbf{\zeta}} \overline{D}^{T}_{\overline{H}}+b^{2}S_{q}
\end{align}
In particular, at any sensor~$n$, the estimate sequence,
$\left\{\mathbf{x}_{n}(i)\right\}_{i\geq 0}$ is asymptotically normal:
\begin{equation}
\label{asynorm6}
\sqrt(i)\left(\mathbf{x}_{n}(i)-\mathbf{\theta}^{\ast}\right) \Longrightarrow\:\mathcal{N}(\mathbf{0},S_{nn}(\mathbf{\theta}^{\ast}))
\end{equation}
where, $S_{nn}(\mathbf{\theta}^{\ast})\in\mathbb{R}^{M\times M}$
denotes the $n$-th principal block of $S(\mathbf{\theta}^{\ast})$.
\end{theorem}

\begin{proof}
The proof involves a step-by-step verification of
Assumptions~\textbf{(C.1)-(C.5)} of Theorem~\ref{RM} (Appendix~\ref{res_stoch_app}), since the
Assumptions~\textbf{(B.1)-(B.5)} are already shown to be satisfied
(see, Theorem~\ref{REas}.) Recall
$R\left(\mathbf{x}\right)$ and $\Gamma\left(i+1,\mathbf{x},\omega\right)$ from
Theorem~\ref{REas} (\eqref{Reas:5}-\eqref{Reas:6}).
From \eqref{Reas:5}, Assumption~\textbf{(C.1)} of Theorem~\ref{RM} is satisfied with
\begin{equation}
\label{asynorm9} B=-\left[b\overline{L}\otimes I_{M}+D_{\overline{H}}\right]
\end{equation}
and $\delta\left(\mathbf{x}\right)\equiv 0$. Assumption~\textbf{(C.2)} is satisfied by hypothesis, while the condition $a>\frac{1}{2\lambda_{\min} \left(b\overline{L}\otimes I_{M}+D_{\overline{H}}\right)}$ implies
\begin{align*}
\Sigma&=-a\left[b\overline{L}\otimes
I_{M}+D_{\overline{H}}\right]+\frac{1}{2}I_{NM}
=aB+\frac{1}{2}I_{NM}
\end{align*}
is stable, and hence Assumption~\textbf{(C.3)}. To verify
Assumption~\textbf{(C.4)}, we have from Assumption~\textbf{(A.4)}
\begin{align}
\label{asynorm11}
A\left(i,\mathbf{x}\right) & =
\mathbb{E}\left[\Gamma\left(i+1,\mathbf{x},\omega\right)
\Gamma^{T}\left(i+1,\mathbf{x},\omega\right)\right]
\\
\nonumber
&
=  b^{2}\mathbb{E}\left[\left(\widetilde{L}(i)\otimes I_{M}\right)\mathbf{x}\mathbf{x}^{T}\left(\widetilde{L}(i)\otimes I_{M}\right)^{T}\right]
\\
\nonumber
&
+
\mathbb{E}\left[\left(\overline{D}_{\overline{H}}\mathbf{z}(i)
-D_{\overline{H}}\mathbf{1}_{N}\otimes\mathbf{\theta}^{\ast}\right)
\right.
\\
\nonumber
&
\left.
\hspace{2cm}
\left(\overline{D}_{\overline{H}}\mathbf{z}(i)
-D_{\overline{H}}\mathbf{1}_{N}\otimes\mathbf{\theta}^{\ast}\right)^{T}
\right]
\\
\nonumber
&
+
b^{2}\mathbb{E}\left[\left(\mathbf{\Upsilon}(i)
+\mathbf{\Psi}(i)\right)\left(\mathbf{\Upsilon}(i)
+\mathbf{\Psi}(i)\right)^{T}\right]
\end{align}
From the i.i.d.~assumptions, we note that all the three terms on the R.H.S.~of \eqref{asynorm11} are independent of~$i$, and, in particular, the last two terms are constants. For the first term, we note that
\begin{equation}
\label{asynorm12}
\hspace{-.05cm}
\lim_{\mathbf{x}\rightarrow\mathbf{1}_{N}\otimes\mathbf{\theta}^{\ast}}
\hspace{-.2cm}
\mathbb{E}\left[\left(\widetilde{L}(i)\otimes
I_{M}\right)
\hspace{-.05cm}
\mathbf{x}\mathbf{x}^{T}
\hspace{-.1cm}
\left(\widetilde{L}(i)\otimes
I_{M}\right)^{T}\right]
\hspace{-.15cm}
= \mathbf{0}
\end{equation}
from the bounded convergence theorem, as the entries of
$\left\{\widetilde{L}(i)\right\}_{i\geq 0}$ are bounded and
\begin{equation}
\label{asynorm13} \left(\widetilde{L}(i)\otimes
I_{M}\right)\left(\mathbf{1}_{N}\otimes\mathbf{\theta}^{\ast}\right)=\mathbf{0}
\end{equation}
For the second term on the R.H.S.~of \eqref{asynorm11}, we
have
\begin{align}
\nonumber
&
\mathbb{E}
\hspace{-.05cm}
\left[
\hspace{-.05cm}
\left(\overline{D}_{\overline{H}}\mathbf{z}(i)
\hspace{-.05cm}
-
\hspace{-.05cm}
D_{\overline{H}}\mathbf{1}_{N}\otimes\mathbf{\theta}^{\ast}\right)
\hspace{-.1cm}
\left(\overline{D}_{\overline{H}}\mathbf{z}(i)
\hspace{-.05cm}
-
\hspace{-.05cm}
D_{\overline{H}}\mathbf{1}_{N}\otimes\mathbf{\theta}^{\ast}\right)^{T}\right]
\\
\nonumber
&
=
\mathbb{E}\left[\left(\overline{D}_{\overline{H}}\left[
\begin{array}{lll}
                    \widetilde{H}_{1}(i) &  &  \\
                    \ddots & \ddots & \ddots \\
                    & & \widetilde{H}_{N}(i)
                   \end{array}
          \right]\mathbf{1}_{N}\mathbf{\theta}^{\ast}\right)
\right.
\nonumber
\\
\nonumber
&
\hspace{1.25cm}
\left.\left(\overline{D}_{\overline{H}}\left[
\begin{array}{lll}
                    \widetilde{H}_{1}(i) &  &  \\
                    \ddots & \ddots & \ddots \\
                    & & \widetilde{H}_{N}(i)
                   \end{array}
          \right]\mathbf{1}_{N}\mathbf{\theta}^{\ast}\right)^{T}\right]
+
\\
\nonumber
&
\hspace{2cm}
+
          \mathbb{E}\left[\overline{D}_{\overline{H}}
          \mathbf{\zeta}\mathbf{\zeta}^{T}
          \overline{D}^{T}_{\overline{H}}\right]\nonumber
\\
\label{asynorm14}
&
= S_{H}+\overline{D}_{\overline{H}}S_{\mathbf{\zeta}}\overline{D}^{T}_{\overline{H}}
\end{align}
where the last step follows from
\eqref{asnor1},\eqref{obsnoise}. Finally, we note the third
term on the R.H.S. of \eqref{asynorm11} is $b^{2}S_{q}$, see~\eqref{dith5}. We thus have from
\eqref{asynorm11}-\eqref{asynorm14}
\begin{equation}
\label{asynorm15}
\lim_{i\rightarrow\infty,~\mathbf{x}\rightarrow\mathbf{x}^{\ast}}A\left(i,\mathbf{x}\right)
=
S_{H}+\overline{D}_{\overline{H}}S_{\mathbf{\zeta}}\overline{D}^{T}_{\overline{H}}+b^{2}S_{q}\nonumber
= S_{0}
\end{equation}
We now verify Assumption~\textbf{(C.5)}. Consider a fixed $\epsilon>0$. We note that \eqref{assc5.1} is a restatement of the uniform integrability of the random family, $\left\{\|\Gamma\left(i+1,\mathbf{x},\omega\right)\|^{2} \right\}_{i\geq 0, \,\|\mathbf{x}-\mathbf{\theta}^{\ast}\|<\epsilon}$. From
\eqref{Reas:6}, we have
\begin{align}
&\left\|\Gamma\left(i+1,\mathbf{x},\omega\right)\right\|^{2}
= \left\|b\left(\widetilde{L}(i)\otimes
I_{M}\right)\mathbf{x}-
\right.
\\
\nonumber
&
\left.
\hspace{.75cm}
-
\left(\overline{D}_{\overline{H}}\mathbf{z}(i)
-D_{\overline{H}}\mathbf{1}_{N}
\otimes\mathbf{\theta}^{\ast}\right)+b\mathbf{\Upsilon}(i)
+b\mathbf{\Psi}(i)\right\|^{2}\nonumber
\\
&
\label{asynorm16}
= \left\|b\left(\widetilde{L}(i)\otimes
I_{M}\right)\left(\mathbf{x}-\mathbf{\theta}^{\ast}\right)
-
\right.
\\
\nonumber
&
\left.
\hspace{.75cm}
-
\left(\overline{D}_{\overline{H}}\mathbf{z}(i)-D_{\overline{H}}\mathbf{1}_{N}
\otimes\mathbf{\theta}^{\ast}\right)+b\mathbf{\Upsilon}(i)
+b\mathbf{\Psi}(i)\right\|^{2}
\\
\nonumber
&
 \leq  9\left[\left\|\left(b\widetilde{L}(i)\otimes
I_{M}\right)\left(\mathbf{x}-\mathbf{\theta}^{\ast}\right)\right\|^{2}
+
\right.
\\
\nonumber
&
\left.
+
\left\|\overline{D}_{\overline{H}}\mathbf{z}(i)
-D_{\overline{H}}\mathbf{1}_{N}\otimes\mathbf{\theta}^{\ast}\right\|^{2}
+b^{2}\left\|\mathbf{\Upsilon}(i)+\mathbf{\Psi}(i)\right\|^{2}\right]
\end{align}
where we used~\eqref{asynorm13} and the inequality,
$\|\mathbf{y}_{1}+\mathbf{y}_{2}+\mathbf{y}_{3}\|^{2}\leq
9\left[\left\|\mathbf{y}_{1}\right\|^{2}
+\left\|\mathbf{y}_{2}\right\|^{2}+\left\|\mathbf{y}_{3}\right\|^{2}\right]$,
for vectors $\mathbf{y}_{1},\mathbf{y}_{2},\mathbf{y}_{3}$. From
\eqref{Reas:13} we note that, if
$\|\mathbf{x}-\mathbf{\theta}^{\ast}\|<\epsilon$,
\begin{equation}
\label{asynorm17} \left\|\left(b\widetilde{L}(i)\otimes
I_{M}\right)\left(\mathbf{x}-\mathbf{\theta}^{\ast}\right)\right\|^{2}\leq
c_{2}\epsilon^{2}
\end{equation}
From~(\ref{asynorm16}), the family [defined in~\eqref{asynorm18} below]
\[
\left\{\widetilde{\Gamma}\left(i+1,\mathbf{x},\omega\right)\right\}_{i\geq
0, \|\mathbf{x}-\mathbf{\theta}^{\ast}\|<\epsilon}
 \]
 dominates the family
\[
\left\{\|\Gamma\left(i+1,\mathbf{x},\omega\right)\|^{2}\right\}_{i\geq
0,~\|\mathbf{x}-\mathbf{\theta}^{\ast}\|<\epsilon},
\]
 where
\begin{align}
\nonumber
&
\widetilde{\Gamma}\left(i+1,\mathbf{x},\omega\right) =
9\left[c_{2}\epsilon^{2}+\left\|\overline{D}_{\overline{H}}
\mathbf{z}(i)-D_{\overline{H}}\mathbf{1}_{N}\otimes
\mathbf{\theta}^{\ast}\right\|^{2}
\right.
\\
\label{asynorm18}
&
\left.
\hspace{3.5cm}
+\,
b^{2}\left\|\mathbf{\Upsilon}(i)+\mathbf{\Psi}(i)\right\|^{2}\right]
\end{align}
The family
$\left\{\widetilde{\Gamma}\left(i+1,\mathbf{x},\omega\right)\right\}_{i\geq
0,~\|\mathbf{x}-\mathbf{\theta}^{\ast}\|<\epsilon}$ is i.i.d. and
hence uniformly integrable (see~\cite{Kallenberg}). Then the
family $\left\{\|\Gamma\left(i+1,\mathbf{x},\omega\right)\|^{2}\right\}_{i\geq
0,~\|\mathbf{x}-\mathbf{\theta}^{\ast}\|<\epsilon}$ is also
uniformly integrable since it is dominated by the uniformly
integrable family
$\left\{\widetilde{\Gamma}\left(i+1,\mathbf{x},\omega\right)\right\}_{i\geq
0,~\|\mathbf{x}-\mathbf{\theta}^{\ast}\|<\epsilon}$
(see~\cite{Kallenberg}). Thus~\textbf{(C.1)-(C.5)}
are verified and the theorem follows.
\end{proof}

\subsection{A Simulation Example}
\label{sim-ex}
Fig.~\ref{fig:LU}~(b) shows the performance  of $\mathcal{LU}$ for the network of $N=45$ sensors in Fig.~\ref{fig:LU}~(a), where the sensors are deployed randomly on a $25\times 25$ grid. The sensors communicate in a fixed radius and are further constrained to have a maximum of $6$ neighbors per node. The true parameter $\mathbf{\theta}^{\ast}\in\mathbb{R}^{45}$.  Each node is associated with a single component of $\mathbf{\theta}^{\ast}$, i.e., $\overline{H}_n=\mathbf{e}_n^T$, the unit vector of zeros, except entry~$n$ that is~1. For the experiment, each component of $\mathbf{\theta}^{\ast}$ is generated by an instantiation of a zero mean Gaussian random variable of variance 25. The parameter $\mathbf{\theta}^{\ast}$ represents the state of the field to be estimated. In this example, the field is  white, stationary, and hence each sample of the field has the same Gaussian distribution and is independent of the others. More generally, the components of $\mathbf{\theta}^{\ast}$ may correspond to random field samples, as dictated by the sensor deployment, that can possibly arise from the discretization of a field governed by a PDE. Each sensor observes the corresponding field component in additive Gaussian noise. For example, sensor 1 observes $z_{1}(t)=\theta^{\ast}_{1}+\zeta_{1}(t)$, where $\zeta_{1}(t)\sim\mathcal{N}(0,1)$. Clearly, such a model satisfies the distributed observability condition
\begin{equation}
\label{fig_LU1}
G=\sum_{n=1}^{N}\overline{H}_{n}^{T}\overline{H}_{n}=I_{45}=G^{-1}
\end{equation}
Fig.~\ref{fig:LU}(b) shows the normalized error at every sensor plotted against the iteration index $i$ for an instantiation of the algorithm. The normalized error for the $n$-th sensor at time $i$ is given by the quantity $\left\|\mathbf{x}_{n}(i)-\mathbf{\theta}^{\ast}\right\|/45$, i.e., the estimation error normalized by the dimension of $\mathbf{\theta}^{\ast}$. We note that the errors converge to zero as established by the theoretical findings. The decrease is rapid at the beginning and slows down at $i$ increases. This is a standard property of stochastic approximation based algorithms, consequence of the decreasing weight sequence $\alpha(i)$ required for convergence.
\begin{figure*}
\centering %
\subfigure[][]{\includegraphics[height=2.1in, width=2.5in ]{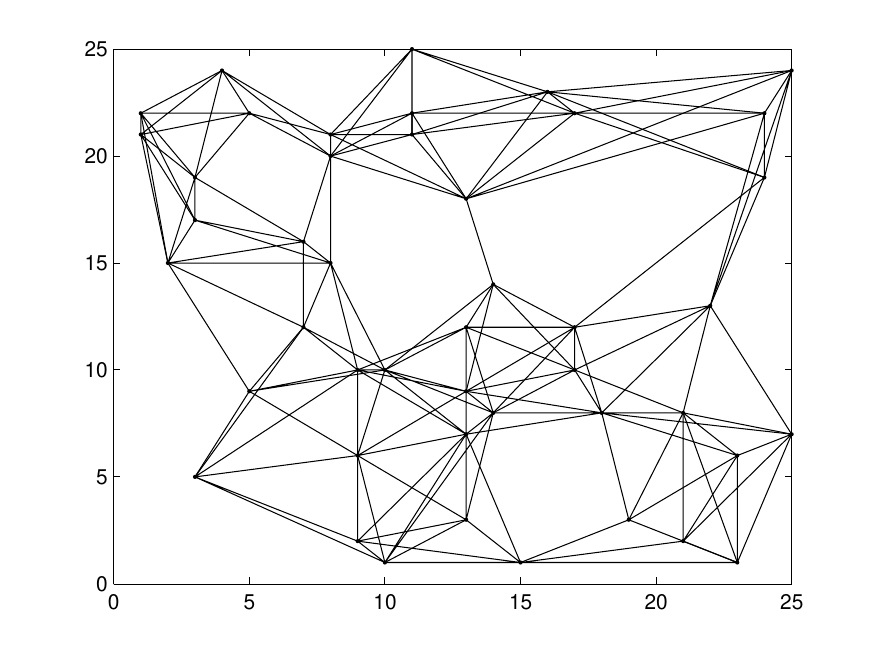}}
\subfigure[][]{\includegraphics[height=2.1in, width=2.5in ]{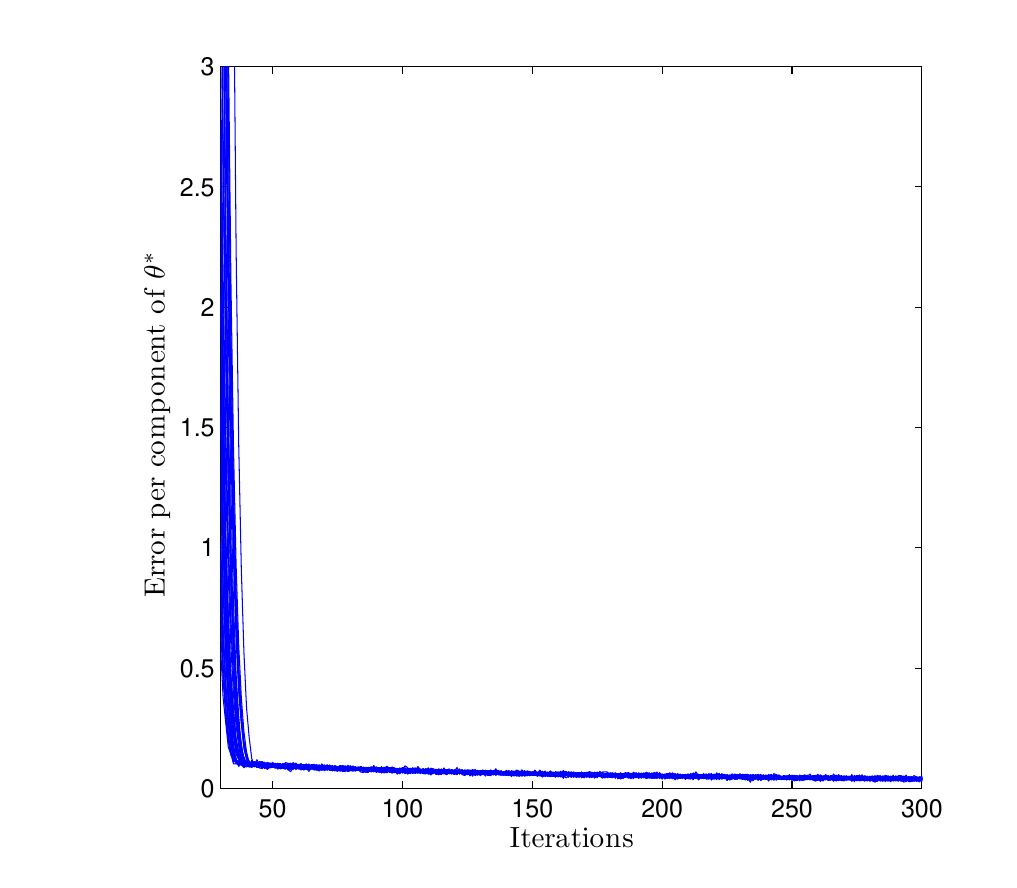}}
\caption{Illustration of distributed linear parameter estimation. (a)
Example network deployment of 45 nodes.  (b) Convergence of normalized estimation error
at each sensor.}
\label{fig:LU}
\end{figure*}
 From the plots, although the individual sensors are low rank observations of the true parameter, by collaborating, each sensor reconstructs the true parameter value, as desired.
\subsection{An Example}
\label{lbm-numstud} From Theorem~\ref{asynorm} and
\eqref{asnor1}, we note that the asymptotic variance is
independent of $\mathbf{\theta}^{\ast}$, if the observation
matrices are non-random. In that case, it is possible to optimize
(minimize) the asymptotic variance over the weights $a$ and $b$.
In the following, we study a special case permitting explicit
computations and that leads to interesting results.
Consider a scalar parameter $(M=1)$ and let each sensor $n$ have
the same i.i.d. observation model,
\begin{equation}
\label{numstud1} z_{n}(i)=h\theta^{\ast}+\zeta_{n}(i)
\end{equation}
where $h\neq 0$ and $\{\zeta_{n}(i)\}_{i\geq 0,~1\leq n\leq N}$ is
a family of independent zero mean Gaussian random variables with
variance $\sigma^{2}$. In addition, assume unquantized
inter-sensor exchanges. We define the average asymptotic variance
per sensor attained by the algorithm $\mathcal{LU}$ as
\begin{equation}
\label{numstud2}
S_{\mathcal{LU}}=\frac{1}{N}\mbox{Tr}\left(S\right)
\end{equation}
where~$S$ is given by \eqref{asynorm3} in Theorem~\ref{asynorm}. From Theorem~\ref{asynorm}, we have $S_{0}=\sigma^{2}h^{2}I_{N}$ and, hence, from~\eqref{asynorm3}
\begin{align}
\label{numstud3} S_{\mathcal{LU}} & =
\frac{a^{2}\sigma^{2}h^{2}}{N}\mbox{Tr}\left(\int_{0}^{\infty}e^{2\Sigma
v}dv\right)
\\
\nonumber
&
=
\frac{a^{2}\sigma^{2}h^{2}}{N}\int_{0}^{\infty}\mbox{Tr}\left(e^{2\Sigma
v}\right)dv
\end{align}
From \eqref{asynorm3-b} the eigenvalues of $2\Sigma v$ are
$\left[-2ab\lambda_{n}(\overline{L})-\left(2ah^{2}-1\right)\right]v$
for $1\leq n\leq N$ and we have
\begin{align}
\nonumber
S_{\mathcal{LU}}
&
=
\frac{a^{2}\sigma^{2}h^{2}}{N}\sum_{n=1}^{N}\int_{0}^{\infty}
e^{\left[-2ab\lambda_{n}(\overline{L})-\left(2ah^{2}-1\right)\right]v}dv
\\
\nonumber
&
=
\frac{a^{2}\sigma^{2}h^{2}}{N}\sum_{n=1}^{N}\frac{1}{2ab\lambda_{n}(\overline{L})+\left(2ah^{2}-1\right)}\nonumber
\\
\nonumber
&
= \frac{a^{2}\sigma^{2}h^{2}}{N\left(2ah^{2}-1\right)}
+\frac{a^{2}\sigma^{2}h^{2}}{N}
\\
\label{numstud4}
&
\hspace{1cm}
\sum_{n=2}^{N}\frac{1}{2ab\lambda_{n}(\overline{L})+\left(2ah^{2}-1\right)}
\end{align}
In this case, the constraint
$a>\frac{1}{2\lambda_{\min}(b\overline{L}\otimes
I_{M}+D_{\overline{H}})}$ in Theorem~\ref{asynorm} reduces to
$a>\frac{1}{2h^{2}}$, and hence the problem of optimum $a,b$
design to minimize $S_{\mathcal{LU}}$ is given by
\begin{equation}
\label{numstud5}
S_{\mathcal{LU}}^{\ast}=\inf_{a>\frac{1}{2h^{2}},~b>0}S_{\mathcal{LU}}
\end{equation}
It is to be noted, that the first term on the last step of
\eqref{numstud4} is minimized at $a=\frac{1}{h^{2}}$ and the
second term (always non-negative under the constraint) goes to
zero as $b\rightarrow\infty$ for any fixed $a>0$. Hence, we have
\begin{equation}
\label{numstud6} S_{\mathcal{LU}}^{\ast}=\frac{\sigma^{2}}{Nh^{2}}
\end{equation}
The above shows that, by setting $a=\frac{1}{h^{2}}$ and $b$
sufficiently large in $\mathcal{LU}$,
$S_{\mathcal{LU}}$ is arbitrarily close to $S_{\mathcal{LU}}^{\ast}$.

We compare this optimum achievable asymptotic variance per sensor,
$S_{\mathcal{LU}}^{\ast}$, attained by
$\mathcal{LU}$ to that attained by a centralized scheme.
In the centralized scheme, there is a central estimator, which
receives measurements from all the sensors and computes an
estimate based on all measurements. In this case, the sample mean
estimator is an efficient estimator (in the sense of Cram\'er-Rao)
and the estimate sequence $\{x_{c}(i)\}_{i\geq 0}$ is given by
\begin{equation}
\label{numstud9} x_{c}(i)=\frac{1}{Nih}\sum_{n,i}z_{n}(i)
\end{equation}
and we have
\begin{equation}
\label{numstud10}
\sqrt{i}\left(x_{c}(i)-\theta^{\ast}\right)\,\sim\, \left(0,\mathcal{S}_{c} \right)
\end{equation}
where, $S_{c}$ is the variance (which is also the one-step Fisher information in this case, see,~\cite{Lehmann-estimation}) and is given by
\begin{equation}
\label{numstud7} S_{c}=\frac{\sigma^{2}}{Nh^{2}}
\end{equation}
From \eqref{numstud6} we note that,
\begin{equation}
\label{numstud8} S_{\mathcal{LU}}^{\ast}=S_{c}
\end{equation}
Thus the average asymptotic variance attainable by the distributed algorithm $\mathcal{LU}$ is the same as that of the optimum (in the sense of Cram\'er-Rao) centralized estimator having access to all information simultaneously. This is an interesting result, as it holds irrespective of the network topology. In particular, however sparse the inter-sensor communication graph is, the
optimum achievable asymptotic variance is the same as that of the
centralized efficient estimator. Note that weak convergence itself is a limiting result, and, hence, the rate of convergence in \eqref{asynorm2} in Theorem~\ref{asynorm} will, in general, depend on the network topology.

\subsection{Some generalizations}
\label{gen-LU} We discuss some generalizations of the basic $\mathcal{LU}$ scheme before proceeding to the nonlinear observation models addressed in the subsequent sections. We start by revisiting the scalar example in Section~\ref{lbm-numstud} for which the distributed $\mathcal{LU}$ is shown to achieve the performance of the optimal centralized estimator. Interestingly, the above example is not an isolated special case and has several important implications. The observation that by increasing $b>0$ we can achieve asymptotic variance as close as desired to the centralized estimator hints to a more general time-scale separation in the case of unquantized transmissions. Intuitively, for a fixed $b>0$, the weight associated to the consensus potential is $b\alpha(i)$, which goes to zero at the same rate as that of the innovation potential. Hence, in the long run, a non-negligible (in the scale $\{\alpha(i)\}$) amount of time is required to disseminate new information acquired by a sensor. In other words, the rate of uncertainty reduction in $\mathcal{LU}$ depends on both the rate of new information acquisition at the sensors and the rate of information dissemination in the network. On the contrary, in a centralized scenario, no additional time is incurred for information dissemination and the rate of uncertainty reduction is the same as the rate of information acquisition. This is manifested, in general, in the asymptotic variance of $\mathcal{LU}$, which is larger than its centralized counterpart due to the additional overhead of the mixing terms (Theorem~\ref{asynorm}). This suggests that, if the mixing can be carried out at a \emph{faster} scale, the additional overhead due to the mixing time will not be observed at the time scale of observation acquisition and, in effect, the distributed scheme will lead to similar asymptotic variance as in the centralized setting. This is noted in Section~\ref{lbm-numstud}, where increasing the relative weight $b$ of the consensus or mixing potential leads to a time scale separation between information dissemination and acquisition. Increasing $b$ beyond bounds suggests that we replace the decreasing weight sequence $\{\alpha(i)\}$ from the consensus term and retain it with a constant weight, or more generally, a weight sequence $\{\beta(i)\}$, that asymptotically dominates $\{\alpha(i)\}$. Such a mixed time scale extension of the $\mathcal{LU}$ is introduced and analyzed in~\cite{KM-JSTSP}. The results in~\cite{KM-JSTSP} show that the conclusion in Section~\ref{lbm-numstud} for the scalar example (the distributed achieves the centralized performance in terms of asymptotic variance) holds in more general vector parameter settings by appropriately tuning the consensus and innovation weights. This is significant, as it justifies the applicability of distributed estimation schemes over centralized approaches.

The development in this paper assumes stationarity of the sensor observations over time. While this is applicable and is a commonly used assumption in many statistical models, some scenarios inherently lead to non-stationary observation time series. For example, consider a distributed sensor network monitoring a target that fades over time. In this example, the sensor observation models are no longer stationary as the SNR (signal to noise ratio) decays over time, the decay rate being a function of the fading characteristics. Treating nonstationarity requires modification of the algorithm (intuitively, the update rules are no longer stationary) and is pursued in~\cite{KM-JSTSP}.

The above did not exploit the physical significance of the parameter $\mathbf{\theta}$. That $\mathbf{\theta}$ may itself come from a spatially distributed random field was only implicit in the distributed observation model. Typical examples include instrumenting a spatially distributed random field (say a temperature surface) with a sensor network. Another example is of cyberphysical systems, where a network of physical entities equipped with sensors are deployed over a large geographical region. A well known example in this setting is the power grid, a large distributed network of generators and loads. Our results imply that, under appropriate observability conditions, the physical field $\mathbf{\theta}$ may be reconstructed completely at each node (sensor).\footnote{A node, in this context, refers to the physical entity at a geographical location, for example, a generator in a power grid. The sensing or measurement unit associated to a node is referred to as a sensor. The state of a node represents the field intensity at that point, for example, the phase of a generator.} However, for such systems, the parameter $\mathbf{\theta}$ representing the physical field is quite large dimensional, may be of the order of $10^{3}$ or more, as exemplified by the power grid.\footnote{For problems involving infinite dimensional systems, such as the temperature distribution over a domain in the Euclidean space, any reasonable discretization would lead to a large dimensional $\mathbf{\theta}$.} It is then impractical and unnecessary to reconstruct the high dimensional parameter in its entirety at each node. On the other hand, the node may be interested only in its state, or those of its close neighbors. In general, the observation at each sensor reflects the coupling of a few local physical states and hence, acting alone, a node may not be able to recover its state uniquely. In~\cite{Thesis-Kar}, we develop approaches to address this problem, where each node wants to reconstruct a few components\footnote{These components may vary from node to node.} of the large state vector. The estimation approach would lead to low dimensional data exchanges between neighboring sensors (nodes) and local estimate updates would involve only those components, the node wants to reconstruct. Due to the partial information exchange between sensors and the fact that sensors may have different goals, the distributed observability no longer culminates to the sum of network connectivity and global connectivity, but requires more subtle relations between the observation model and the network topology. In general, the scope of such problems of distributed estimation with partial inter-sensor information exchange is quite broad and challenging, and we refer the reader to~\cite{Thesis-Kar} (Chapter 5) for an exposition.

\section{Nonlinear Observation Models: Agorithm $\mathcal{NU}$}
\label{nonlin} The previous section developed the algorithm $\mathcal{LU}$ for distributed parameter estimation when the observation model is linear. In this section, we extend the previous development to accommodate more general classes of nonlinear observation models. We comment briefly on the organization of this section. In Section~\ref{nlprobform}, we introduce notation and setup the problem, and in Section~\ref{nlalg1} we present the $\mathcal{NU}$ algorithm for distributed parameter estimation for nonlinear observation models and establish conditions for its consistency.

%

\subsection{Nonlinear Observation Models}
\label{nonlin-obs-mod} Similar
to Section~\ref{lbm}, let
$\mathbf{\theta}^{\ast}\in\mathcal{U}\subset\mathbb{R}^{M}$ be the true but unknown parameter value. We assume that the domain $\mathcal{U}$ is an open set in $\mathbb{R}^{M}$. In the general case,
 the observation model at each sensor $n$ consists
of an i.i.d. sequence $\left\{\mathbf{z}_{n}(i)\right\}_{i\geq 0}$ in
$\mathbb{R}^{M_{N}}$ with
\begin{equation}
\label{nlprobform1}
\mathbb{P}_{\mathbf{\theta}^{\ast}}[\mathbf{z}_{n}(i)\in\mathcal{D}]=\int_{\mathcal{D}}dF_{n,\mathbf{\theta}^{\ast}},~~\forall~\mathcal{D}\in\mathbb{B}^{M_{N}}
\end{equation}
where $F_{n,\mathbf{\theta}^{\ast}}$ denotes the distribution function of the random vector $\mathbf{z}_{n}(i)$. For consistent parameter estimates, even in centralized settings, some form of observability needs to be imposed on the nonlinear model. In the following, we assume that the distributed observation model is \emph{separably estimable}, a notion which we introduce now.
\begin{definition}[Separably Estimable]\label{sep_est_def} Let $\left\{\mathbf{z}_{n}(i)\right\}_{i\geq 0}$ be the i.i.d. observation
sequence at sensor~$n$, where $1\leq n\leq N$. We call the parameter estimation problem to be separably estimable, if there exist functions $g_{n}(\cdot):\mathbb{R}^{M_{N}}\longmapsto\mathbb{R}^{\overline{M}},\, \forall 1\leq n\leq N$, such that the function $h(\cdot):\mathcal{U}\longmapsto\mathbb{R}^{\overline{M}}$ is continuous and invertible on $\mathcal{U}$
\begin{equation}
\label{nlprobform2}
h(\mathbf{\theta})=\frac{1}{N}\sum_{n=1}^{N}
\mathbb{E}_{\mathbf{\theta}}\left[g_{n}(\mathbf{z}_{n}(i))\right]
\end{equation}
\end{definition}
\begin{remark}
\label{rem-sepest}
Before providing examples of separably estimable observation models and demonstrating the applicability of the notion, we comment on the definition.
\begin{itemize}
\item[(i)] We note that the factor $\frac{1}{N}$ in
\eqref{nlprobform2} is just for notational convenience, as
will be seen later. In fact, the $\frac{1}{N}$ can be absorbed by redefining the functions $g_{n}(\cdot)$. Also, it is implicitly assumed that the random vectors $g_{n}(\mathbf{z}_{n}(i))$ are integrable w.r.t. the measures $\mathbb{P}_{\mathbf{\theta}}$ for $\mathbf{\theta}\in\mathcal{U}$.
\item[(ii)] Let $h(\mathcal{U})\subset\mathbb{R}^{\overline{M}}$ denote the range of $h(\cdot)$. The continuity of $h(\cdot)$ implies that $h(\mathcal{U})$ is open. Let $h^{-1}:h(\mathcal{U})\longmapsto\mathbb{R}^{M}$ denote the inverse of $h(\cdot)$ (which is necessarily continuous on $h(\mathcal{U})$.) It then follows that $h^{-1}(\cdot)$ has a measurable extension defined over all of $\mathbb{R}^{\overline{M}}$. In the following we will assume that $h^{-1}(\cdot)$ has been measurably extended and, by abusing notation, denote this extension by $h^{-1}$.
\item[(iii)] We will show that the notion of separably estimable models introduced above is, in fact, necessary and sufficient to guarantee the existence of consistent distributed estimation procedures for a wide range of practical scenarios. This condition may also be viewed as a natural generalization of the observability constraint of Assumption~\textbf{(A.5)} in the linear model. Indeed, if, assuming the linear model, we define $g_{n}(\mathbf{z}_{n}(i))=\overline{H}_{n}^{T}\mathbf{z}_{n}(i), \,\forall 1\leq n\leq N$ in \eqref{nlprobform2}, we have $h(\mathbf{\theta})=G\mathbf{\theta}$, where $G$ is defined in \eqref{G}. Then, invertibility of~(\ref{nlprobform2}) is equivalent to Assumption~\textbf{(A.5)}, i.e., to invertibility of~$G$; hence, the linear model is an example of a separably estimable problem. Note that, if an observation model is separably estimable, then the choice of functions $g_{n}(\cdot)$ is not unique. Indeed, given a separably estimable model, it is important to figure out an appropriate decomposition, as in \eqref{nlprobform2}, because the convergence properties of the algorithms (Algorithm $\mathcal{NU}$, Section~\ref{nlalg1}) to be studied are intimately related to the behavior of these functions. Finally, we note that, in general, $\overline{M}\neq M$, and the dimension $\overline{M}$ of the range space of $h(\cdot)$ is very much linked to the memory and transmission requirements of the distributed algorithm $\mathcal{NLU}$ to be studied in Section~\ref{nlalg2}. In this sense, the function $h(\cdot)$ plays the role of a complete sufficient statistic as used in classical (centralized) estimation, the major difference being the \emph{distributed computability} (to be made precise later) of $h(\cdot)$ in the current setting.
\end{itemize}
\end{remark}

In Sections~\ref{nlalg1} and~\ref{nlalg2}, respectively, we will present algorithms $\mathcal{NU}$ and $\mathcal{NLU}$ for distributed parameter estimation in separably estimable models. While the $\mathcal{NLU}$ provides consistent parameter estimates for all separably estimable models, the $\mathcal{NU}$ requires further (mainly of the Lipschitz type) conditions on the functions $g_{n}(\cdot)$ and $h(\cdot)$. However, in cases where the $\mathcal{NU}$ is applicable, it automatically leads to convergence rate guarantees in the context of asymptotic normality. These differences are further clarified in Section~\ref{conclusion}. Before discussing these algorithms in detail, we provide examples of separably estimably models in the following.

\textbf{Examples: Signal in additive noise models}

We now demonstrate an important and large class of distributed observation models possessing the separably estimable property, thus justifying the generality and applicability of the notion.

A wide range of observation models are of the signal in additive noise type. In particular, for each $n$, denote by $\{\mathbf{\zeta}_{n}(i)\}$ the zero mean i.i.d. observation noise at the $n$-th sensor of arbitrary distribution (the distribution may vary from sensor to sensor.) The sensor observation model is said to be of \emph{signal in additive noise type}, if the observation sequence $\{\mathbf{z}_{n}(i)\}$ at the $n$-th sensor is of the form:
\begin{equation}
\label{sigaddnoise}
\mathbf{z}_{n}(i)=f_{n}(\mathbf{\theta}^{\ast})+\mathbf{\zeta}_{n}(i)
\end{equation}
Here $f_{n}:\mathcal{U}\longmapsto\mathbb{R}^{M_{n}}$ denotes the transformed (nonlinearly) signal (or parameter) observed at sensor $n$, further corrupted by additive noise.
The following simple proposition characterizes the subclass of signal in additive noise observable models with the separably estimable property:
\begin{proposition}
\label{prop-sigadd}
Let $f:\mathcal{U}\longmapsto\mathbb{R}^{\sum_{n=1}^{N}M_{n}}$ be defined by, $f(\mathbf{\theta})=[f_{1}^{T}(\mathbf{\theta})\cdots f_{N}^{T}(\mathbf{\theta})]^{T}$. Then, the above signal in additive noise observation model (see~\eqref{sigaddnoise}) is separably estimable if $f(\cdot)$ is continuous and invertible on $\mathcal{U}$.
\end{proposition}
Before providing the rather straightforward proof, we note the consequences of Proposition~\ref{prop-sigadd}. Consider a hypothetical centralized estimator having access to all the sensor observations at all times. Clearly, the $\sum_{n=1}^{N}M_{n}$ dimensional i.i.d. observation sequence $\{\mathbf{z}(i)\}$ at such a center is given by:
\begin{equation}
\label{prop-sigadd1}
\mathbf{z}(i)=f(\mathbf{\theta}^{\ast})+\mathbf{\zeta}(i)
\end{equation}
In general, for arbitrary statistics of the noise sequence $\{\mathbf{\zeta}(i)\}$, it is necessary that the function $f$ be invertible, for the center to yield a consistent estimate of the parameter. In fact, for consistent centralized estimates, the invertibility of $f(\cdot)$ is required, even when the observation noise is identically zero. On the other hand, Proposition~\ref{prop-sigadd} asserts that the invertibility of $f(\cdot)$ (and its continuity) is sufficient to guarantee that the model is separably estimable and hence the existence of consistent distributed estimation schemes. Hence, at least in the class of widely adopted signal in additive noise models, centralized observability is equivalent to distributed observability (formulated here in terms of separable estimability.) This further justifies the notion of separable estimability as a reasonable generalization of the concept of centralized observability to distributed nonlinear settings. In Section~\ref{NLU-power}, we will show that the $\mathcal{NLU}$ algorithm provides a completely distributed approach to the static phase estimation problem in power grids of generators and loads based on line flow measurements, an important practical example of a distributed nonlinear signal in additive noise model.
\begin{proof}
The proof follows in a straightforward manner from the definition. For each $n$, define the function $g_{n}:\mathbb{R}^{M_{n}}\longmapsto\mathbb{R}^{\sum_{n=1}^{N}M_{n}}$ by
\begin{equation}
\label{prop-sigadd2}
g_{n}(\mathbf{y})=[\mathbf{0}_{M_{1}}^{T} \mathbf{0}_{M_{2}}^{T}\cdots\mathbf{y}^{T}\cdots\mathbf{0}_{M_{N}}^{T}]^{T},~~\forall\mathbf{y}\in\mathbb{R}^{M_{n}}
\end{equation}
Recall $\mathbf{0}_{M_{1}}\in\mathbb{R}^{M_{1}}$ denotes the column vector of $M_{1}$ zeros and so on. By the independence of the noise sequence $\{\mathbf{\zeta}(i)\}$, it the follows that
\begin{equation}
\label{prop-sigadd3}
\sum_{n=1}^{N}\mathbb{E}_{\mathbf{\theta}^{\ast}}\left[g_{n}(\mathbf{z}_{n}(i))\right]=f(\mathbf{\theta}^{\ast})
\end{equation}
The continuity and invertibility of $f(\cdot)$ then establishes the separable estimability of the model (Definition~\ref{sep_est_def}) by the correspondence $h(\cdot)=\frac{1}{N}f(\cdot)$.
\end{proof}
By using the same arguments we demonstrate a larger class of separably estimable models as follows:
\begin{proposition}
\label{prop-sigadd4}
Let the observation sequence $\{\mathbf{z}_{n}(i)\}$ at the $n$-th sensor be of the form:
\begin{equation}
\label{prop-sigadd5}
\mathbf{z}_{n}(i)=f_{n}(\mathbf{\theta}^{\ast},\mathbf{\zeta}^{1}_{n}(i))+\mathbf{\zeta}_{n}^{2}(i)
\end{equation}
where $f_{n}:\mathcal{U}\times\mathbb{R}^{M^{1}_{n}}\longmapsto\mathbb{R}^{M_{n}}$, $\{\mathbf{\zeta}^{1}_{n}(i),\mathbf{\zeta}^{2}_{n}(i)\}\in\mathbb{R}^{M^{1}_{n}}\times\mathbb{R}^{M_{n}}$ is a temporally i.i.d. sequence and $\{\mathbf{\zeta}^{2}_{n}(i)\}$ is zero mean. Assuming that the moments exist, define the function $\overline{f}_{n}:\mathcal{U}\longmapsto\mathbb{R}^{M_{n}}$, for each $n$, by
\begin{equation}
\label{prop-sigadd6}
\overline{f}_{n}(\mathbf{\theta})=\mathbb{E}_{\mathbf{\theta}}\left[f_{n}(\mathbf{\theta}^{\ast},\mathbf{\zeta}^{1}_{n}(i))\right]
\end{equation}
Further, let $\overline{f}:\mathcal{U}\longmapsto\mathbb{R}^{\sum_{n=1}^{N}M_{n}}$ be defined by, $\overline{f}(\mathbf{\theta})=[\overline{f}_{1}^{T}(\mathbf{\theta})\cdots \overline{f}_{N}^{T}(\mathbf{\theta})]^{T}$. Then, the observation model in~\eqref{prop-sigadd5} is separably estimable if $\overline{f}(\cdot)$ is continuous and invertible on $\mathcal{U}$.
\end{proposition}
\begin{remark} The generic model considered in Proposition~\ref{prop-sigadd4} subsumes the class of signals with multiplicative noise models, by suitably defining the functions $f_{n}(\mathbf{\theta}^{\ast},\mathbf{\zeta}^{1}_{n}(i))$ and setting the additive noise component $\mathbf{\zeta}_{n}^{2}(i)$ to zero. We also note that a general guideline for choosing the functions $g_{n}(\cdot)$ for the signal in additive noise type models based on problem data is given in~\eqref{prop-sigadd2}. From a similar line of reasoning, it follows that the same choice of $g_{n}(\cdot)$ works for the larger class of separably estimable models considered in Proposition~\ref{prop-sigadd4}.
\end{remark}

In the following subsection, we present the algorithm $\mathcal{NU}$ for distributed parameter estimation in nonlinear separably estimable observation models.

\subsection{Algorithm $\mathcal{NU}$ and Assumptions}
\label{nlprobform} Before introducing the algorithm, we formally state the generic observation
and communication assumptions required by the $\mathcal{NU}$.

\begin{itemize}[\setlabelwidth{D.1)}]

\item{\textbf{(D.1)}}\textbf{Separably Estimable Model}: The nonlinear observation model~\eqref{nlprobform1} is separably estimable (Definition~\ref{sep_est_def}).

In particular, at iteration~$i$,  the observations across different sensors need not be independent. In other words, we allow spatial correlation, but require temporal independence. Also, other than the structural assumption of \emph{separable estimability}, no assumptions are required on the noise statistics, in particular, its distribution.
\item{\textbf{(D.2)}}\textbf{Random Link Failure, Quantized
Communication}: The random link failure model is the model given in Section~\ref{notgraph}; similarly, we assume quantized inter-sensor communication with subtractive dithering.
\item{\textbf{(D.3)}}\textbf{Independence and Moment Assumptions}: The sequences $\left\{L(i)\right\}_{i\geq 0}$,$\left\{\mathbf{z}_{n}(i) \right\}_{1\leq n\leq N,~i\geq 0}$,$\left\{\nu_{nl}^{m}(i)\right\}$ (dither sequence, as in \eqref{lbm-probform}) are mutually independent. Let $\overline{M}=\sum_{n=1}^N M_n$ and define $h_{n}:\mathbb{R}^{M}\longmapsto \mathbb{R}^{\overline{M}}$, by
\begin{equation}
\label{nlprobform3}
h_{n}(\mathbf{\theta})=
\mathbb{E}_{\mathbf{\theta}}\left[g_{n}(\mathbf{z}_{n}(i))\right],\:\:\forall
1\leq n\leq N
\end{equation}
We make the assumption $\forall\mathbf{\theta}\in\mathcal{U}$:
\begin{equation}
\label{nlprobform4}
\mathbb{E}_{\mathbf{\theta}}\left[\left\|\frac{1}{N}
\sum_{n=1}^{N}g_{n}(\mathbf{z}_{n}(i))-h(\mathbf{\theta})\right\|^{2}\right]
\hspace{-.075cm}
=
\hspace{-.075cm}
\eta(\mathbf{\theta})<\infty,
\end{equation}
We thus assume the existence of quadratic moments of the (transformed) random variables $g_{n}(\mathbf{z}_{n}(i))$. For example, under the reasonable hypotheses of Propositions~\ref{prop-sigadd}-\ref{prop-sigadd4}, the functions $g_{n}(\cdot)$ may be taken to be linear, and Assumption~\textbf{(D.3)} then coincides with the existence of quadratic moment of the observations $\mathbf{z}_{n}(i)$. In general, since the choice of the functions $g_{n}(\cdot)$ for a separably estimable model is not unique, the moment Assumption~\textbf{(D.3)} may enter as a selection criterion of the transformations $g_{n}(\cdot)$.
\end{itemize}
In Section~\ref{nlalg1} and Section~\ref{nlalg2}, we give two algorithms, $\mathcal{NU}$ and $\mathcal{NLU}$, respectively, for the distributed estimation problem \textbf{(D.1)-(D.3)} and provide conditions for consistency and other properties of the estimates.

\subsection{Algorithm $\mathcal{NU}$}
\label{nlalg1} In this subsection, we present the algorithm~$\mathcal{NU}$ for distributed parameter estimation in separably estimable models under Assumptions~\textbf{(D.1)-(D.3)}.

\textbf{Algorithm $\mathcal{NU}$}: Each sensor $n$ performs the
following estimate update:
\begin{align}
\label{nalg1}
&
\mathbf{x}_{n}(i+1)=\mathbf{x}_{n}(i)-\alpha(i)
\left[\sum_{l\in\Omega_{n}(i)}\beta
\left(\mathbf{x}_{n}(i)-
\right.
\right.
\\
\nonumber
&
\hspace{-1cm}
\left.
\phantom{\sum_{l\in\Omega_{n}(i)}}
\left.
\mathbf{q}(\mathbf{x}_{l}(i)
+\nu_{nl}(i))\right)+\mathcal{K}_{n}\left(h_{n}(\mathbf{x}_{n}(i))
-g_{n}(\mathbf{z}_{n}(i))\right)\right]
\end{align}
based on $\mathbf{x}_{n}(i)$, $\left\{\mathbf{q}(\mathbf{x}_{l}(i)
+\nu_{nl}(i))\right\}_{l\in\Omega_{n}(i)}$, and $\mathbf{z}_{n}(i)$, which are all available to it at time $i$.  The sequence, $\left\{\mathbf{x}_{n}(i)\in\mathbb{R}^{M}\right\}_{i\geq 0}$, is the estimate (state) sequence generated at sensor $n$. The weight
sequence $\left\{\alpha(i)\right\}_{i\geq 0}$ satisfies the persistence
condition of Assumption~\textbf{(B.5)} and $\beta>0$ is chosen to
be an appropriate constant. Finally, $\mathcal{K}_{n}\in\mathbb{R}^{M\times\overline{M}}$ is an appropriately chosen matrix gain, possibly varying from sensor to sensor. Similar to \eqref{algRE1-b} the
above update can be written in compact form as
\begin{align}
\label{nalg2}
&
\mathbf{x}(i+1)=\mathbf{x}(i)-\alpha(i)\left[\beta(L(i)\otimes
I_{M})\mathbf{x}(i)+
\right.
\\
\nonumber
&
\hspace{.95cm}
\left.
\phantom{I_{M}}
+
\mathcal{K}\left(M(\mathbf{x}(i))-J(\mathbf{z}(i))\right)+\mathbf{\Upsilon}(i)+\mathbf{\Psi}(i)\right]
\end{align}
where $\mathbf{\Upsilon}(i),\mathbf{\Psi}(i)$ are as in
\eqref{defUpsilon}-\eqref{dith2} and
$\mathbf{x}(i)=[\mathbf{x}^{T}_{1}(i)\cdots\mathbf{x}^{T}_{N}(i)]^{T}$
is the vector of sensor states (estimates). The functions
$M(\mathbf{x}(i))$ and $J(\mathbf{z}(i))$ are given by
\begin{align}
\label{nlalg3}
M(\mathbf{x}(i))&=\left[h^{T}_{1}(\mathbf{x}_{1}(i))\cdots
h^{T}_{N}(\mathbf{x}_{N}(i))\right]^{T},
\\
\label{nlalg3-b}
J(\mathbf{z}(i))&=\left[g^{T}_{1}(\mathbf{z}_{1}(i))\cdots
g^{T}_{N}(\mathbf{z}_{N}(i))\right]^{T}
\end{align}
and $\mathcal{K}=\mbox{diag}(\mathcal{K}_{1},\cdots,\mathcal{K}_{N})$ is the block diagonal matrix of gains.

As an example, for the linear observation model, by defining $g_{n}(\mathbf{z}_{n}(i))$ to be $\overline{H}_{n}^{T}\mathbf{z}_{n}(i)$ (and choosing the matrix gains $\mathcal{K}_{n}$ to be $I_{M}$), the $\mathcal{NU}$ reduces to the $\mathcal{LU}$ updates~\eqref{algRE1}.

We note that the update scheme in \eqref{nalg2} is nonlinear
 and hence convergence properties can, in general, be characterized through the existence of appropriate stochastic Lyapunov
functions. In particular, if we can show that the iterative scheme
in \eqref{nalg2} falls under the purview of a general result
like Theorem~\ref{RM}, we can establish properties like
consistency, normality etc. To this end, we note, that
\eqref{nalg2} can be written as
\begin{align}
\nonumber
&
\mathbf{x}(i+1)
=
\mathbf{x}(i)-\alpha(i)\left[\beta\left(\overline{L}\otimes
I_{M}\right)\left(\mathbf{x}(i)
-\mathbf{1}_{N}\otimes\mathbf{\theta}^{\ast}\right)
\right.
\\
\nonumber
&
\hspace{-.15cm}
+
\left.
\beta\left(\widetilde{L}(i)\otimes
I_{M}\right)\mathbf{x}(i)+\mathcal{K}\left(M(\mathbf{x}(i))
-M(\mathbf{1}_{N}\otimes\mathbf{\theta}^{\ast})\right)\right.
\nonumber
\\
\label{nlalg4}
& \left.
-\mathcal{K}\left(J(\mathbf{z}(i))-M(\mathbf{1}_{N}\otimes\mathbf{\theta}^{\ast})\right)+\mathbf{\Upsilon}(i)+\mathbf{\Psi}(i)\right]
\end{align}
which becomes in the notation of Theorem~\ref{RM}
\begin{align}
\label{nalg5}
&
\hspace{-.2cm}
\mathbf{x}(i+1)
\hspace{-.1cm}
=
\hspace{-.1cm}
\mathbf{x}(i)
\hspace{-.1cm}
+
\hspace{-.1cm}
\alpha(i)
\hspace{-.1cm}
\left[R(\mathbf{x}(i))+\Gamma
\left(i+1,\mathbf{x}(i),\omega\right)\right]
\\
\label{nalg6}
&
\hspace{-.2cm}
R\left(\mathbf{x}\right)=-\left[\beta\left(\overline{L}\otimes
I_{M}\right)\left(\mathbf{x}-\mathbf{1}_{N}
\otimes\mathbf{\theta}^{\ast}\right)
\right.
\\
&
\nonumber
\left.
\hspace{2cm}
\phantom{\overline{L}}
+\mathcal{K}\left(M\left(\mathbf{x}\right)-M(\mathbf{1}_{N}\otimes\mathbf{\theta}^{\ast})\right)\right]
\\
\label{nalg7}
&
\hspace{-.15cm}
\Gamma\left(i+1,\mathbf{x},\omega\right)
=-\left[\beta\left(\widetilde{L}(i)\otimes
I_{M}\right)\mathbf{x}-
\right.
\\
\nonumber
&
\left.
\phantom{\widetilde{L}}
-
\mathcal{K}\left(J(\mathbf{z}(i))
-M(\mathbf{1}_{N}\otimes\mathbf{\theta}^{\ast})\right)+
\mathbf{\Upsilon}(i)+\mathbf{\Psi}(i)\right]
\end{align}
Consider the filtration, $\left\{\mathcal{F}_{i}\right\}_{i\geq 0}$,
\begin{equation}
\label{nalg8}
\hspace{-.3cm}
\mathcal{F}_{i}
\hspace{-.1cm}
=
\hspace{-.1cm}
\sigma
\hspace{-.1cm}
\left(\mathbf{x}(0),
\hspace{-.1cm}
\left\{
\hspace{-.1cm}
L(j),\left\{
\hspace{-.05cm}
\mathbf{z}_{n}(j)\right\}_{1\leq
N},
\hspace{-.1cm}
\mathbf{\Upsilon}(j),\mathbf{\Psi}(j)
\hspace{-.1cm}
\right\}_{0\leq j<i}
\hspace{-.05cm}
\right)
\hspace{-.225cm}
\end{equation}
Clearly, under~\textbf{(D.1)-(D.3)}, the
$\left\{\mathbf{x}(i)\right\}_{i\geq 0}$ generated by~$\mathcal{NU}$ is Markov w.r.t.
$\left\{\mathcal{F}_{i}\right\}_{i\geq 0}$, and the definition in
\eqref{nalg7} renders the random family,
$\left\{\Gamma\left(i+1,\mathbf{x},\omega\right)\right\}_{\mathbf{x}\in\mathbb{R}^{NM}}$, $\mathcal{F}_{i+1}$ measurable, zero-mean, and independent of
$\mathcal{F}_{i}$ for fixed $i+1$. Thus~\textbf{(B.1)-(B.2)} of Theorem~\ref{RM} are satisfied,
and we have the following.
\begin{proposition}[\hspace{-.05cm}$\mathcal{NU}$:Consistency/ asymp.~normality] 
\label{prop_nu} Let the sequence
$\left\{\mathbf{x}(i)\right\}_{i\geq 0}$ be generated by $\mathcal{NU}$. Let
$R\left(\mathbf{x}\right),\Gamma\left(i+1,\mathbf{x},\omega\right),\mathcal{F}_{i}$ be as in \eqref{nalg6}-\eqref{nalg7}.
 Then, if there exists a function $V\left(\mathbf{x}\right)$ satisfying~\textbf{(B.3)-(B.4)} at $\mathbf{x}^{\ast}=\mathbf{1}_{N}\otimes \mathbf{\theta}^{\ast}$, the estimate sequence $\left\{\mathbf{x}_{n}(i) \right\}_{i\geq 0 }$ at any sensor $n$ is  consistent. In other words,
\begin{equation}
\label{prop_nu1}
\mathbb{P}_{\mathbf{\theta}^{\ast}}[\lim_{i\rightarrow\infty}\mathbf{x}_{n}(i)=\mathbf{\theta}^{\ast},~\forall
n]=1
\end{equation}
If, in addition, \textbf{(C.1)-(C.4)} are satisfied, the sequence $\left\{\mathbf{x}_{n}(i)\right\}_{i\geq 0 }$ at any sensor~$n$ is asymptotically normal.
\end{proposition}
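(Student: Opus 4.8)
The plan is to recognize the proposition as a direct specialization of Theorem~\ref{RM} to the recursion~(\ref{nalg5}), so that both conclusions are obtained by checking which hypotheses of that theorem are already in force and which are supplied by the present statement. The identification of $R(\cdot)$ and $\Gamma(\cdot)$ in eqns.~(\ref{nalg6},\ref{nalg7}) already casts the $\mathcal{NU}$ update in the canonical form~(\ref{RM:1}) with $\mathbf{x}^{\ast}=\mathbf{1}_{N}\otimes\mathbf{\theta}^{\ast}$, and the discussion preceding the proposition has verified Assumptions~\textbf{B.1} and~\textbf{B.2} (measurability, and the zero-mean, $\mathcal{F}_{i+1}$-measurable, $\mathcal{F}_{i}$-independent structure of $\Gamma$). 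Thus for the consistency claim I would only need to note that~\textbf{B.5} holds because the weight sequence satisfies the persistence condition~(\ref{alphacond}) inherited from the description of $\mathcal{NU}$, while~\textbf{B.3} and~\textbf{B.4} are granted by hypothesis at the point $\mathbf{x}^{\ast}$. With~\textbf{B.1--B.5} all in place, the first conclusion~(\ref{RMres1}) of Theorem~\ref{RM} yields $\mathbb{P}[\lim_{i}\mathbf{x}(i)=\mathbf{1}_{N}\otimes\mathbf{\theta}^{\ast}]=1$; reading off the $n$-th block gives $\mathbf{x}_{n}(i)\to\mathbf{\theta}^{\ast}$ a.s.\ for every $n$, which is~(\ref{prop_nu1}).

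For the asymptotic normality claim I would invoke the second conclusion~(\ref{RMres2}) of Theorem~\ref{RM}. Assumptions~\textbf{C.1--C.4} are granted, so the only remaining hypothesis of that theorem is the uniform-integrability condition~\textbf{C.5}. The cleanest way to dispatch this is to mimic the linear argument used in the proof of Theorem~\ref{asynorm}: for $\|\mathbf{x}-\mathbf{x}^{\ast}\|<\epsilon$ the term $\beta(\widetilde{L}(i)\otimes I_{M})\mathbf{x}$ is bounded, since $\widetilde{L}(i)$ takes finitely many values and $(\widetilde{L}(i)\otimes I_{M})(\mathbf{1}_{N}\otimes\mathbf{\theta}^{\ast})=\mathbf{0}$, so that $\|\Gamma(i+1,\mathbf{x},\omega)\|^{2}$ is dominated, uniformly in $\mathbf{x}$ over that neighborhood, by an $i$-independent random variable built from the bounded consensus part, the dither energy $\|\mathbf{\Upsilon}(i)+\mathbf{\Psi}(i)\|^{2}$, and $\|J(\mathbf{z}(i))-M(\mathbf{1}_{N}\otimes\mathbf{\theta}^{\ast})\|^{2}$. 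The first two have uniformly bounded second moments by~(\ref{dith40}), and the last has finite expectation by the moment assumption~(\ref{nlprobform4}); being i.i.d.\ in $i$, the dominating family is uniformly integrable, and domination transfers this to $\{\|\Gamma(i+1,\mathbf{x},\omega)\|^{2}\}$. This verifies~\textbf{C.5}, whereupon~(\ref{RMres2}) gives $\sqrt{i}(\mathbf{x}(i)-\mathbf{1}_{N}\otimes\mathbf{\theta}^{\ast})\Rightarrow\mathcal{N}(\mathbf{0},S)$, and the $n$-th principal block yields asymptotic normality at sensor~$n$.

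The point requiring the most care is the logical one: the result is almost entirely ``assumed into existence,'' since the substantive hypotheses~\textbf{B.3, B.4} (existence of a suitable stochastic Lyapunov function for the nonlinear drift $R$) and~\textbf{C.1--C.4} are placed on the user rather than derived from the model. The one genuine gap to close is the uniform-integrability condition~\textbf{C.5}, which is required by Theorem~\ref{RM} for normality but is not among the listed hypotheses; I would either verify it via the domination argument above under~(\ref{nlprobform4}), or, if one wants the statement self-contained, fold~\textbf{C.5} into the hypotheses. Finally, unlike the linear case, I would expect no closed form for the limiting covariance $S$, since the local nonlinearities $h_{n}$ enter the matrix $B$ of Assumption~\textbf{C.1} through their linearization at $\mathbf{x}^{\ast}$; I would therefore leave $S$ characterized only implicitly via~(\ref{RMres3}).
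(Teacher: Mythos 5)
Your proposal is correct and follows exactly the paper's route: the paper gives no separate proof of Proposition~\ref{prop_nu} at all; having verified Assumptions~\textbf{B.1, B.2} in the discussion preceding the statement, it declares the result to follow ``immediately'' from Theorem~\ref{RM}, with \textbf{B.5} supplied by the persistence condition on $\left\{\alpha(i)\right\}_{i\geq 0}$ and \textbf{B.3, B.4} (resp.~\textbf{C.1--C.4}) imported as hypotheses, precisely as you do. Your one genuine addition is the treatment of \textbf{C.5}: you correctly observe that Theorem~\ref{RM} requires \textbf{C.1--C.5} for asymptotic normality while the proposition lists only \textbf{C.1--C.4}, an omission the paper passes over in silence, and your domination argument mimicking the proof of Theorem~\ref{asynorm} is the right way to close it. One small repair to that argument: the finite second moment of $\left\|J(\mathbf{z}(i))-M(\mathbf{1}_{N}\otimes\mathbf{\theta}^{\ast})\right\|^{2}$ does not follow from eqn.~(\ref{nlprobform4}), which controls only the network average $\frac{1}{N}\sum_{n=1}^{N}g_{n}(\mathbf{z}_{n}(i))$ and not the stacked vector $J(\mathbf{z}(i))$; it does, however, follow from hypothesis \textbf{B.4} itself evaluated at $\mathbf{x}=\mathbf{1}_{N}\otimes\mathbf{\theta}^{\ast}$ (where the $\widetilde{L}(i)$ term vanishes and the remaining cross terms are zero-mean by independence), so your dominating family is identically distributed with finite mean, hence uniformly integrable, and \textbf{C.5} holds under the proposition's stated hypotheses.
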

Proposition~\ref{prop_nu} states that, a.s.~asymptotically, the network reaches consensus, and the estimates at each sensor converge to the true value of the parameter vector~$\theta^{\star}$. The Proposition relates these convergence properties of~$\mathcal{NU}$ to the existence of suitable Lyapunov
functions. For a particular observation model characterized by
the corresponding functions $h_{n}(\cdot),g_{n}(\cdot)$, if one
can come up with an appropriate Lyapunov function satisfying the
assumptions of Proposition~\ref{prop_nu}, then consistency and asymptotic normality are guaranteed. Existence of a suitable Lyapunov condition is sufficient for consistency, but may not be necessary. In particular, there may be observation models for which the $\mathcal{NU}$ algorithm is consistent, but there exists no Lyapunov function satisfying the assumptions of Proposition~\ref{prop_nu}.\footnote{This is because converse
theorems in stability theory do not hold in general, see,~\cite{Krasovskii}.} Also, even if a suitable Lyapunov
function exists, it may be difficult to guess its form, because
there is no systematic (constructive) way of coming up with
Lyapunov functions for generic models.

However, for our problem of interest, some additional weak
assumptions on the observation model, for example, Lipschitz
continuity of the functions $h_{n}(\cdot)$, will guarantee the
existence of suitable Lyapunov functions, thus establishing
convergence properties of the $\mathcal{NU}$ algorithm. The rest
of this subsection studies this issue and presents different
sufficient conditions on the observation model, which guarantee
that the assumptions of Proposition~\ref{prop_nu} are satisfied,
leading to the a.s.~convergence of the~$\mathcal{NU}$ algorithm. For the development in the rest of the subsection, we assume that $\overline{M}=M$ in the decomposition~\eqref{nlprobform2} and $\mathcal{K}_{n}=I_{M}$ for all $n$. The extensions of Theorems~\ref{nlth}-\ref{thnl} to $\overline{M}\neq M$ and arbitrary gains $\mathcal{K}_{n}$ are immediate. We start with the following definition:
\begin{definition}[Consensus Subspace] We define the consensus
subspace, $\mathcal{C}\subset\mathbb{R}^{MN}$ as
\begin{equation}
\label{nalg9} \mathcal{C}=\left\{\mathbf{y}\in\mathbb{R}^{NM}\left|\rule{0cm}{.35cm}\right.\mathbf{y}=\mathbf{1}_{N}\otimes\widetilde{\mathbf{y}},
\,\widetilde{\mathbf{y}}\in\mathbb{R}^{M}\right\}
\end{equation}
\end{definition}
For $\mathbf{y}\in\mathbb{R}^{NM}$, we denote its
component in $\mathcal{C}$ by $\mathbf{y}_{\mathcal{C}}$ and its
orthogonal component by $\mathbf{y}_\mathcal{C}^{\perp}$.

\begin{theorem}[$\mathcal{NU}$:Consistency under Lipschitz on $h_n$]
\label{nlth} Let $\left\{\mathbf{x}(i)\right\}_{i\geq 0}$ be the state
sequence generated by the $\mathcal{NU}$ algorithm (Assumptions~\textbf{(D.1)-(D.3)}). Further,
$\forall \mathbf{\theta}\neq\widetilde{\mathbf{\theta}}\in\mathbb{R}^{M}, 1\leq n\leq N$, let the functions $h_{n}(\cdot)$ be:
\begin{enumerate}
\item Lipschitz continuous with
constants $k_{n}>0$, i.e.,
\begin{equation}
\label{nlth1}
\|h_{n}(\mathbf{\theta})-h_{n}(\widetilde{\mathbf{\theta}})\|\leq
k_{n}\|\mathbf{\theta}-\widetilde{\mathbf{\theta}}\|
\end{equation}
\item
\begin{equation}
\label{nlth2}
\left(\mathbf{\theta}-\widetilde{\mathbf{\theta}}\right)^{T}\left(h_{n}(\mathbf{\theta})-h_{n}(\widetilde{\theta})\right)\geq
0.
\end{equation}
\end{enumerate}
Define $K$ as
\begin{equation}
\label{nlth3} K=\max(k_{1},\cdots,k_{N})
\end{equation}
Then, for every $\beta>0$, the estimate sequence is consistent. In other words,
\begin{equation}
\label{nlth4}
\mathbb{P}_{\mathbf{\theta}^{\ast}}\left[\lim_{i\rightarrow\infty}\mathbf{x}_{n}(i)=\mathbf{\theta}^{\ast},~\forall
n\right]=1
\end{equation}
\end{theorem}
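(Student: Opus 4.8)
The plan is to invoke Proposition~\ref{prop_nu}: it suffices to exhibit a Lyapunov function $V(\cdot)$ satisfying Assumptions~\textbf{B.3} and~\textbf{B.4} of Theorem~\ref{RM} at the point $\mathbf{x}^{\ast}=\mathbf{1}_{N}\otimes\mathbf{\theta}^{\ast}$, for the drift $R(\cdot)$ and noise $\Gamma(\cdot)$ given in eqns.~(\ref{nalg6},\ref{nalg7}). Since the nonlinearity enters only through the monotone maps $h_{n}$, and we have no quadratic form adapted to them (unlike $D_{\overline{H}}$ in the linear case), the natural candidate is the plain quadratic $V(\mathbf{x})=\left\|\mathbf{x}-\mathbf{1}_{N}\otimes\mathbf{\theta}^{\ast}\right\|^{2}$, which lies in $\mathbb{C}_{2}$ with $V_{\mathbf{x}}(\mathbf{x})=2(\mathbf{x}-\mathbf{1}_{N}\otimes\mathbf{\theta}^{\ast})$, bounded second derivatives, and the required radial growth. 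Writing $\mathbf{y}=\mathbf{x}-\mathbf{1}_{N}\otimes\mathbf{\theta}^{\ast}$, the entire argument reduces to analyzing the inner product $\left(R(\mathbf{x}),V_{\mathbf{x}}(\mathbf{x})\right)$.

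For Assumption~\textbf{B.4} I would use the Lipschitz hypothesis~(\ref{nlth1}) directly: from~(\ref{nalg6}), $\left\|R(\mathbf{x})\right\|^{2}\leq 2\beta^{2}\|\overline{L}\|^{2}\|\mathbf{y}\|^{2}+2\sum_{n}\|h_{n}(\mathbf{x}_{n})-h_{n}(\mathbf{\theta}^{\ast})\|^{2}\leq 2\left(\beta^{2}\|\overline{L}\|^{2}+K^{2}\right)\|\mathbf{y}\|^{2}=c\,V(\mathbf{x})$. For $\Gamma$, I note $(\widetilde{L}(i)\otimes I_{M})(\mathbf{1}_{N}\otimes\mathbf{\theta}^{\ast})=\mathbf{0}$, so the Laplacian-fluctuation term is again $O(\|\mathbf{y}\|^{2})$ as $\widetilde{L}(i)$ is finite-valued; the centered, $\mathbf{x}$-independent innovation $J(\mathbf{z}(i))-M(\mathbf{1}_{N}\otimes\mathbf{\theta}^{\ast})$ has finite second moment by the moment assumption~(\ref{nlprobform4}); and $\mathbb{E}\|\mathbf{\Upsilon}(i)+\mathbf{\Psi}(i)\|^{2}\leq\eta_{q}$ by~(\ref{dith40}). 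Hence $\mathbb{E}[\|\Gamma\|^{2}]\leq c'\,V(\mathbf{x})+c''$, and since I will have shown $\left(R(\mathbf{x}),V_{\mathbf{x}}(\mathbf{x})\right)\leq 0$, the term $-k_{2}\left(R,V_{\mathbf{x}}\right)$ is nonnegative and~\textbf{B.4} follows for suitable $k_{1},k_{2}$. This part is essentially routine.

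The substance is Assumption~\textbf{B.3}. A direct computation gives
\[
\left(R(\mathbf{x}),V_{\mathbf{x}}(\mathbf{x})\right)=-2\beta\,\mathbf{y}^{T}(\overline{L}\otimes I_{M})\mathbf{y}-2\sum_{n=1}^{N}(\mathbf{x}_{n}-\mathbf{\theta}^{\ast})^{T}\left(h_{n}(\mathbf{x}_{n})-h_{n}(\mathbf{\theta}^{\ast})\right).
\]
Both terms are nonpositive: the first because $\overline{L}\otimes I_{M}$ is positive semidefinite, the second by the monotonicity hypothesis~(\ref{nlth2}). Decomposing $\mathbf{y}=\mathbf{y}_{\mathcal{C}}+\mathbf{y}_{\mathcal{C}}^{\perp}$ along the consensus subspace $\mathcal{C}$, the first term is bounded above by $-2\beta\lambda_{2}(\overline{L})\|\mathbf{y}_{\mathcal{C}}^{\perp}\|^{2}$ (since $\lambda_{2}(\overline{L})>0$ and $\mathcal{C}$ is exactly the kernel of $\overline{L}\otimes I_{M}$), which already delivers uniform strict negativity on any annulus so long as $\mathbf{y}_{\mathcal{C}}^{\perp}$ is bounded away from $\mathbf{0}$.

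The hard part is the consensus subspace itself, where $\mathbf{y}_{\mathcal{C}}^{\perp}=\mathbf{0}$, i.e. $\mathbf{x}_{n}=\mathbf{\theta}^{\ast}+\mathbf{a}$ for a common $\mathbf{a}\neq\mathbf{0}$, and the first term vanishes; the whole sign must then come from the second term, which by the definition~(\ref{nlprobform2}) of $h$ equals $-2N\,\mathbf{a}^{T}\left(h(\mathbf{\theta}^{\ast}+\mathbf{a})-h(\mathbf{\theta}^{\ast})\right)$. Monotonicity alone only gives $\geq 0$, so I must upgrade this to \emph{strict} positivity for $\mathbf{a}\neq\mathbf{0}$, and this is exactly where the invertibility of $h$ guaranteed by separable estimability (Definition~\ref{sep_est_def}) has to enter. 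The delicate point I anticipate is that monotonicity together with mere injectivity does not immediately forbid the degenerate situation in which $h(\mathbf{\theta}^{\ast}+\mathbf{a})-h(\mathbf{\theta}^{\ast})$ is nonzero but \emph{orthogonal} to $\mathbf{a}$: a monotonicity-along-the-ray argument shows that $\mathbf{a}^{T}(h(\mathbf{\theta}^{\ast}+\mathbf{a})-h(\mathbf{\theta}^{\ast}))=0$ forces $\mathbf{a}^{T}(h(\mathbf{\theta}^{\ast}+t\mathbf{a})-h(\mathbf{\theta}^{\ast}))=0$ for all $t\in[0,1]$, and reconciling this with injectivity of $h$ is the crux. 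I would therefore expect to need invertibility in a suitably strong form (or a mild extra regularity of $h$) to exclude this case and close~\textbf{B.3}; the remaining uniformity over annuli then follows by a routine compactness and continuity argument.
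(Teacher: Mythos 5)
Your proposal is, up to the point where you stop, the paper's own proof: the paper takes the same quadratic Lyapunov function $V(\mathbf{x})=\left\|\mathbf{x}-\mathbf{1}_{N}\otimes\mathbf{\theta}^{\ast}\right\|^{2}$, verifies Assumption~\textbf{B.4} from the Lipschitz bound exactly as you sketch, and for Assumption~\textbf{B.3} computes the same inner product (eqn.~(\ref{nlth9})), kills the Laplacian term off the consensus subspace, and reduces everything, via the same compactness/limit-point argument you defer to, to strict positivity of $\left(\mathbf{a}-\mathbf{\theta}^{\ast}\right)^{T}\left(h(\mathbf{a})-h(\mathbf{\theta}^{\ast})\right)$ at a consensus-subspace limit point $\mathbf{a}\neq\mathbf{\theta}^{\ast}$. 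The one step you decline to take is precisely the step the paper takes by fiat: eqn.~(\ref{nlth7}) asserts that $\left(\mathbf{\theta}-\widetilde{\mathbf{\theta}}\right)^{T}\left(h(\mathbf{\theta})-h(\widetilde{\mathbf{\theta}})\right)>0$ for all $\mathbf{\theta}\neq\widetilde{\mathbf{\theta}}$, claiming this ``follows from the invertibility of $h(\cdot)$'' together with the monotonicity~(\ref{nlth2}) of the $h_{n}$. So, measured against the paper, your proof is incomplete at exactly one point, and that point is where the paper supplies a one-line justification rather than an argument.

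The interesting part is that your hesitation is mathematically warranted: the paper's inference (monotone $+$ invertible $\Rightarrow$ strictly monotone) is false in general, and your ``degenerate orthogonal case'' is exactly the obstruction. Take $M=2$ and $h_{n}(\mathbf{\theta})=J\mathbf{\theta}$ for every $n$, where $J=\left[\begin{smallmatrix}0&-1\\1&0\end{smallmatrix}\right]$ is the rotation by $\pi/2$ (realizable in the model, e.g., $\mathbf{z}_{n}(i)=\mathbf{\theta}^{\ast}+\mathbf{\zeta}_{n}(i)$ with $g_{n}(\mathbf{z})=J\mathbf{z}$). Each $h_{n}$ is Lipschitz, satisfies~(\ref{nlth2}) with equality since $\left(\mathbf{\theta}-\widetilde{\mathbf{\theta}}\right)^{T}J\left(\mathbf{\theta}-\widetilde{\mathbf{\theta}}\right)\equiv 0$, and $h=J$ is invertible, so the model is separably estimable and all hypotheses of the theorem hold; yet $h(\mathbf{\theta}^{\ast}+\mathbf{a})-h(\mathbf{\theta}^{\ast})=J\mathbf{a}$ is nonzero and orthogonal to $\mathbf{a}$ for every $\mathbf{a}\neq\mathbf{0}$. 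Hence eqn.~(\ref{nlth7}) fails, no Lyapunov function of this form can verify~\textbf{B.3}, and in fact the mean dynamics restricted to the consensus subspace are a pure rotation about $\mathbf{\theta}^{\ast}$, so the conclusion itself (not merely the proof) breaks for this model. The ``invertibility in a suitably strong form'' you anticipated is therefore unavoidable: one must assume strict monotonicity of $h$, which is exactly what the paper's Theorem~\ref{thnl} does via eqn.~(\ref{thnl2}) (there with a modulus $\gamma>0$, at the price of a lower bound on $\beta$). In short, you reproduced the paper's argument and stopped at a genuine gap in it; closing that gap requires strengthening the hypotheses, not more cleverness with the same ones.
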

The proof is in Appendix~\ref{app:NU}. The conditions in
\eqref{nlth1}-\eqref{nlth2} are much easier to verify than  guessing a Lyapunov function.
Also, as will be shown in the proof, the conditions in
Theorem~\ref{nlth} determine a Lyapunov function explicitly, which
may be used to analyze properties like convergence rate. The
Lipschitz assumption is quite common in the stochastic
approximation literature, while the assumption in
\eqref{nlth2} holds for a large class of functions. As a
matter of fact, in the one-dimensional case ($M=1$), it is
satisfied if the functions $h_{n}(\cdot)$ are non-decreasing. Also, in general, it can be shown from the proof (Appendix~\ref{app:NU}) that the Lipschitz continuity
assumption in Theorem~\ref{nlth} may be replaced by continuity of
the functions $h_{n}(\cdot),~1\leq n\leq N$, and linear growth
conditions, i.e., for constants $c_{n,1},c_{n,2}>0$,
\begin{equation}
\label{nlth20}
\hspace{-.05cm}
\left\|h_{n}(\mathbf{\theta})\right\|^{2}
\hspace{-.1cm}
\leq
\hspace{-.1cm}
c_{n,1}
\hspace{-.1cm}
+
\hspace{-.1cm}
c_{n,2}\|\mathbf{\theta}\|^{2},
\forall\mathbf{\theta}\in\mathbb{R}^{M},1\leq n\leq N.
\end{equation}

We now present another set of sufficient conditions that
guarantee consistency of  $\mathcal{NU}$.
If the observation model is separably estimable, in some cases
even if the underlying model is nonlinear, it may be possible to
choose the functions, $g_{n}(\cdot)$, such that the function
$h(\cdot)$ possesses nice properties. This is the next result.
\begin{theorem}[$\mathcal{NU}$:Consistency--$h$ strict monotonicity]
\label{thnl} Consider the~$\mathcal{NU}$ algorithm (Assumptions~\textbf{(D.1)-(D.3)}). Suppose that the functions
$g_{n}(\cdot)$ can be chosen, such that the functions
$h_{n}(\cdot)$ are Lipschitz continuous with constants $k_{n}>0$
and the function $h(\cdot)$ satisfies
\begin{equation}
\label{thnl2}
\hspace{-.1cm}
\left(\mathbf{\theta}-\widetilde{\mathbf{\theta}}\right)^{T}
\hspace{-.1cm}
\left(h(\mathbf{\theta})-h(\widetilde{\mathbf{\theta}})\right)
\hspace{-.1cm}
\geq
\hspace{-.1cm}
\gamma\|\mathbf{\theta}-\widetilde{\mathbf{\theta}}\|^{2},
\:\forall\mathbf{\theta},\widetilde{\mathbf{\theta}}\in\mathbb{R}^{M}
\end{equation}
for some constant $\gamma>0$. Then, for
\[
K=\max(k_{1},\cdots,k_{N}),
\]
if
\[
\beta>\frac{K^{2}+K\gamma}{\gamma\lambda_{2}{\overline{L}}},
\]
 the
algorithm $\mathcal{NU}$ is  consistent, i.e.,
\begin{equation}
\label{thnl3}
\mathbb{P}_{\mathbf{\theta}^{\ast}}\left[\lim_{i\rightarrow\infty}\mathbf{x}_{n}(i)=\mathbf{\theta}^{\ast},~\forall
n\right]=1
\end{equation}
\end{theorem}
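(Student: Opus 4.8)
The plan is to verify Assumptions~\textbf{B.3} and~\textbf{B.4} of Theorem~\ref{RM} for the same quadratic Lyapunov function $V\left(\mathbf{x}\right)=\|\mathbf{x}-\mathbf{1}_{N}\otimes\mathbf{\theta}^{\ast}\|^{2}$ used in Theorem~\ref{nlth}, since \textbf{B.1, B.2, B.5} hold exactly as there. The only genuinely new work is the strict-decrease condition~\textbf{B.3}, and this is where the lower bound on $\beta$ enters. Writing $\mathbf{y}=\mathbf{x}-\mathbf{1}_{N}\otimes\mathbf{\theta}^{\ast}$ and $V_{\mathbf{x}}=2\mathbf{y}$, I would first reproduce eqn.~(\ref{nlth9}):
\[
\left(R\left(\mathbf{x}\right),V_{\mathbf{x}}\left(\mathbf{x}\right)\right)=-2\beta\,\mathbf{y}^{T}\left(\overline{L}\otimes I_{M}\right)\mathbf{y}-2\sum_{n=1}^{N}\left(\mathbf{x}_{n}-\mathbf{\theta}^{\ast}\right)^{T}\left(h_{n}(\mathbf{x}_{n})-h_{n}(\mathbf{\theta}^{\ast})\right).
\]
Unlike in Theorem~\ref{nlth}, the individual $h_{n}$ need not be monotone --- only the aggregate $h$ is, through~(\ref{thnl2}) --- so the second sum cannot simply be dropped. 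The key idea is to split it along the consensus/disagreement decomposition.

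I would write $\mathbf{y}=\mathbf{y}_{\mathcal{C}}+\mathbf{y}_{\mathcal{C}}^{\perp}$ with $\mathbf{y}_{\mathcal{C}}=\mathbf{1}_{N}\otimes\overline{\mathbf{y}}$, $\overline{\mathbf{y}}=\frac{1}{N}\sum_{n}(\mathbf{x}_{n}-\mathbf{\theta}^{\ast})$, and set $\mathbf{a}=\mathbf{\theta}^{\ast}+\overline{\mathbf{y}}$ (the network average). The Laplacian term is controlled by the spectral gap: since $\lambda_{2}(\overline{L})>0$, one has $\mathbf{y}^{T}(\overline{L}\otimes I_{M})\mathbf{y}\geq\lambda_{2}(\overline{L})\|\mathbf{y}_{\mathcal{C}}^{\perp}\|^{2}$. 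For the nonlinear term I would insert $\pm h_{n}(\mathbf{a})$ and regroup into three pieces: (i)~$\sum_{n}(\mathbf{x}_{n}-\mathbf{\theta}^{\ast})^{T}(h_{n}(\mathbf{x}_{n})-h_{n}(\mathbf{a}))$, (ii)~$N\,\overline{\mathbf{y}}^{T}(h(\mathbf{a})-h(\mathbf{\theta}^{\ast}))$ using $h=\frac{1}{N}\sum_{n}h_{n}$ (eqn.~(\ref{nlth8})), and (iii)~$\sum_{n}\mathbf{w}_{n}^{T}(h_{n}(\mathbf{a})-h_{n}(\mathbf{\theta}^{\ast}))$, where $\mathbf{w}_{n}$ is the $n$-th block of $\mathbf{y}_{\mathcal{C}}^{\perp}$ (so $\sum_{n}\mathbf{w}_{n}=\mathbf{0}$ and $\sum_{n}\|\mathbf{w}_{n}\|^{2}=\|\mathbf{y}_{\mathcal{C}}^{\perp}\|^{2}$). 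Piece~(ii) is bounded below by $\gamma\|\mathbf{y}_{\mathcal{C}}\|^{2}$ from strict monotonicity~(\ref{thnl2}) and $\|\mathbf{y}_{\mathcal{C}}\|^{2}=N\|\overline{\mathbf{y}}\|^{2}$; pieces~(i) and~(iii) are bounded in absolute value by $K\|\mathbf{y}_{\mathcal{C}}\|\|\mathbf{y}_{\mathcal{C}}^{\perp}\|+K\|\mathbf{y}_{\mathcal{C}}^{\perp}\|^{2}$ and $K\|\mathbf{y}_{\mathcal{C}}\|\|\mathbf{y}_{\mathcal{C}}^{\perp}\|$, using Cauchy--Schwarz, the Lipschitz bounds $\|h_{n}(\mathbf{\theta})-h_{n}(\widetilde{\mathbf{\theta}})\|\leq K\|\mathbf{\theta}-\widetilde{\mathbf{\theta}}\|$, and the identities $\sum_{n}\|\mathbf{w}_{n}\|\leq\sqrt{N}\,\|\mathbf{y}_{\mathcal{C}}^{\perp}\|$, $\|\overline{\mathbf{y}}\|=\|\mathbf{y}_{\mathcal{C}}\|/\sqrt{N}$.

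Collecting these, with $u=\|\mathbf{y}_{\mathcal{C}}\|$ and $v=\|\mathbf{y}_{\mathcal{C}}^{\perp}\|$, I expect
\[
\left(R\left(\mathbf{x}\right),V_{\mathbf{x}}\left(\mathbf{x}\right)\right)\leq-2\gamma u^{2}+4Kuv-2\left(\beta\lambda_{2}(\overline{L})-K\right)v^{2},
\]
a quadratic form in $(u,v)$. This form is negative definite precisely when $\gamma\left(\beta\lambda_{2}(\overline{L})-K\right)>K^{2}$, i.e.\ when $\beta>\frac{K^{2}+K\gamma}{\gamma\lambda_{2}(\overline{L})}$ --- exactly the stated hypothesis --- in which case there is $c>0$ with $\left(R,V_{\mathbf{x}}\right)\leq-c\|\mathbf{x}-\mathbf{1}_{N}\otimes\mathbf{\theta}^{\ast}\|^{2}$, so the supremum over $\epsilon<\|\mathbf{x}-\mathbf{1}_{N}\otimes\mathbf{\theta}^{\ast}\|<1/\epsilon$ is at most $-c\epsilon^{2}<0$, giving~\textbf{B.3}. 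I anticipate that matching the positive-definiteness (determinant) condition of this $2\times 2$ form to the stated threshold on $\beta$ is the crux of the argument; the rest is bookkeeping. Finally, \textbf{B.4} follows as in Theorem~\ref{nlth}: $\|R\left(\mathbf{x}\right)\|^{2}\leq C\,V\left(\mathbf{x}\right)$ by the Lipschitz growth of the $h_{n}$ (eqn.~(\ref{nlth19})), while $\mathbb{E}\left[\|\Gamma\left(i+1,\mathbf{x},\omega\right)\|^{2}\right]\leq C'\,V\left(\mathbf{x}\right)+C''$ using $(\widetilde{L}(i)\otimes I_{M})(\mathbf{1}_{N}\otimes\mathbf{\theta}^{\ast})=\mathbf{0}$ together with the boundedness of $\widetilde{L}(i)$ and the finite second moments of the observations and dither (Assumptions~\textbf{D.1--D.3}, eqns.~(\ref{nlprobform4},\ref{dith40})); since $\left(R,V_{\mathbf{x}}\right)\leq 0$, this yields $\|R\|^{2}+\mathbb{E}\left[\|\Gamma\|^{2}\right]\leq k_{1}(1+V)-k_{2}\left(R,V_{\mathbf{x}}\right)$. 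With \textbf{B.1--B.5} verified, Theorem~\ref{RM} gives a.s.\ convergence of $\mathbf{x}(i)$ to $\mathbf{1}_{N}\otimes\mathbf{\theta}^{\ast}$, which is~(\ref{thnl3}).
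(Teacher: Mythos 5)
Your proposal is correct and follows essentially the same route as the paper's proof: the same quadratic Lyapunov function, the same consensus/orthogonal splitting with $M(\mathbf{x}_{\mathcal{C}})$ inserted (your $h_{n}(\mathbf{a})$ with $\mathbf{a}$ the network average is exactly this), the same three bounds, and the same $2\times 2$ quadratic form whose negative definiteness yields the stated threshold on $\beta$. The only (minor, and in fact slightly cleaner) difference is that you conclude Assumption~\textbf{B.3} directly from negative definiteness via $\left(R,V_{\mathbf{x}}\right)\leq -c\left\|\mathbf{x}-\mathbf{1}_{N}\otimes\mathbf{\theta}^{\ast}\right\|^{2}$, whereas the paper first concludes non-positivity and then invokes the continuity-plus-relative-compactness argument of Theorem~\ref{nlth}.
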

The proof is provided in Appendix~\ref{app:NU}. We comment that, in comparison to
Theorem~\ref{nlth}, strengthening the assumptions on $h(\cdot)$, see \eqref{thnl2}, considerably weakens the assumptions on the functions
$h_{n}(\cdot)$. Eqn.~(\ref{thnl2}) is an analog of strict monotonicity. For example, if~$h(\cdot)$ is linear, the left hand side of \eqref{thnl2} becomes a quadratic and the condition says that this quadratic is strictly away from zero, i.e., monotonically increasing with rate~$\gamma$.

\section{Nonlinear Observation Models: Algorithm $\mathcal{NLU}$}
\label{nlalg2} In this Section, we present the algorithm
$\mathcal{NLU}$ for distributed estimation in separably estimable
observation models. As explained later, this is a mixed
time-scale algorithm, where the consensus time-scale dominates the
observation update time-scale as time progresses. The
$\mathcal{NLU}$ algorithm is based on the fact that, for separably
estimable models, it suffices to know $h(\mathbf{\theta}^{\ast})$,
because $\mathbf{\theta}^{\ast}$ can be unambiguously determined
from the invertible function $h(\mathbf{\theta}^{\ast})$. To be
precise, if the function $h(\cdot)$ has a continuous inverse, then
any iterative scheme converging to $h(\mathbf{\theta}^{\ast})$
will lead to  consistent estimates, obtained by inverting the
sequence of iterates. The algorithm $\mathcal{NLU}$ is shown to
yield  consistent and unbiased estimators at each sensor for any
separably observable model, under the assumption that the function
$h(\cdot)$ has a continuous inverse. Thus, the algorithm
$\mathcal{NLU}$ presents a more reliable alternative than the
algorithm $\mathcal{NU}$, because, as shown in
Section~\ref{nlalg1}, the convergence properties of the latter
can be guaranteed only under certain assumptions on the
observation model.
We briefly comment on the organization of this section. The
$\mathcal{NLU}$ algorithm for separably estimable observation
models is presented in Section~\ref{algNLUform}.
Section~\ref{NLU_disres} offers interpretations of the
$\mathcal{NLU}$ algorithm and presents the main results regarding
consistency, mean-square convergence, asymptotic unbiasedness
proved in the paper. In Section~\ref{proofs_NLU} we prove the
main results about the $\mathcal{NLU}$ algorithm and provide
insights behind the analysis (in particular, why standard
stochastic approximation results cannot be used directly to give
its convergence properties.) Finally, Section~\ref{dis_future}
presents discussions on the $\mathcal{NLU}$ algorithm and suggests
future research directions.
\subsection{Algorithm $\mathcal{NLU}$}
\label{algNLUform}
\textbf{Algorithm $\mathcal{NLU}$}: Let
$\mathbf{x}(0)=[\mathbf{x}_{1}^{T}\cdots\mathbf{x}^{T}_{N}]^{T}$
be the initial set of states (estimates) at the sensors. The
$\mathcal{NLU}$ generates the sequence
$\left\{\mathbf{x}_{n}(i)\right\}\in\mathbb{R}^{M}$ at the $n$-th sensor according
to the  distributed recursive scheme:
\begin{align}
\label{nlu1}
&
\mathbf{x}_{n}(i+1)=h^{-1}\left(h(\mathbf{x}_{n}(i))-
\phantom{\sum_{l\in\Omega_{n}(i)}}
\right.
\\
\nonumber
&
-
\left.
\beta(i)\left(\sum_{l\in\Omega_{n}(i)}\left[h(\mathbf{x}_{n}(i))
-\mathbf{q}\left(h(\mathbf{x}_{l}(i))+\nu_{nl}(i)\right)\right]\right)
\right.
\\
\nonumber
&
\left.
\phantom{\sum_{l\in\Omega_{n}(i)}}
-\alpha(i)\left[h(\mathbf{x}_{n}(i))-g_{n}(\mathbf{z}_{n}(i))\right]\right)
\end{align}
based on the information,
\[
\mathbf{x}_{n}(i),\left\{\mathbf{q}\left(h(\mathbf{x}_{l}(i))
+\nu_{nl}(i)\right)\right\}_{l\in\Omega_{n}(i)},\mathbf{z}_{n}(i),
\]
available to it at time $i$ (we assume that at time $i$ sensor $l$
sends a quantized version of $h(\mathbf{x}_{l}(i))+\nu_{nl}(i)$ to
sensor $n$.) Here $h^{-1}(\cdot)$ denotes the inverse of the
function $h(\cdot)$ and $\left\{\beta(i)\right\}_{i\geq
0},\left\{\alpha(i)\right\}_{i\geq 0}$ are appropriately chosen weight
sequences. In the sequel, we analyze the $\mathcal{NLU}$ algorithm
under the model Assumptions~\textbf{(D.1)-(D.3)}, and in addition we
assume:

\begin{itemize}[\setlabelwidth{D.1)}]

\item{\textbf{(D.4)}}: There exists $\epsilon_{1}>0$, such that, $\forall\mathbf{\theta}\in\mathcal{U}$,  the
following moment exists:
\begin{equation}
\label{nlu00}
\mathbb{E}_{\mathbf{\theta}}\left[\left\|J(\mathbf{z}(i))-PJ(\mathbf{z}(i))\right\|^{2+\epsilon_{1}}\right]
=\kappa(\mathbf{\theta})
<\infty,
\end{equation}
where $J(\mathbf{z}(i))$ is defined in~\eqref{nlalg3-b} and the matrix $P$ is given by
\begin{align}
\label{nlu10000}
P
&
=\frac{1}{N}\left(\mathbf{1}_{N}\otimes
I_{\overline{M}}\right)\left(\mathbf{1}_{N}\otimes
I_{\overline{M}}\right)^{T}
\\
\nonumber
&
=\frac{1}{N}\left(\mathbf{1}_{N}\mathbf{1}_{N}^{T}\right)\otimes I_{\overline{M}}
\\
\nonumber
&
=
P_{N}\otimes I_{\overline{M}}
\end{align}
The above moment condition is slightly stronger than the moment assumption required by the $\mathcal{NU}$ algorithm in \eqref{nlprobform4}, where only existence of the quadratic moment of the random variables $g_{n}(\mathbf{z}_{n}(i))$ was assumed. For example, for the models considered in Propositions~\ref{prop-sigadd}-\ref{prop-sigadd4}, the above condition coincides with the existence of slightly higher than quadratic moments of the observations $\mathbf{z}_{n}(i)$. The latter is clearly justified for any reasonable observation noise distribution.

We also define, $\forall\mathbf{\theta}\in\mathcal{U}$:
\begin{align}
\label{nlu0001}
\mathbb{E}_{\mathbf{\theta}}
\left[\left\|J(\mathbf{z}(i))-PJ(\mathbf{z}(i))\right\|\right]
&
=
\kappa_{1}(\mathbf{\theta})
<\infty
\\
\label{nlu00001}
\mathbb{E}_{\mathbf{\theta}}\left[\left\|J(\mathbf{z}(i))-
PJ(\mathbf{z}(i))\right\|^{2}\right]
&
=
\kappa_{2}(\mathbf{\theta})
<\infty.
\end{align}

\item{\textbf{(D.5)}}: The weight sequences $\left\{\alpha(i)\right\}_{i\geq
0}$, and $\left\{\beta(i)\right\}_{i\geq 0}$ are given by
\begin{equation}
\label{nlu2}
\alpha(i)=\frac{a}{(i+1)^{\tau_{1}}},~~\beta(i)=\frac{b}{(i+1)^{\tau_{2}}}
\end{equation}
where $a,b>0$ are constants. We assume the following:
\begin{equation}
\label{nlu01} .5<\tau_{1},\tau_{2}\leq
1,~~\tau_{1}>\frac{1}{2+\epsilon_{1}}+\tau_{2},~~2\tau_{2}>\tau_{1}
\end{equation}
We note that, under Assumption~\textbf{(D.4)}, $\epsilon_{1}>0$,
such weight sequences always exist. As an example, if
$\frac{1}{2+\epsilon_{1}}=.49$, then the choice $\tau_{1}=1$ and
$\tau_{2}=.505$ satisfies the inequalities in
\eqref{nlu01}.
%
%
\end{itemize}
To write the $\mathcal{NLU}$ in a more compact form,  introduce
the \emph{transformed} state sequence,
$\left\{\widetilde{\mathbf{x}}(i)\right\}_{i\geq 0}$, where
$\widetilde{\mathbf{x}}(i)=[\widetilde{\mathbf{x}}^{T}_{1}(i)\cdots\widetilde{\mathbf{x}}^{T}_{N}(i)]^{T}\in\mathbb{R}^{N\overline{M}}$ and the iterations are
\begin{align}
\label{nlu3}
\hspace{-.2cm}
\widetilde{\mathbf{x}}(i+1)
\hspace{-.05cm}
&=\hspace{-.05cm}
\widetilde{\mathbf{x}}(i)-\beta(i)\left(L(i)\otimes
I_{M}\right)\widetilde{\mathbf{x}}(i)
-
\\
\nonumber
\hspace{-.05cm}&
-
\hspace{-.05cm}
\alpha(i)\left[\widetilde{\mathbf{x}}(i)-J(\mathbf{z}(i))\right]
-\beta(i)\left(\mathbf{\Upsilon}(i)+\mathbf{\Psi}(i)\right)
\\
\label{nlu4}
\mathbf{x}(i)
\hspace{-.05cm}
&
=
\hspace{-.05cm}
\left[\left(h^{-1}(\widetilde{\mathbf{x}}_{1}(i))\right)^{T}
\cdots\left(h^{-1}(\widetilde{\mathbf{x}}_{N}(i))\right)^{T}\right]^{T}
\end{align}
Here $\mathbf{\Upsilon}(i),\mathbf{\Psi}(i)\in\mathbb{R}^{N\overline{M}}$ model the dithered
quantization error effects as in algorithm $\mathcal{NU}$ resulting from the quantized transmissions in~\eqref{nlu1}. The update model in \eqref{nlu3} is a mixed time-scale procedure, where the consensus time-scale is determined by the weight sequence $\left\{\beta(i)\right\}_{i\geq 0}$. On the
other hand, the observation update time-scale is governed by the
weight sequence $\left\{\alpha(i)\right\}_{i\geq 0}$. It follows from
Assumption~\textbf{(D.5)} that $\tau_{1}>\tau_{2}$, which in turn
implies, $\frac{\beta(i)}{\alpha(i)}\rightarrow\infty$ as
$i\rightarrow\infty$. Thus, the consensus time-scale dominates the
observation update time-scale as the algorithm progresses making
it a mixed time-scale algorithm that does not directly fall under
the purview of stochastic approximation results like
Theorem~\ref{RM}. Also, the presence of the random link failures
and quantization noise (which operate at the same time-scale as
the consensus update) precludes standard approaches like
time-scale separation for the limiting system.
\begin{remark}
\label{NLU-imp} We comment on the distributed implementation of the $\mathcal{NLU}$.  Based on~\eqref{nlu1} and~\eqref{nlu3}-\eqref{nlu4}, the $\mathcal{NLU}$ may be implemented either in the estimate domain with $\{\mathbf{x}(i)\}$ as the algorithm state sequence or in the transformed domain with $\{\widetilde{\mathbf{x}}(i)\}$ as the state sequence. The implementation in the estimate domain,~\eqref{nlu1}, would require the sensors to store and transmit the instantaneous estimate $\mathbf{x}_{n}(i)$, however, implementation of the update involves computation of the functions $h(\mathbf{x}_{n}(i))$ followed by an inverse ($h^{-1}(\cdot)$) at every step. On the other hand, the implementation in the transformed domain, \eqref{nlu3}-\eqref{nlu4}, requires the sensors to store and transmit the states $\widetilde{\mathbf{x}}_{n}(i)$. The advantage in the latter implementation form is that the transformed state update rule,~\eqref{nlu3}, is linear in the state $\widetilde{\mathbf{x}}(i)$ and, in particular, does not require function computations and inverses at each step. (We note that the inverse in~\eqref{nlu4} may not be implemented at all iterations and does not affect the propagation of the transformed state sequence. In fact,~\eqref{nlu4} may be implemented only once to obtain the actual estimates from the transformed state sequence when the latter converge based on a suitable stopping criterion.) Hence, in practice, to simplify computations, the $\mathcal{NLU}$ may be implemented in the transformed domain with $\widetilde{\mathbf{x}}_{n}(i)$ being the state at a sensor $n$.
\end{remark}

\subsection{Algorithm $\mathcal{NLU}$: Discussions and Main
Results}\label{NLU_disres} We comment on the $\mathcal{NLU}$
algorithm. As is clear from \eqref{nlu3}-\eqref{nlu4}, the
$\mathcal{NLU}$ algorithm operates in a \emph{transformed} domain.
As a matter of fact, the function $h(\cdot)$ (c.f.
definition~\ref{sep_est_def}) can be viewed as an invertible
transformation on the parameter space $\mathcal{U}$. The
transformed state sequence, $\{\widetilde{\mathbf{x}}(i)\}_{i\geq
0}$, is then a transformation of the estimate sequence
$\{\mathbf{x}(i)\}_{i\geq 0}$, and, as seen from \eqref{nlu3},
the evolution of the sequence
$\{\widetilde{\mathbf{x}}(i)\}_{i\geq 0}$ is linear. This is an
important feature of the $\mathcal{NLU}$ algorithm, which is
linear in the transformed domain, although the underlying
observation model is nonlinear. Intuitively, this approach can be
thought of as a distributed stochastic version of homomorphic
filtering (see~\cite{Oppenheim}), where, by suitably transforming
the state space, linear filtering is performed on a certain
non-linear problem of filtering. In our case, for models of the
separably estimable type, the function $h(\cdot)$ then plays the
role of the analogous transformation in homomorphic filtering, and,
in this transformed space, one can design linear estimation
algorithms with desirable properties. This makes the
$\mathcal{NLU}$ algorithm significantly different from algorithm
$\mathcal{NU}$, with the latter operating on the untransformed
space and is non-linear. This linear property of the
$\mathcal{NLU}$ algorithm in the transformed domain leads to nice
statistical properties (for example, consistency asymptotic
unbiasedness) under much weaker assumptions on the observation
model than required by the nonlinear $\mathcal{NU}$ algorithm, but not asymptotic normality.

We now state the main results about the $\mathcal{NLU}$ algorithm
developed in the paper. We show that, if the observation
model is separably estimable, then, in the transformed domain, the
$\mathcal{NLU}$ algorithm is consistent. More specifically, if
$\mathbf{\theta}^{\ast}$ is the true (but unknown) parameter
value, then the transformed sequence
$\{\widetilde{\mathbf{x}}(i)\}_{i\geq 0}$ converges a.s. and in
mean-squared sense to $h(\mathbf{\theta}^{\ast})$. We note that,
unlike the $\mathcal{NU}$ algorithm, this only requires the
observation model to be separably estimable and no other
conditions on the functions $h_{n}(\cdot),h(\cdot)$. We summarize
these in the following theorem.
\begin{theorem}
\label{theorem_tilde}Consider the $\mathcal{NLU}$ algorithm under
the Assumptions~\textbf{(D.1)-(D.5)}, and the sequence
$\left\{\widetilde{\mathbf{x}}(i)\right\}_{i\geq 0}$ generated
according to \eqref{nlu3}. We then have
\begin{align}
\label{theorem_tilde1}
&
\hspace{-.35cm}
\mathbb{P}_{\mathbf{\theta}^{\ast}}\left[\lim_{i\rightarrow\infty} \widetilde{\mathbf{x}}_{n}(i)=
h(\mathbf{\theta}^{\ast}),\: \forall 1\leq n\leq N\right]=1
\\
\label{theorem_tilde2}
&
\hspace{-.4cm}
\lim_{i\rightarrow\infty}\mathbb{E}_{\mathbf{\theta}^{\ast}}
\left[\left\|\widetilde{\mathbf{x}}_{n}(i)-
h(\mathbf{\theta}^{\ast})\right\|^{2}\right]
=
0,\:
\forall 1\leq n\leq N
\end{align}
In particular,
\begin{equation}
\label{thorem_tilde3}
\lim_{i\rightarrow\infty}\mathbb{E}_{\mathbf{\theta}^{\ast}}
\left[\widetilde{\mathbf{x}}_{n}(i)\right]=h(\mathbf{\theta}^{\ast}),~~\forall
1\leq n\leq N
\end{equation}
In other words, in the transformed domain, the estimate sequence
$\{\widetilde{\mathbf{x}}_{n}(i)\}_{i\geq 0}$ at sensor $n$, is
consistent, asymptotically unbiased and converges in mean-squared
sense to $h(\mathbf{\theta}^{\ast})$.
\end{theorem}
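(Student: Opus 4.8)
The plan is to exploit the feature emphasized in Subsection~\ref{NLU_disres}: although the observation model is nonlinear, the transformed recursion~(\ref{nlu3}) for $\left\{\widetilde{\mathbf{x}}(i)\right\}_{i\geq 0}$ is \emph{linear}, and moreover it does not involve $h^{-1}$ at all, so the whole argument can be carried out under Assumptions~\textbf{D.1--D.5} only. The first step is to split the transformed state along the consensus subspace $\mathcal{C}$ of eqn.~(\ref{nalg9}), writing $\widetilde{\mathbf{x}}(i)=\mathbf{1}_{N}\otimes\overline{\mathbf{x}}(i)+\widetilde{\mathbf{x}}_{\mathcal{C}}^{\perp}(i)$, where $\overline{\mathbf{x}}(i)=\frac{1}{N}(\mathbf{1}_{N}\otimes I_{M})^{T}\widetilde{\mathbf{x}}(i)$ is the network average. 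Since $\mathbf{1}_{N}^{T}L(i)=\mathbf{0}$, the Laplacian term annihilates $\mathcal{C}$ and keeps $\mathcal{C}$ and $\mathcal{C}^{\perp}$ invariant, so the dynamics decouples into an \emph{averaged} recursion for $\overline{\mathbf{x}}(i)$ and a \emph{disagreement} recursion for $\widetilde{\mathbf{x}}_{\mathcal{C}}^{\perp}(i)$, which I would analyze in turn and then recombine.

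Projecting~(\ref{nlu3}) onto $\mathcal{C}^{\perp}$ gives a recursion of the form $\widetilde{\mathbf{x}}_{\mathcal{C}}^{\perp}(i+1)=[I_{NM}-\beta(i)(L(i)\otimes I_{M})-\alpha(i)I_{NM}]\widetilde{\mathbf{x}}_{\mathcal{C}}^{\perp}(i)+\alpha(i)(J(\mathbf{z}(i)))_{\mathcal{C}}^{\perp}-\beta(i)(\mathbf{\Upsilon}(i)+\mathbf{\Psi}(i))_{\mathcal{C}}^{\perp}$. Because $\lambda_{2}(\overline{L})>0$, the expected operator contracts $\mathcal{C}^{\perp}$ at rate $\beta(i)\lambda_{2}(\overline{L})$, whereas the observation term forces the disagreement with a \emph{nonzero} conditional mean $\alpha(i)(M(\mathbf{1}_{N}\otimes\mathbf{\theta}^{\ast}))_{\mathcal{C}}^{\perp}$. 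This biased forcing of order $\alpha(i)$ is precisely why Theorem~\ref{RM} cannot be applied directly: the perturbation driving the disagreement is not zero-mean. The central estimate I would establish is a one-step Lyapunov inequality for $V(i)=\|\widetilde{\mathbf{x}}_{\mathcal{C}}^{\perp}(i)\|^{2}$ of the type $\mathbb{E}[V(i+1)\mid\mathcal{F}_{i}]\leq(1-\beta(i)\lambda_{2}(\overline{L}))V(i)+C\left(\frac{\alpha^{2}(i)}{\beta(i)}+\alpha^{2}(i)+\beta^{2}(i)\right)$, obtained by a Young-inequality split of the bias cross-term and by bounding the fluctuations using $\kappa_{2}(\mathbf{\theta}^{\ast})$ and the quantization variance $\eta_{q}$ of eqn.~(\ref{dith40}). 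Since $\tau_{1}>\tau_{2}$ (Assumption~\textbf{D.5}) makes the forcing-to-contraction ratio vanish, this yields $\mathbb{E}[V(i)]\rightarrow 0$, i.e.\ mean-square disagreement decay. This disagreement analysis is \textbf{the main obstacle}, and promoting it to almost-sure convergence is where the finer assumptions enter: the constraint $\tau_{1}>\frac{1}{2+\epsilon_{1}}+\tau_{2}$, equivalently $(\tau_{1}-\tau_{2})(2+\epsilon_{1})>1$, is exactly what makes the $(2+\epsilon_{1})$-moments of the biased, observation-driven part of the disagreement summable, so that the moment bound of Assumption~\textbf{D.4}, Markov's inequality, and Borel--Cantelli deliver $\widetilde{\mathbf{x}}_{\mathcal{C}}^{\perp}(i)\rightarrow\mathbf{0}$ a.s.; the quantization contribution is bounded (hence has all moments) and is handled by a martingale-convergence argument using $\sum_{i}\beta^{2}(i)<\infty$ (which holds as $\tau_{2}>\tfrac{1}{2}$).

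For the averaged component I would left-multiply~(\ref{nlu3}) by $\frac{1}{N}(\mathbf{1}_{N}\otimes I_{M})^{T}$; the consensus term drops out and, using that $\mathbb{E}_{\mathbf{\theta}^{\ast}}[\frac{1}{N}\sum_{n}g_{n}(\mathbf{z}_{n}(i))]=h(\mathbf{\theta}^{\ast})$ by Definition~\ref{sep_est_def}, one obtains the linear stochastic-approximation recursion $\overline{\mathbf{x}}(i+1)=\overline{\mathbf{x}}(i)-\alpha(i)[\overline{\mathbf{x}}(i)-h(\mathbf{\theta}^{\ast})]+\alpha(i)\overline{\xi}(i)-\beta(i)\overline{\mathbf{w}}(i)$, where $\overline{\xi}(i)=\frac{1}{N}(\mathbf{1}_{N}\otimes I_{M})^{T}J(\mathbf{z}(i))-h(\mathbf{\theta}^{\ast})$ and $\overline{\mathbf{w}}(i)=\frac{1}{N}(\mathbf{1}_{N}\otimes I_{M})^{T}(\mathbf{\Upsilon}(i)+\mathbf{\Psi}(i))$ are zero-mean and of finite variance. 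The drift $-\alpha(i)[\overline{\mathbf{x}}(i)-h(\mathbf{\theta}^{\ast})]$ pulls $\overline{\mathbf{x}}(i)$ toward $h(\mathbf{\theta}^{\ast})$; since the recursion is linear I would solve it explicitly and conclude $\overline{\mathbf{x}}(i)\rightarrow h(\mathbf{\theta}^{\ast})$ both a.s.\ and in $L^{2}$ from $\sum_{i}\alpha(i)=\infty$ (drift persistence) together with $\sum_{i}\alpha^{2}(i)<\infty$ and $\sum_{i}\beta^{2}(i)<\infty$ (square-summability of the two noise streams, both guaranteed by $\tau_{1},\tau_{2}>\tfrac{1}{2}$), via martingale convergence.

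Finally I would recombine: writing $\widetilde{\mathbf{x}}_{n}(i)=\overline{\mathbf{x}}(i)+[\widetilde{\mathbf{x}}_{\mathcal{C}}^{\perp}(i)]_{n}$, the a.s.\ and $L^{2}$ convergence of $\overline{\mathbf{x}}(i)$ to $h(\mathbf{\theta}^{\ast})$ and of the disagreement to $\mathbf{0}$ immediately yield~(\ref{theorem_tilde1}) and~(\ref{theorem_tilde2}). The asymptotic-unbiasedness statement~(\ref{thorem_tilde3}) then follows from~(\ref{theorem_tilde2}) alone, since $L^{2}$ convergence of $\widetilde{\mathbf{x}}_{n}(i)$ to $h(\mathbf{\theta}^{\ast})$ implies $L^{1}$ convergence and hence convergence of the expectations, requiring only Assumptions~\textbf{D.1--D.5}. (The corresponding statement for the back-transformed estimate $\mathbf{x}_{n}(i)=h^{-1}(\widetilde{\mathbf{x}}_{n}(i))$, namely $\mathbf{x}_{n}(i)\rightarrow\mathbf{\theta}^{\ast}$ with convergence of means, would additionally invoke the continuity of $h^{-1}$ from Assumption~\textbf{D.6} together with a uniform-integrability argument based again on the moment bound of Assumption~\textbf{D.4}.)
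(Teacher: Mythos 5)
Your decomposition of the transformed iterate into its network average $\overline{\mathbf{x}}(i)$ and its disagreement $\widetilde{\mathbf{x}}_{\mathcal{C}}^{\perp}(i)$ is legitimate (the random Laplacian is symmetric with $\mathbf{1}_{N}^{T}L(i)=\mathbf{0}$, so both subspaces are invariant), your treatment of the average component is sound, and your mean-square Lyapunov recursion for the disagreement can be made to work, because the contraction $1-\beta(i)\lambda_{2}(\overline{L})$ you invoke is available in \emph{conditional expectation} of the quadratic form. The genuine gap is in promoting the disagreement analysis to almost-sure convergence, which is the heart of the theorem. You assert that Assumption~\textbf{D.4}, Markov's inequality and Borel--Cantelli "deliver" $\widetilde{\mathbf{x}}_{\mathcal{C}}^{\perp}(i)\rightarrow\mathbf{0}$ a.s., but the object to which Borel--Cantelli is applied is never identified, and neither natural choice works as stated. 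If you apply it to the driving noise (the quantity in Assumption~\textbf{D.4}) and then try to close the argument pathwise, you need a \emph{pathwise} contraction of the disagreement at rate $\beta(i)\lambda_{2}(\overline{L})$; this fails for your recursion because it contains the random Laplacian $L(i)$, whose instances may all be disconnected ($\lambda_{2}(L(i))=0$, since only $\lambda_{2}(\overline{L})>0$ is assumed), so that along a sample path the homogeneous part contracts only at rate $\alpha(i)$ --- and a forcing of size $\alpha(i)$ against a contraction of rate $\alpha(i)$ gives at best boundedness, not decay (the boundary regime of Lemma~\ref{prop_alpha}). If instead you apply Borel--Cantelli to the events $\{\|\widetilde{\mathbf{x}}_{\mathcal{C}}^{\perp}(i)\|>\epsilon\}$, you need a summable tail bound, i.e., something like $\mathbb{E}\|\widetilde{\mathbf{x}}_{\mathcal{C}}^{\perp}(i)\|^{2+\epsilon_{1}}=O\left(i^{-(\tau_{1}-\tau_{2})(2+\epsilon_{1})}\right)$; your observation that Assumption~\textbf{D.5} makes exactly this exponent exceed one is correct and attractive, but the bound itself is the hard part: it does not follow from the second-moment recursion (Jensen's inequality goes the wrong way for the power $(2+\epsilon_{1})/2>1$), the $\beta(i)$-rate contraction exists only on average, and the link-failure perturbation $(\widetilde{L}(i)\otimes I_{M})\widetilde{\mathbf{x}}_{\mathcal{C}}^{\perp}(i)$ is multiplicative in the state. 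That missing moment estimate (or a substitute for it) is precisely the technical core, and the proposal treats it as routine.

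The paper circumvents this difficulty with a different architecture: it introduces two auxiliary sequences driven by the \emph{deterministic} mean Laplacian --- $\widetilde{\mathbf{x}}^{\circ}(i)$ of eqn.~(\ref{nlu04}) (no link failures, no quantization) and $\widehat{\mathbf{x}}(i)$ of eqn.~(\ref{hatcons}) (quantization added) --- for which the pathwise bound $\left\|I-\beta(i)\left(\overline{L}\otimes I_{M}\right)-\alpha(i)I_{NM}-P\right\|\leq 1-\beta(i)\lambda_{2}(\overline{L})$ does hold, so that the Borel--Cantelli-plus-deterministic-comparison argument and the theory of weighted sums of \emph{independent} vectors apply (Lemmas~\ref{circcons} and~\ref{hatconslemma}); only then does it control the residual $\widetilde{\mathbf{x}}(i)-\widehat{\mathbf{x}}(i)$, whose driving term $\beta(i)(\widetilde{L}(i)\otimes I_{M})\widehat{\mathbf{x}}_{\mathcal{C}^{\perp}}(i)$ is conditionally zero-mean, via a truncation argument and the time-weighted supermartingale $V(i,\mathbf{x})=i^{\varepsilon_{1}}\mathbf{x}^{T}\left(\overline{L}\otimes I_{M}\right)\mathbf{x}+\rho i^{\varepsilon_{2}}$. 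Note that your quantization step has the same problem in miniature: the quantization-driven part of your disagreement is a triangular weighted sum whose weights involve the random matrices $L(l)$, so it is not a martingale in $i$, and the standard weighted-sum convergence results (usable by the paper because its weights are built from $\overline{L}$, hence deterministic) do not apply directly. So either you must actually prove the $(2+\epsilon_{1})$-moment decay of the disagreement under averaged-only contraction, or you need an intermediate comparison/supermartingale device of the paper's type; as written, the almost-sure half of eqn.~(\ref{theorem_tilde1}) is not established.
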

As an immediate consequence of Theorem~\ref{theorem_tilde}, we
have the following result, which characterizes the statistical
properties of the untransformed state sequence
$\{\mathbf{x}(i)\}_{i\geq 0}$.
\begin{theorem}
\label{theorem_untrans}Consider the $\mathcal{NLU}$ algorithm
under the Assumptions~\textbf{(D.1)-(D.5)}. Let
$\left\{\mathbf{x}(i)\right\}_{i\geq 0}$ be the state sequence
generated, as given by \eqref{nlu3}-\eqref{nlu4}. We then have
\begin{equation}
\label{theorem_untrans1}
\mathbb{P}_{\mathbf{\theta}^{\ast}}\left[\lim_{i\rightarrow\infty}\mathbf{x}_{n}(i)=\mathbf{\theta}^{\ast},~\forall~1\leq
n\leq N\right]=1
\end{equation}
In other words, the $\mathcal{NLU}$ algorithm is consistent.

If in addition, the function $h^{-1}(\cdot)$ is Lipschitz
continuous, the $\mathcal{NLU}$ algorithm is asymptotically
unbiased, i.e.,
\begin{equation}
\label{theorem_untrans2}
\lim_{i\rightarrow\infty}\mathbb{E}_{\mathbf{\theta}^{\ast}}
\left[\mathbf{x}_{n}(i)\right]=\mathbf{\theta}^{\ast},~\forall~1\leq
n\leq N
\end{equation}
\end{theorem}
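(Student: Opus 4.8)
The plan is to deduce both parts of Theorem~\ref{theorem_untrans} directly from Theorem~\ref{theorem_tilde}, using the pointwise relation $\mathbf{x}_{n}(i)=h^{-1}(\widetilde{\mathbf{x}}_{n}(i))$ from eqn.~(\ref{nlu4}) together with the regularity of $h^{-1}$ from Assumption~\textbf{D.6}. All of the analytic work on the dynamics in eqn.~(\ref{nlu3}) has already been carried out in establishing Theorem~\ref{theorem_tilde}, so what remains is merely a transfer of convergence through the map $h^{-1}$.

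First, for consistency, I would fix the almost sure event on which $\widetilde{\mathbf{x}}_{n}(i)\rightarrow h(\mathbf{\theta}^{\ast})$ for all $n$, whose existence is guaranteed by eqn.~(\ref{theorem_tilde1}). On this event, continuity of $h^{-1}$ (Assumption~\textbf{D.6}) gives
\begin{equation}
\mathbf{x}_{n}(i)=h^{-1}(\widetilde{\mathbf{x}}_{n}(i))\longrightarrow h^{-1}\left(h(\mathbf{\theta}^{\ast})\right)=\mathbf{\theta}^{\ast},\quad\forall n.
\end{equation}
Since the underlying event has probability one, eqn.~(\ref{theorem_untrans1}) follows, establishing consistency.

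Next, for asymptotic unbiasedness under the additional hypothesis that $h^{-1}$ is Lipschitz, I would run an $L^{2}$-to-$L^{1}$ argument. Let $L$ denote the Lipschitz constant of $h^{-1}$. Using $h^{-1}(h(\mathbf{\theta}^{\ast}))=\mathbf{\theta}^{\ast}$, I would bound
\begin{equation}
\left\|\mathbf{x}_{n}(i)-\mathbf{\theta}^{\ast}\right\|=\left\|h^{-1}(\widetilde{\mathbf{x}}_{n}(i))-h^{-1}(h(\mathbf{\theta}^{\ast}))\right\|\leq L\left\|\widetilde{\mathbf{x}}_{n}(i)-h(\mathbf{\theta}^{\ast})\right\|.
\end{equation}
Taking expectations and pulling the norm out by Jensen's inequality yields
\begin{equation}
\left\|\mathbb{E}_{\mathbf{\theta}^{\ast}}\left[\mathbf{x}_{n}(i)\right]-\mathbf{\theta}^{\ast}\right\|\leq\mathbb{E}_{\mathbf{\theta}^{\ast}}\left[\left\|\mathbf{x}_{n}(i)-\mathbf{\theta}^{\ast}\right\|\right]\leq L\,\mathbb{E}_{\mathbf{\theta}^{\ast}}\left[\left\|\widetilde{\mathbf{x}}_{n}(i)-h(\mathbf{\theta}^{\ast})\right\|\right].
\end{equation}
A second application of Jensen (or Cauchy--Schwarz) bounds the right-hand side by $L\sqrt{\mathbb{E}_{\mathbf{\theta}^{\ast}}[\|\widetilde{\mathbf{x}}_{n}(i)-h(\mathbf{\theta}^{\ast})\|^{2}]}$, which tends to zero by the mean-squared convergence in eqn.~(\ref{theorem_tilde2}). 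Hence $\mathbb{E}_{\mathbf{\theta}^{\ast}}[\mathbf{x}_{n}(i)]\rightarrow\mathbf{\theta}^{\ast}$, which is eqn.~(\ref{theorem_untrans2}).

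The only genuinely load-bearing point is the role of the two distinct hypotheses on $h^{-1}$. Mere continuity suffices for the almost sure statement, since a.s. convergence is preserved under continuous maps. However, passing from a.s. (or even $L^{2}$) convergence of the transformed iterates to convergence of the \emph{expectations} of the untransformed iterates is not automatic: continuity alone does not control the tails of $h^{-1}(\widetilde{\mathbf{x}}_{n}(i))$, i.e.\ it does not furnish uniform integrability of $\left\{\mathbf{x}_{n}(i)\right\}$. The Lipschitz assumption is precisely what removes this obstacle, dominating $\|\mathbf{x}_{n}(i)-\mathbf{\theta}^{\ast}\|$ by the transformed error, whose $L^{1}$ convergence is already in hand. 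I would therefore expect the write-up to be short, with this integrability subtlety being the only place requiring care.
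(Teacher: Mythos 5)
Your proposal is correct and follows essentially the same route as the paper's proof: consistency via the continuous-mapping of the a.s.\ convergence in Theorem~\ref{theorem_tilde}, and asymptotic unbiasedness via Jensen's inequality, the Lipschitz bound on $h^{-1}$, and the fact that the $\mathcal{L}_{2}$ convergence in eqn.~(\ref{theorem_tilde2}) implies $\mathcal{L}_{1}$ convergence (your Cauchy--Schwarz step is exactly this implication). Your closing remark on why continuity alone cannot yield convergence of expectations is a sound observation, matching the role the Lipschitz hypothesis plays in the paper.
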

The next subsection is concerned with the proofs of
Theorems~\ref{theorem_tilde}, \ref{theorem_untrans}.
\subsection{Consistency and Asymptotic Unbiasedness of $\mathcal{NLU}$: Proofs of Theorems~\ref{theorem_tilde},\ref{theorem_untrans}}
\label{proofs_NLU} The present subsection is devoted to proving
the consistency and unbiasedness of the $\mathcal{NLU}$ algorithm
under the stated Assumptions. The proof is lengthy and we start by
explaining why standard stochastic approximation results like
Theorem~\ref{RM} do not apply directly. A careful inspection
shows that there are essentially two different time-scales
embedded in \eqref{nlu3}. The consensus time-scale is
determined by the weight sequence $\left\{\beta(i)\right\}_{i\geq
0}$, whereas the observation update time-scale is governed by the
weight sequence $\left\{\alpha(i)\right\}_{i\geq 0}$. It follows
from Assumption~\textbf{(D.5)} that $\tau_{1}>\tau_{2}$, which, in
turn, implies $\frac{\beta(i)}{\alpha(i)}\rightarrow\infty$ as
$i\rightarrow\infty$. Thus, the consensus time-scale dominates the
observation update time-scale as the algorithm progresses making
it a mixed time-scale algorithm that does not directly fall under
the purview of stochastic approximation results like
Theorem~\ref{RM}. Also, the presence of the random link failures
and quantization noise (which operate at the same time-scale as
the consensus update) precludes standard approaches like
time-scale separation for the limiting system.

Finally, we note that standard stochastic approximation assumes
that the state evolution follows a stable deterministic system
perturbed by \emph{zero-mean} stochastic noise. More specifically,
if $\{\mathbf{y}(i)\}_{i\geq 0}$ is the sequence of interest,
Theorem~\ref{RM} assumes that $\{\mathbf{y}(i)\}_{i\geq 0}$
evolves as
\begin{equation}
\label{proofs_NLU1}
\hspace{-.05cm}
\mathbf{y}(i+1)
\hspace{-.075cm}
=
\hspace{-.075cm}
\mathbf{y}(i)+\gamma(i)
\hspace{-.075cm}
\left[R(\mathbf{y}(i))
\hspace{-.05cm}
+
\hspace{-.05cm}
\Gamma(i+1,\omega,\mathbf{y}(i))\right]
\end{equation}
where $\{\gamma(i)\}_{i\geq 0}$ is the weight sequence,
$\Gamma(i+1,\omega,\mathbf{y}(i))$ is the \emph{zero-mean} noise.
If the sequence $\{\mathbf{y}(i)\}_{i\geq 0}$ is supposed to
converge to $\mathbf{y}_{0}$, it further assumes that
$R(\mathbf{y}_{0})=\mathbf{0}$ and $\mathbf{y}_{0}$ is a stable
equilibrium of the deterministic system
\begin{equation}
\label{proofs_NLU2}
\mathbf{y}_{d}(i+1)=\mathbf{y}_{d}(i)+\gamma(i)R(\mathbf{y}_{d}(i))
\end{equation}
The $\mathcal{NU}$ algorithm (and its linear version,
$\mathcal{LU}$) falls under the purview of this, and we can
establish convergence properties using standard stochastic
approximation (see Sections~\ref{lbm},\ref{nlprobform}.) However,
the $\mathcal{NLU}$ algorithm cannot be represented in the form of
\eqref{proofs_NLU1}, even ignoring the presence of multiple
time-scales. Indeed, as established by
Theorem~\ref{theorem_tilde}, the sequence
$\{\widetilde{\mathbf{x}}(i)\}_{i\geq 0}$ is supposed to converge
to $\mathbf{1}_{N}\otimes h(\mathbf{\theta}^{\ast})$ a.s. and
hence writing \eqref{nlu3} as a stochastically perturbed
system around $\mathbf{1}_{N}\otimes h(\mathbf{\theta}^{\ast})$ we
have
\begin{equation}
\label{proofs_NLU3}
\widetilde{\mathbf{x}}(i+1)=\widetilde{\mathbf{x}}(i)+\gamma(i)\left[R(\widetilde{\mathbf{x}}(i))+\Gamma(i+1,\omega,\widetilde{\mathbf{x}}(i))\right]
\end{equation}
where,
\begin{align*}
&
R(\widetilde{\mathbf{x}}(i))
=
-\beta(i)\left(\overline{L}\otimes
I_{M}\right)\left(\widetilde{\mathbf{x}}(i)-\mathbf{1}_{N}\otimes
h(\mathbf{\theta}^{\ast})\right)-
\\
&
\hspace{3cm}
-
\alpha(i)\left(\widetilde{\mathbf{x}}(i)-\mathbf{1}_{N}\otimes
h(\mathbf{\theta}^{\ast})\right)
\\
\nonumber
&
\Gamma(i+1,\omega,\widetilde{\mathbf{x}}(i))
=
-\beta(i)\left(\widetilde{L}(i)\otimes
I_{M}\right)\left(\widetilde{\mathbf{x}}(i)-
\right.
\\
&
\left.
\hspace{1.5cm}
-
\mathbf{1}_{N}\otimes
h(\mathbf{\theta}^{\ast})\right)-
\beta(i)\left(\mathbf{\Upsilon}(i)+
\mathbf{\Psi}(i)\right)+
\\
&
\hspace{1.5cm}
+
\alpha(i)\left(J(\mathbf{z}(i))-
\mathbf{1}_{N}\otimes
h(\mathbf{\theta}^{\ast})\right)
\end{align*}
Although, $R(\mathbf{1}_{N}\otimes
h(\mathbf{\theta}^{\ast}))=\mathbf{0}$ in the above decomposition,
the noise $\Gamma(i+1,\omega,\widetilde{\mathbf{x}}(i))$ is not
unbiased as the term $\left(J(\mathbf{z}(i))-\mathbf{1}_{N}\otimes
h(\mathbf{\theta}^{\ast})\right)$ is \emph{not} zero-mean.

With the above discussion in mind, we proceed to the proof of
Theorems~\ref{theorem_tilde},\ref{theorem_untrans}, which we
develop in stages. The detailed proofs of the intermediate results
are provided in Appendix~\ref{proof_theorem_tilde}.

In parallel to the evolution of the state sequence
$\left\{\mathbf{x}(i)\right\}_{i\geq 0}$, we consider the
following update of the auxiliary sequence,
$\left\{\widetilde{\mathbf{x}}^{\circ}(i)\right\}_{i\geq 0}$:
\begin{align}
\label{nlu04}
&\widetilde{\mathbf{x}}^{\circ}(i+1)=\widetilde{\mathbf{x}}^{\circ}(i)
-\beta(i)\left(\overline{L}\otimes
I_{M}\right)\widetilde{\mathbf{x}}^{\circ}(i)-
\\
\nonumber
&
\hspace{3.5cm}
-
\alpha(i)\left[
\widetilde{\mathbf{x}}^{\circ}(i)-J(\mathbf{z}(i))\right]
\end{align}
with
$\widetilde{\mathbf{x}}^{\circ}(0)=\widetilde{\mathbf{x}}(0)$.
Note that in~(\ref{nlu04}) the random Laplacian~$L$ is replaced by
the average Laplacian~$\overline{L}$ and the quantization
noises~$\mathbf{\Upsilon}(i)$ and~$\mathbf{\Psi}(i)$ are not
included. In other words, in the absence of link failures and
quantization, the recursion~(\ref{nlu3}) reduces to~(\ref{nlu04}),
i.e., the sequences
$\left\{\widetilde{\mathbf{x}}(i)\right\}_{i\geq 0}$ and
$\left\{\widetilde{\mathbf{x}}^{\circ}(i)\right\}_{i\geq 0}$ are
the same.

Now consider the sequence whose recursion adds as input to the
recursion in~(\ref{nlu04}) the quantization
noises~$\mathbf{\Upsilon}(i)$ and~$\mathbf{\Psi}(i)$. In other
words, in the absence of link failures, but with quantization
included, define similarly the sequence
$\left\{\widehat{\mathbf{x}}(i)\right\}_{i\geq 0}$ given by
\begin{align}
\label{hatcons}
&\widehat{\mathbf{x}}(i+1)=\widehat{\mathbf{x}}(i)
-\beta(i)\left(\overline{L}\otimes
I_{M}\right)\widehat{\mathbf{x}}(i)
-
\\
&
\nonumber
\hspace{1cm}
-
\alpha(i)\left[\widehat{\mathbf{x}}(i)-J(\mathbf{z}(i))\right]
-\beta(i)\left(\mathbf{\Upsilon}(i)+\mathbf{\Psi}(i)\right)
\end{align}
with $\widehat{\mathbf{x}}(0)=\widetilde{\mathbf{x}}(0)$. Like
before, the recursions~(\ref{nlu3},\ref{nlu4}) will reduce
to~(\ref{hatcons}) when there are no link failures. However,
notice that in~(\ref{hatcons}) the quantization noise sequences
$\mathbf{\Upsilon}(i)$ and $\mathbf{\Psi}(i)$ are the sequences
resulting from quantizing $\widetilde{\mathbf{x}}(i)$
in~(\ref{nlu3}) and not from quantizing $\widehat{\mathbf{x}}(i)$
in~(\ref{hatcons}).

Define the instantaneous averages over the network as
\begin{align}
\nonumber
\mathbf{x}_{\mbox{\scriptsize{avg}}}(i)
&=
\frac{1}{N}\sum_{n=1}^{N}\mathbf{x}_{n}(i)=\frac{1}{N}\left(\mathbf{1}_{N}\otimes
I_{M}\right)^{T}\mathbf{x}(i)
\\
\nonumber
\widetilde{\mathbf{x}}_{\mbox{\scriptsize{avg}}}(i)
&=
\frac{1}{N}\sum_{n=1}^{N}\widetilde{\mathbf{x}}_{n}(i)=\frac{1}{N}\left(\mathbf{1}_{N}\otimes
I_{M}\right)^{T}\widetilde{\mathbf{x}}(i)
\\
\label{nlu6}
\mathbf{x}^{\circ}_{\mbox{\scriptsize{avg}}}(i)
&=
\frac{1}{N}\sum_{n=1}^{N}\mathbf{x}^{\circ}_{n}(i)=\frac{1}{N}\left(\mathbf{1}_{N}\otimes
I_{M}\right)^{T}\mathbf{x}^{\circ}(i)
\\
\nonumber
\widetilde{\mathbf{x}}^{\circ}_{\mbox{\scriptsize{avg}}}(i)
&=
\frac{1}{N}\sum_{n=1}^{N}\widetilde{\mathbf{x}}^{\circ}_{n}(i)=\frac{1}{N}\left(\mathbf{1}_{N}\otimes
I_{M}\right)^{T}\widetilde{\mathbf{x}}^{\circ}(i)
\end{align}
We sketch the main steps of the proof here. While proving
consistency and mean-squared sense convergence, we first show that
the average sequence,
$\left\{\widetilde{\mathbf{x}}^{\circ}_{\mbox{\scriptsize{avg}}}(i)
\right\}_{i\geq 0}$, converges a.s. to
$h(\mathbf{\theta}^{\ast})$. This can be done by invoking standard
stochastic approximation arguments. Then we show that the sequence
$\left\{\widetilde{\mathbf{x}}^{\circ}(i)\right\}_{i\geq 0}$
reaches consensus a.s., and clearly the limiting consensus value
must be $h(\mathbf{\theta}^{\ast})$. Intuitively, the a.s.
consensus comes from the fact that, after a sufficiently large
number of iterations, the consensus effect dominates over the
observation update effect, thus asymptotically leading to
consensus. The final step in the proof uses a series of comparison
arguments to show that the sequence
$\left\{\widetilde{\mathbf{x}}(i)\right\}_{i\geq 0}$ also reaches
consensus a.s. with $h(\mathbf{\theta}^{\ast})$ as the limiting
consensus value.

We now detail the proofs of
Theorems~\ref{theorem_tilde},\ref{theorem_untrans} in the
following steps.
\begin{itemize}[\setlabelwidth{I}]
\item{\textbf{(I)}}: The first step consists of studying the
convergence properties of the sequence
$\left\{\widetilde{\mathbf{x}}^{\circ}_{\mbox{\scriptsize{avg}}}(i)
\right\}_{i\geq 0}$, see \eqref{nlu04}, for which we
establish the following result.
\begin{lemma}
\label{circcons} Consider the sequence,
$\left\{\widetilde{\mathbf{x}}^{\circ}(i)\right\}_{i\geq 0}$,
given by \eqref{nlu04}, under the
Assumptions~\textbf{(D.1)-(D.5)}. Then,
\begin{eqnarray}
\label{circcons1}
\mathbb{P}_{\mathbf{\theta}^{\ast}}\left[\lim_{i\rightarrow\infty}\widetilde{\mathbf{x}}^{\circ}(i)=\mathbf{1}_{N}\otimes
h(\mathbf{\theta}^{\ast})\right]&=&1
\\
\label{circcons200}
\lim_{i\rightarrow\infty}\mathbb{E}_{\mathbf{\theta}^{\ast}}
\left[\left\|\widetilde{\mathbf{x}}^{\circ}(i)-\mathbf{1}_{N}\otimes
h(\mathbf{\theta}^{\ast})\right\|^{2}\right]&=&0
\end{eqnarray}
\end{lemma}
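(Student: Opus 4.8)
The plan is to split $\widetilde{\mathbf{x}}^{\circ}(i)$ along the orthogonal decomposition $\mathbb{R}^{NM\times 1}=\mathcal{C}\oplus\mathcal{C}^{\perp}$ and analyze the two pieces separately. Writing $P_{N}=\frac{1}{N}\mathbf{1}_{N}\mathbf{1}_{N}^{T}$, the component in $\mathcal{C}$ is $(P_{N}\otimes I_{M})\widetilde{\mathbf{x}}^{\circ}(i)=\mathbf{1}_{N}\otimes\widetilde{\mathbf{x}}^{\circ}_{\text{avg}}(i)$ and the disagreement component is $\widetilde{\mathbf{x}}^{\circ}_{\mathcal{C}^{\perp}}(i)=((I_{N}-P_{N})\otimes I_{M})\widetilde{\mathbf{x}}^{\circ}(i)$. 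The two structural facts I would exploit are that $(\mathbf{1}_{N}\otimes I_{M})^{T}(\overline{L}\otimes I_{M})=\mathbf{0}$, so the consensus term disappears under averaging, while on $\mathcal{C}^{\perp}$ the operator $\overline{L}\otimes I_{M}$ is bounded below by $\lambda_{2}(\overline{L})>0$ and hence acts as a strict contraction. It then suffices to prove $\widetilde{\mathbf{x}}^{\circ}_{\text{avg}}(i)\to h(\mathbf{\theta}^{\ast})$ and $\widetilde{\mathbf{x}}^{\circ}_{\mathcal{C}^{\perp}}(i)\to\mathbf{0}$, both a.s. and in $L^{2}$, and recombine orthogonally to obtain (\ref{circcons1}) and (\ref{circcons200}).

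For the average, multiplying (\ref{nlu04}) by $\frac{1}{N}(\mathbf{1}_{N}\otimes I_{M})^{T}$ yields $\widetilde{\mathbf{x}}^{\circ}_{\text{avg}}(i+1)=\widetilde{\mathbf{x}}^{\circ}_{\text{avg}}(i)-\alpha(i)[\widetilde{\mathbf{x}}^{\circ}_{\text{avg}}(i)-\overline{J}(i)]$, where $\overline{J}(i)=\frac{1}{N}\sum_{n=1}^{N}g_{n}(\mathbf{z}_{n}(i))$ has mean $h(\mathbf{\theta}^{\ast})$. This is exactly of the form (\ref{RM:1}) with $R(\mathbf{y})=-(\mathbf{y}-h(\mathbf{\theta}^{\ast}))$ and zero-mean i.i.d.\ innovation $\overline{J}(i)-h(\mathbf{\theta}^{\ast})$, so a.s.\ convergence follows from Theorem~\ref{RM} applied with $V(\mathbf{y})=\|\mathbf{y}-h(\mathbf{\theta}^{\ast})\|^{2}$, the hypotheses \textbf{B.3--B.5} being immediate from $\sum_{i}\alpha(i)=\infty$, $\sum_{i}\alpha^{2}(i)<\infty$ and the bounded second moment of Assumption~\textbf{D.4}. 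The matching $L^{2}$ statement comes from the scalar recursion $\phi(i+1)\leq(1-\alpha(i))\phi(i)+C\alpha^{2}(i)$ for $\phi(i)=\mathbb{E}_{\mathbf{\theta}^{\ast}}\|\widetilde{\mathbf{x}}^{\circ}_{\text{avg}}(i)-h(\mathbf{\theta}^{\ast})\|^{2}$ together with a standard deterministic recursion lemma.

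The disagreement component is the heart of the argument, and the obstacle is exactly the \emph{biased} perturbation noted in this subsection: projecting (\ref{nlu04}) gives $\widetilde{\mathbf{x}}^{\circ}_{\mathcal{C}^{\perp}}(i+1)=W(i)\widetilde{\mathbf{x}}^{\circ}_{\mathcal{C}^{\perp}}(i)+\alpha(i)J_{\perp}(i)$, where $W(i)=I-\beta(i)(\overline{L}\otimes I_{M})-\alpha(i)I$ and $J_{\perp}(i)=((I_{N}-P_{N})\otimes I_{M})J(\mathbf{z}(i))$ has mean $\mathbf{b}=((I_{N}-P_{N})\otimes I_{M})M(\mathbf{1}_{N}\otimes\mathbf{\theta}^{\ast})$, which is in general nonzero because the $h_{n}(\mathbf{\theta}^{\ast})$ differ across sensors. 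My plan is to decompose $\widetilde{\mathbf{x}}^{\circ}_{\mathcal{C}^{\perp}}(i)=\mathbf{m}(i)+\mathbf{e}(i)$ into its deterministic mean and a zero-mean fluctuation. By linearity $\mathbf{m}(i+1)=W(i)\mathbf{m}(i)+\alpha(i)\mathbf{b}$; since $\|W(i)\|\leq 1-\beta(i)\lambda_{2}(\overline{L})$ on $\mathcal{C}^{\perp}$ for $i$ large, one has $\|\mathbf{m}(i+1)\|\leq(1-\beta(i)\lambda_{2}(\overline{L}))\|\mathbf{m}(i)\|+\alpha(i)\|\mathbf{b}\|$, and because $\sum_{i}\beta(i)=\infty$ while $\alpha(i)/\beta(i)=\Theta(i^{\tau_{2}-\tau_{1}})\to 0$ (this is precisely where the time-scale separation $\tau_{1}>\tau_{2}$ damps the bias), the deterministic lemma forces $\mathbf{m}(i)\to\mathbf{0}$. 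The fluctuation obeys $\mathbf{e}(i+1)=W(i)\mathbf{e}(i)+\alpha(i)(J_{\perp}(i)-\mathbf{b})$ with an unbiased driving term independent of $\mathcal{F}_{i}$; conditioning gives $\mathbb{E}[\|\mathbf{e}(i+1)\|^{2}\mid\mathcal{F}_{i}]\leq(1-\beta(i)\lambda_{2}(\overline{L}))\|\mathbf{e}(i)\|^{2}+C\alpha^{2}(i)$, and since $\sum_{i}\alpha^{2}(i)<\infty$ the Robbins--Siegmund theorem yields both a.s.\ convergence of $\|\mathbf{e}(i)\|^{2}$ and $\sum_{i}\beta(i)\|\mathbf{e}(i)\|^{2}<\infty$; as $\sum_{i}\beta(i)=\infty$ the latter forces $\mathbf{e}(i)\to\mathbf{0}$ a.s., while taking expectations in the same inequality and invoking the deterministic lemma gives $\mathbb{E}\|\mathbf{e}(i)\|^{2}\to 0$.

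Assembling the pieces, $\widetilde{\mathbf{x}}^{\circ}(i)=\mathbf{1}_{N}\otimes\widetilde{\mathbf{x}}^{\circ}_{\text{avg}}(i)+\mathbf{m}(i)+\mathbf{e}(i)$, and orthogonality of the $\mathcal{C}$ and $\mathcal{C}^{\perp}$ parts turns the three limits into (\ref{circcons1}) a.s.\ and (\ref{circcons200}) in $L^{2}$. The main difficulty I anticipate is the nonvanishing bias $\mathbf{b}$; the entire argument is organized so that this bias is confined to the deterministic mean recursion, where the separation of time-scales $\alpha(i)/\beta(i)\to 0$ neutralizes it, leaving only an unbiased recursion amenable to standard supermartingale arguments.
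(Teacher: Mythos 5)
Your proposal is correct, and its first half (the average component) coincides with the paper's own Lemma~\ref{nlulm}: stochastic approximation via Theorem~\ref{RM} with the quadratic Lyapunov function, plus a scalar expectation recursion for the $\mathcal{L}_{2}$ claim. Where you genuinely diverge is in the treatment of the disagreement component, i.e., of the biased driving term $\alpha(i)J_{\perp}(i)$. The paper does \emph{not} separate mean from fluctuation: it invokes the $(2+\epsilon_{1})$-moment Assumption~\textbf{D.4} together with Chebyshev and Borel--Cantelli to get the pathwise growth bound that the disagreement part of $J(\mathbf{z}(i))$ is $o\left(i^{\frac{1}{2+\epsilon_{1}}+\delta}\right)$ a.s., and then runs a deterministic, sample-path comparison through Lemma~\ref{prop_alpha} with $\delta_{1}=\tau_{2}$ and $\delta_{2}=\tau_{1}-\frac{1}{2+\epsilon_{1}}-\delta$; this is precisely why \textbf{D.5} demands the gap $\tau_{1}>\tau_{2}+\frac{1}{2+\epsilon_{1}}$. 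You instead confine the bias $\mathbf{b}$ to a deterministic recursion, which Lemma~\ref{prop_alpha} kills using only $\delta_{1}=\tau_{2}<\delta_{2}=\tau_{1}$, and you control the centered fluctuation $\mathbf{e}(i)$ by a conditional-variance recursion (the cross term vanishes because $J_{\perp}(i)-\mathbf{b}$ is zero-mean and independent of $\mathcal{F}_{i}$) plus the Robbins--Siegmund almost-supermartingale theorem, exploiting $\sum_{i}\alpha^{2}(i)<\infty$ and $\sum_{i}\beta(i)=\infty$. Your route buys something real: it needs only second moments of $J(\mathbf{z}(i))$ (the quantity $\kappa_{2}(\mathbf{\theta})$ in eqn.~(\ref{nlu00001})) rather than the $2+\epsilon_{1}$ moments of \textbf{D.4}, and only the weaker time-scale condition $\tau_{1}>\tau_{2}$ rather than $\tau_{1}>\tau_{2}+\frac{1}{2+\epsilon_{1}}$, so for this lemma your hypotheses are strictly weaker. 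What the paper's heavier machinery buys is self-containedness (only Borel--Cantelli and its own Lemma~\ref{prop_alpha}, no external almost-supermartingale result) and a pathwise bound on the perturbation in a form that matches the sample-path comparison style it reuses afterwards (e.g., in the proof of Theorem~\ref{theorem_tilde}). Two small inaccuracies in your write-up, neither affecting correctness: the a.s. convergence of the average rests on the moment condition of Assumption~\textbf{D.3}, eqn.~(\ref{nlprobform4}), not on \textbf{D.4}; and the contraction bound $\left\|W(i)\right\|\leq 1-\beta(i)\lambda_{2}(\overline{L})$ on $\mathcal{C}^{\perp}$ holds only for $i$ large enough that $\beta(i)\lambda_{N}(\overline{L})+\alpha(i)\leq 1$, which should be stated before iterating the recursions.
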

Lemma~\ref{circcons} says that the sequence
$\left\{\widetilde{\mathbf{x}}^{\circ}(i)\right\}_{i\geq 0}$
converges a.s. and in $\mathcal{L}_{2}$ to $\mathbf{1}_{N}\otimes
h(\mathbf{\theta}^{\ast})$. For proving Lemma~\ref{circcons} we
first consider the corresponding average sequence
$\{\widetilde{\mathbf{x}}^{\circ}_{\mbox{\scriptsize{avg}}}(i)\}_{i\geq
0}$, see \eqref{nlu6}. For the sequence
$\{\widetilde{\mathbf{x}}^{\circ}_{\mbox{\scriptsize{avg}}}(i)\}_{i\geq
0}$, we can invoke stochastic approximation algorithms to prove
that it converges a.s. and in $\mathcal{L}_{2}$ to
$h(\mathbf{\theta}^{\ast})$. This is carried out in
Lemma~\ref{nlulm}, which we state now.
\begin{lemma}
\label{nlulm} Consider the sequence,
$\left\{\widetilde{\mathbf{x}}^{\circ}_{\mbox{\scriptsize{avg}}}(i)\right\}_{i\geq
0}$, given by \eqref{nlu6}, under the
Assumptions~\textbf{(D.1)-(D.5)}. Then,
\begin{eqnarray}
\label{nlulm1}
\mathbb{P}_{\mathbf{\theta}^{\ast}}\left[\lim_{i\rightarrow\infty} \widetilde{\mathbf{x}}^{\circ}_{\mbox{\scriptsize{avg}}}(i)= h(\mathbf{\theta}^{\ast})\right]&=&1
\\
\label{nlulm2}
\lim_{i\rightarrow\infty}\mathbb{E}_{\mathbf{\theta}^{\ast}}
\left[\left\|\widetilde{\mathbf{x}}^{\circ}_{\mbox{\scriptsize{avg}}}(i)
-h(\mathbf{\theta}^{\ast})\right\|^{2}\right]&=&0
\end{eqnarray}
\end{lemma}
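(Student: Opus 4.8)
The plan is to exploit the fact that, although the full recursion~(\ref{nlu04}) couples all sensors through the consensus term $\beta(i)(\overline{L}\otimes I_{M})\widetilde{\mathbf{x}}^{\circ}(i)$, this coupling disappears entirely upon network averaging, reducing the problem to a standard Robbins--Monro recursion to which Theorem~\ref{RM} applies. Concretely, I would left-multiply both sides of~(\ref{nlu04}) by $\frac{1}{N}(\mathbf{1}_{N}\otimes I_{M})^{T}$. Using $(\mathbf{1}_{N}\otimes I_{M})^{T}(\overline{L}\otimes I_{M})=(\mathbf{1}_{N}^{T}\overline{L})\otimes I_{M}=\mathbf{0}$, which holds because $\overline{L}$ is a mean of Laplacians and hence has zero row and column sums, the consensus term vanishes identically. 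Writing $\mathbf{y}(i)=\widetilde{\mathbf{x}}^{\circ}_{\mbox{\scriptsize{avg}}}(i)$ and $J_{\mbox{\scriptsize{avg}}}(\mathbf{z}(i))=\frac{1}{N}(\mathbf{1}_{N}\otimes I_{M})^{T}J(\mathbf{z}(i))=\frac{1}{N}\sum_{n=1}^{N}g_{n}(\mathbf{z}_{n}(i))$, and splitting $\mathbf{y}(i)-J_{\mbox{\scriptsize{avg}}}(\mathbf{z}(i))$ around $h(\mathbf{\theta}^{\ast})$, the averaged recursion becomes
\begin{equation}
\mathbf{y}(i+1)=\mathbf{y}(i)+\alpha(i)\left[-\left(\mathbf{y}(i)-h(\mathbf{\theta}^{\ast})\right)+\left(J_{\mbox{\scriptsize{avg}}}(\mathbf{z}(i))-h(\mathbf{\theta}^{\ast})\right)\right].
\end{equation}
By the separably estimable definition~(\ref{nlprobform2}), $\mathbb{E}_{\mathbf{\theta}^{\ast}}[J_{\mbox{\scriptsize{avg}}}(\mathbf{z}(i))]=h(\mathbf{\theta}^{\ast})$, so this is exactly the form~(\ref{RM:1}) with $R(\mathbf{y})=-(\mathbf{y}-h(\mathbf{\theta}^{\ast}))$ and zero-mean i.i.d.\ forcing $\Gamma(i+1,\omega)=J_{\mbox{\scriptsize{avg}}}(\mathbf{z}(i))-h(\mathbf{\theta}^{\ast})$.

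For the almost sure statement~(\ref{nlulm1}) I would verify Assumptions~\textbf{B.1--B.5} at the target $\mathbf{x}^{\ast}=h(\mathbf{\theta}^{\ast})$ with the quadratic Lyapunov function $V(\mathbf{y})=\|\mathbf{y}-h(\mathbf{\theta}^{\ast})\|^{2}$. Measurability and the martingale-difference property (\textbf{B.1}, \textbf{B.2}) follow from the temporal independence of $\{\mathbf{z}(i)\}$ under Assumption~\textbf{D.3} together with $\mathbb{E}_{\mathbf{\theta}^{\ast}}[\Gamma(i+1,\omega)]=\mathbf{0}$. With this $V$ one computes $(R(\mathbf{y}),V_{\mathbf{y}}(\mathbf{y}))=-2\|\mathbf{y}-h(\mathbf{\theta}^{\ast})\|^{2}$, bounded above by $-2\epsilon^{2}$ on $\{\epsilon<\|\mathbf{y}-h(\mathbf{\theta}^{\ast})\|<1/\epsilon\}$, giving \textbf{B.3}; and $\|R(\mathbf{y})\|^{2}+\mathbb{E}[\|\Gamma(i+1,\omega)\|^{2}]=V(\mathbf{y})+\eta(\mathbf{\theta}^{\ast})$, where $\eta(\mathbf{\theta}^{\ast})<\infty$ is the finite second moment guaranteed by~(\ref{nlprobform4}), so \textbf{B.4} holds for any $k_{1}\geq\max(1,\eta(\mathbf{\theta}^{\ast}))$ with $k_{2}=0$. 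Finally \textbf{B.5} is immediate since $\alpha(i)=a/(i+1)^{\tau_{1}}$ with $.5<\tau_{1}\leq1$ by Assumption~\textbf{D.5}. Theorem~\ref{RM} then yields $\mathbf{y}(i)\rightarrow h(\mathbf{\theta}^{\ast})$ almost surely.

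The mean-square statement~(\ref{nlulm2}) does \emph{not} follow from Theorem~\ref{RM}, which delivers only almost sure convergence (and, under \textbf{C.1--C.5}, asymptotic normality), and I expect this to be the one place requiring a genuinely separate argument. I would set $u(i)=\mathbb{E}_{\mathbf{\theta}^{\ast}}[\|\mathbf{y}(i)-h(\mathbf{\theta}^{\ast})\|^{2}]$ and, squaring the error recursion $\mathbf{y}(i+1)-h(\mathbf{\theta}^{\ast})=(1-\alpha(i))(\mathbf{y}(i)-h(\mathbf{\theta}^{\ast}))+\alpha(i)\Gamma(i+1,\omega)$ and using that $\Gamma(i+1,\omega)$ is zero-mean and independent of the past (so the cross term drops after conditioning on $\mathcal{F}_{i}$), obtain
\begin{equation}
u(i+1)=(1-\alpha(i))^{2}u(i)+\alpha(i)^{2}\eta(\mathbf{\theta}^{\ast}).
\end{equation}
For $i$ large enough that $\alpha(i)\leq1$ one has $(1-\alpha(i))^{2}\leq1-\alpha(i)$, whence $u(i+1)\leq(1-\alpha(i))u(i)+\alpha(i)^{2}\eta(\mathbf{\theta}^{\ast})$; since $\sum_{i}\alpha(i)=\infty$ and $\sum_{i}\alpha^{2}(i)<\infty$ (Assumption~\textbf{D.5}), a standard deterministic recursion lemma forces $u(i)\rightarrow0$, establishing~(\ref{nlulm2}). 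The remaining subtleties are purely bookkeeping — identifying $J_{\mbox{\scriptsize{avg}}}$ as the empirical counterpart of $h(\mathbf{\theta}^{\ast})$ and invoking the correct moment bound~(\ref{nlprobform4}) — so the decoupling via $\mathbf{1}_{N}^{T}\overline{L}=\mathbf{0}$ is really the crux that renders this lemma routine. It is precisely this averaged, consensus-free limit that the harder later steps (the a.s.\ consensus of $\widetilde{\mathbf{x}}^{\circ}(i)$ in Lemma~\ref{circcons} and the comparison with $\widetilde{\mathbf{x}}(i)$) will build upon.
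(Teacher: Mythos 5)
Your proposal is correct and follows essentially the same route as the paper: average the recursion so the Laplacian term vanishes via $\mathbf{1}_{N}^{T}\overline{L}=\mathbf{0}$, verify Assumptions~\textbf{B.1--B.5} of Theorem~\ref{RM} with $V(\mathbf{y})=\left\|\mathbf{y}-h(\mathbf{\theta}^{\ast})\right\|^{2}$ to get the a.s.\ claim, and then close the mean-square claim with the scalar recursion $u(i+1)\leq(1-\alpha(i))u(i)+\alpha^{2}(i)\eta(\mathbf{\theta}^{\ast})$ (your ``standard deterministic recursion lemma'' is exactly the paper's Lemma~\ref{prop_alpha} applied with $\delta_{1}=\tau_{1}$, $\delta_{2}=2\tau_{1}$). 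The only cosmetic discrepancy is your taking $k_{2}=0$ in \textbf{B.4}, which formally requires $k_{2}>0$; since $-\left(R(\mathbf{y}),V_{\mathbf{y}}(\mathbf{y})\right)=2\left\|\mathbf{y}-h(\mathbf{\theta}^{\ast})\right\|^{2}\geq 0$, the bound holds for any $k_{2}>0$ as well, which is precisely the step the paper makes explicit.
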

%
The arguments in Lemmas~\ref{nlulm},\ref{circcons} and subsequent
results require the following property of real number sequences,
which we state here (see Appendix~\ref{proof_prop_alpha} for proof.)
\begin{lemma}
\label{prop_alpha} Let the sequences
$\left\{r_{1}(i)\right\}_{i\geq 0}$ and
$\left\{r_{2}(i)\right\}_{i\geq 0}$ be given by
\begin{equation}
\label{prop_alpha01}
r_{1}(i)=\frac{a_{1}}{(i+1)^{\delta_{1}}},~~r_{2}(i)=\frac{a_{2}}{(i+1)^{\delta_{2}}}
\end{equation}
where $a_{1},a_{2},\delta_{2}\geq 0$ and $0\leq\delta_{1}\leq 1$.
Then, if $\delta_{1}=\delta_{2}$, there exists $B>0$, such that,
for sufficiently large non-negative integers, $j<i$,
\begin{equation}
\label{prop_alpha1} 0\leq \sum_{k=j}^{i-1}\left[\left(
\prod_{l=k+1}^{i-1}\left(1-r_{1}(l)\right)\right)r_{2}(k)\right]\leq
B
\end{equation}
Moreover, the constant $B$ can be chosen independently of $i,j$.
Also, if $\delta_{1}<\delta_{2}$, then, for arbitrary fixed $j$,
\begin{equation}
\label{prop_alpha2} \lim_{i\rightarrow\infty}\sum_{k=j}^{i-1}
\left[\left(\prod_{l=k+1}^{i-1}
\left(1-r_{1}(l)\right)\right)r_{2}(k)\right]=0
\end{equation}

%
(We use the convention that,
$\prod_{l=k+1}^{i-1}\left(1-r_{l}\right)=1$, for $k=i-1$.)
\end{lemma}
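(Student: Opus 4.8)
The plan is to reduce both claims to a single telescoping identity and then treat the two regimes separately. First I would record, for $l$ large enough that $r_1(l)<1$ (automatic once $\delta_1>0$, the relevant regime, since $r_1(l)=a_1/(l+1)^{\delta_1}\to 0$), the algebraic identity
\begin{equation*}
r_1(k)\prod_{l=k+1}^{i-1}\bigl(1-r_1(l)\bigr)=\prod_{l=k+1}^{i-1}\bigl(1-r_1(l)\bigr)-\prod_{l=k}^{i-1}\bigl(1-r_1(l)\bigr),
\end{equation*}
which follows by factoring out the common product and using $1-(1-r_1(k))=r_1(k)$. Summing over $k$ from $j$ to $i-1$ telescopes, and invoking the stated empty-product convention at $k=i-1$, to
\begin{equation*}
\sum_{k=j}^{i-1}r_1(k)\prod_{l=k+1}^{i-1}\bigl(1-r_1(l)\bigr)=1-\prod_{l=j}^{i-1}\bigl(1-r_1(l)\bigr).
\end{equation*}

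For the case $\delta_1=\delta_2$, I would observe that $r_2(k)=(a_2/a_1)\,r_1(k)$, so the sum of interest equals $(a_2/a_1)\bigl[1-\prod_{l=j}^{i-1}(1-r_1(l))\bigr]$. Since each factor $1-r_1(l)$ lies in $[0,1]$ for $j$ large, the product lies in $[0,1]$; this yields both nonnegativity and the uniform upper bound $B=a_2/a_1$, independent of $i$ and $j$, as required.

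For the case $\delta_1<\delta_2$, the idea is a Toeplitz-type averaging argument. Write $r_2(k)=c_k\,r_1(k)$ with $c_k=(a_2/a_1)(k+1)^{\delta_1-\delta_2}$, so that $c_k\to 0$ because $\delta_1<\delta_2$. Setting $w_k=r_1(k)\prod_{l=k+1}^{i-1}(1-r_1(l))\ge 0$, the telescoping identity gives $\sum_k w_k\le 1$, so the quantity is a sub-probability weighted average $\sum_k w_k c_k$. Fixing $\epsilon>0$, I would choose $K$ with $c_k<\epsilon$ for $k\ge K$ and split at $K$: the tail contributes at most $\epsilon\sum_k w_k\le\epsilon$, while the head is a fixed finite number of terms, each tending to $0$ as $i\to\infty$ because $\prod_{l=k+1}^{i-1}(1-r_1(l))\to 0$. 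Letting $i\to\infty$ and then $\epsilon\downarrow 0$ gives the claimed limit.

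The main obstacle is this case-2 argument; case 1 is immediate once the telescoping identity is in hand. The delicate point is (i) justifying that $\prod_{l=k+1}^{i-1}(1-r_1(l))\to 0$ as $i\to\infty$ for each fixed $k$, which I would obtain from the elementary bound $\prod(1-r_1(l))\le\exp\bigl(-\sum r_1(l)\bigr)$ together with the divergence $\sum_l r_1(l)=\infty$ guaranteed precisely by $\delta_1\le 1$; and (ii) organizing the head/tail split. A minor technical wrinkle for an arbitrary fixed $j$ is that a few initial factors $1-r_1(l)$ may be negative when $a_1$ is large; I would dispose of this by peeling off the finite initial segment on which $r_1(l)\ge 1$, noting each such term still tends to $0$ as $i\to\infty$, and applying the averaging argument to the remaining tail.
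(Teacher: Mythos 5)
Your proof is correct, but it takes a genuinely different route from the paper's. Your key device is the exact telescoping identity
\begin{equation*}
\sum_{k=j}^{i-1} r_1(k)\prod_{l=k+1}^{i-1}\bigl(1-r_1(l)\bigr) \;=\; 1-\prod_{l=j}^{i-1}\bigl(1-r_1(l)\bigr),
\end{equation*}
which makes the case $\delta_1=\delta_2$ an identity (the sum equals $\tfrac{a_2}{a_1}\bigl[1-\prod_{l=j}^{i-1}(1-r_1(l))\bigr]$, giving the explicit, $i,j$-independent constant $B=a_2/a_1$), and turns the case $\delta_1<\delta_2$ into a sub-probability weighted average of $c_k=r_2(k)/r_1(k)\to 0$, disposed of by a Toeplitz-type head/tail split together with the divergence $\sum_l r_1(l)=\infty$ guaranteed by $\delta_1\le 1$. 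The paper instead bounds $1-a\le e^{-a}$, compares the resulting sums with Riemann integrals, and invokes the fundamental theorem of calculus to control the ratio of exponential integrals, with a separate case analysis for $\delta_1<1$ and $\delta_1=1$ (and further subcases $a_1\ge\delta_2$, $a_1<\delta_2$ in the latter). Your argument is shorter, avoids the case split on $\delta_1$ entirely, and yields a cleaner constant; what the paper's heavier integral estimates buy is quantitative information about the size of the sum (e.g., the leading $O\bigl((i+1)^{-\delta_2}\bigr)$ term in its bound), which a soft $\epsilon$-argument discards and which could matter if one wanted rates rather than the purely qualitative conclusions of the lemma. One caveat is common to both proofs: each implicitly assumes $a_1>0$ and that $r_1(i)<1$ eventually (i.e., effectively $\delta_1>0$, or $a_1\le 1$); you flag this explicitly, and the paper assumes it when it picks $j$ with $r_1(i)\le 1$ for all $i\ge j$, so this is not a gap relative to the paper's own standard.
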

We note that Lemma~\ref{prop_alpha} essentially studies stability
of time-varying deterministic scalar recursions of the form:
\begin{equation}
\label{step1.1} y(i+1)=r_{1}(i)y(i)+r_{2}(i)
\end{equation}
where $\{y(i)\}_{i\geq 0}$ is a scalar sequence evolving according
to \eqref{step1.1} with $y(0)=0$, and the sequences
$\left\{r_{1}(i)\right\}_{i\geq 0}$ and
$\left\{r_{2}(i)\right\}_{i\geq 0}$ are given by
\eqref{prop_alpha01}.\\

\item{\textbf{(II)}}: In this step, we study the convergence
properties of the sequence $\left\{\widehat{\mathbf{x}}(i)
\right\}_{i\geq 0}$, see \eqref{hatcons}, for which we
establish the following result.
\begin{lemma}
\label{hatconslemma} Consider the sequence
$\left\{\widehat{\mathbf{x}}(i)\right\}_{i\geq 0}$ given by
\eqref{hatcons} under the Assumptions~\textbf{(D.1)-(D.5)}. We
have
\begin{eqnarray}
\label{hatcons1}
\mathbb{P}_{\mathbf{\theta}^{\ast}}\left[\lim_{i\rightarrow\infty}\widehat{\mathbf{x}}(i)=\mathbf{1}_{N}\otimes
h(\mathbf{\theta}^{\ast})\right]&=&1
\\
\label{hatcons200}
\lim_{i\rightarrow\infty}\mathbb{E}_{\mathbf{\theta}^{\ast}}
\left[\left\|\widehat{\mathbf{x}}(i)-\mathbf{1}_{N}\otimes
h(\mathbf{\theta}^{\ast})\right\|^{2}\right]&=&0
\end{eqnarray}
\end{lemma}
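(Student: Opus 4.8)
The plan is to obtain Lemma~\ref{hatconslemma} by comparison with the link-failure-free, quantization-free sequence $\{\widetilde{\mathbf{x}}^{\circ}(i)\}$ of eqn.~(\ref{nlu04}), whose limit is already identified in Lemma~\ref{circcons}. Since both recursions~(\ref{hatcons}) and~(\ref{nlu04}) use the \emph{average} Laplacian $\overline{L}$ and share the same observation term $J(\mathbf{z}(i))$, the difference $\mathbf{e}(i)=\widehat{\mathbf{x}}(i)-\widetilde{\mathbf{x}}^{\circ}(i)$ obeys the purely linear recursion $\mathbf{e}(i+1)=A(i)\mathbf{e}(i)-\beta(i)(\mathbf{\Upsilon}(i)+\mathbf{\Psi}(i))$, where $A(i)=I_{NM}-\beta(i)(\overline{L}\otimes I_{M})-\alpha(i)I_{NM}$, driven \emph{only} by the dithered-quantization noise, with $\mathbf{e}(0)=\mathbf{0}$. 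Thus it suffices to show $\mathbf{e}(i)\to\mathbf{0}$ both a.s. and in $\mathcal{L}_{2}$, after which the triangle inequality together with Lemma~\ref{circcons} yields eqns.~(\ref{hatcons1}),~(\ref{hatcons200}).

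The engine of the argument would be a one-step estimate for $V(i)=\|\mathbf{e}(i)\|^{2}$. Because $\overline{L}$ is symmetric and $\beta(i)\lambda_{N}(\overline{L})+\alpha(i)\to 0$, for all large $i$ the matrix $A(i)$ is symmetric positive semidefinite with spectral norm $\|A(i)\|=1-\alpha(i)$ (the weakly-contracting consensus direction, where $\lambda_{1}(\overline{L})=0$, dominates over the orthogonal directions $1-\beta(i)\lambda_{j}(\overline{L})-\alpha(i)$). Conditioning on $\mathcal{F}_{i}$ and using that $\mathbf{\Upsilon}(i)+\mathbf{\Psi}(i)$ is zero-mean (eqn.~(\ref{dith1})) and independent of $\mathcal{F}_{i}$, the cross term vanishes and, with the variance bound $\eta_{q}$ of eqn.~(\ref{dith40}), I would get $\mathbb{E}[V(i+1)\mid\mathcal{F}_{i}]\le (1-\alpha(i))^{2}V(i)+\beta^{2}(i)\eta_{q}\le (1-\alpha(i))V(i)+\beta^{2}(i)\eta_{q}$ for $i$ large.

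From this inequality the a.s. statement would follow from the Robbins--Siegmund supermartingale lemma: writing the bound as $\mathbb{E}[V(i+1)\mid\mathcal{F}_{i}]\le V(i)-\alpha(i)V(i)+\beta^{2}(i)\eta_{q}$ and noting $\sum_{i}\beta^{2}(i)<\infty$ (since $\tau_{2}>1/2$), I conclude that $V(i)$ converges a.s. to a finite limit $V_{\infty}$ and that $\sum_{i}\alpha(i)V(i)<\infty$ a.s.; since $\sum_{i}\alpha(i)=\infty$ (as $\tau_{1}\le 1$), the limit must satisfy $V_{\infty}=0$ a.s., i.e.\ $\mathbf{e}(i)\to\mathbf{0}$ a.s. For $\mathcal{L}_{2}$ I would take expectations to obtain the deterministic scalar recursion $v(i+1)\le (1-\alpha(i))v(i)+\beta^{2}(i)\eta_{q}$ for $v(i)=\mathbb{E}[V(i)]$, $v(0)=0$; this is exactly of the form studied in Lemma~\ref{prop_alpha}, with $r_{1}\sim\alpha$ (rate $\tau_{1}$) and $r_{2}\sim\beta^{2}$ (rate $2\tau_{2}$), so the two-time-scale condition $2\tau_{2}>\tau_{1}$ of Assumption~\textbf{D.5} (eqn.~(\ref{nlu01})) places us in the case $\delta_{1}<\delta_{2}$ of eqn.~(\ref{prop_alpha2}) and forces $v(i)\to 0$.

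The step I expect to be most delicate is precisely the control of the quantization noise, which enters scaled by the \emph{consensus} weight $\beta(i)$ that decays \emph{slower} than the observation weight $\alpha(i)$ governing the damping of $\mathbf{e}(i)$; a naive bound would suggest the noise could accumulate. The resolution is that the noise contributes only at the quadratic order $\beta^{2}(i)$ in the Lyapunov increment, while the stabilization is of order $\alpha(i)$, and the inequality $2\tau_{2}>\tau_{1}$ is exactly the condition under which $\sum_{k}\beta^{2}(k)\prod_{l>k}(1-\alpha(l))\to 0$. Making this rigorous via Lemma~\ref{prop_alpha}, and checking that the lower-order terms hidden in $(1-\alpha(l))^{2}$ and in the orthogonal eigenvalues $1-\beta(l)\lambda_{j}(\overline{L})-\alpha(l)$ do not spoil the rate comparison, is where the care is needed. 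Note that Assumption~\textbf{D.6} is not invoked here, consistent with the lemma being stated entirely in the transformed domain.
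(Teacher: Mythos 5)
Your proposal is correct, and its skeleton---comparing $\widehat{\mathbf{x}}(i)$ with the auxiliary sequence $\widetilde{\mathbf{x}}^{\circ}(i)$ of eqn.~(\ref{nlu04}), showing the difference $\mathbf{e}(i)=\widehat{\mathbf{x}}(i)-\widetilde{\mathbf{x}}^{\circ}(i)$ vanishes, then finishing with Lemma~\ref{circcons} and the triangle inequality---is exactly the paper's; your $\mathcal{L}_{2}$ argument (the recursion $\mathbb{E}[\|\mathbf{e}(i+1)\|^{2}]\le(1-\alpha(i))\mathbb{E}[\|\mathbf{e}(i)\|^{2}]+\eta_{q}\beta^{2}(i)$ followed by Lemma~\ref{prop_alpha} with $\delta_{1}=\tau_{1}$, $\delta_{2}=2\tau_{2}$, using $2\tau_{2}>\tau_{1}$) coincides with eqns.~(\ref{hatcons5})--(\ref{hatcons7}). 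Where you genuinely depart from the paper is the almost-sure part. The paper argues in two steps: it writes $\mathbf{e}(i)$ explicitly as a triangular-array weighted sum $\sum_{k}A_{i,k}\mathbf{\Upsilon}(k)+\sum_{k}A_{i,k}\mathbf{\Psi}(k)$ of independent zero-mean vectors, invokes limit theory for weighted sums of independent random vectors (\cite{Chow,ChowLai,Stout}) to conclude that $\mathbf{e}(i)$ converges a.s. to \emph{some} finite random vector, and only then identifies that limit as $\mathbf{0}$ by uniqueness of the a.s. and $\mathcal{L}_{2}$ limits---so its a.s. conclusion leans on the $\mathcal{L}_{2}$ result. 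You instead apply Robbins--Siegmund to $V(i)=\|\mathbf{e}(i)\|^{2}$: the conditional inequality $\mathbb{E}[V(i+1)\,|\,\mathcal{F}_{i}]\le V(i)-\alpha(i)V(i)+\eta_{q}\beta^{2}(i)$ (valid because $\mathbf{e}(i)$ is $\mathcal{F}_{i}$-measurable while $\mathbf{\Upsilon}(i)+\mathbf{\Psi}(i)$ is zero-mean, independent of $\mathcal{F}_{i}$, with second moment at most $\eta_{q}$) together with $\sum_{i}\beta^{2}(i)<\infty$ yields a.s. convergence of $V(i)$ \emph{and} $\sum_{i}\alpha(i)V(i)<\infty$ a.s., whence $\sum_{i}\alpha(i)=\infty$ forces the limit to be zero. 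This buys you something real: the zero limit is obtained in one stroke, without the triangular-array convergence subtleties the paper's citation absorbs and without routing the a.s. claim through the $\mathcal{L}_{2}$ claim. The only point to make explicit is that the supermartingale inequality is applied from a deterministic index $i_{0}$ onward, chosen so that $\alpha(i)\le 1$ and $\left\|I_{NM}-\beta(i)\left(\overline{L}\otimes I_{M}\right)-\alpha(i)I_{NM}\right\|=1-\alpha(i)$ hold (the paper's eqn.~(\ref{hatcons3})), which you already flag.
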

The proof of Lemma~\ref{hatconslemma} is given in Appendix~\ref{proof_hatconslemma},
and mainly consists of a comparison argument involving the
sequences
$\left\{\widetilde{\mathbf{x}}^{\circ}_{\mbox{\scriptsize{avg}}}(i)
\right\}_{i\geq 0}$ and $\left\{\widehat{\mathbf{x}}(i)
\right\}_{i\geq 0}$.\\
\item{\textbf{(III)}}: This is the final step in the proofs of
Theorems~\ref{theorem_tilde},\ref{theorem_untrans}. The proof of
Theorem~\ref{theorem_tilde} consists of a comparison argument
between the sequences $\left\{\widehat{\mathbf{x}}(i)
\right\}_{i\geq 0}$ and $\left\{\widetilde{\mathbf{x}}(i)
\right\}_{i\geq 0}$, which is detailed in Appendix~\ref{proof_theorem_tilde}. The proof of Theorem~\ref{theorem_untrans}, also detailed in Appendix~\ref{proof_theorem_tilde}, is a consequence of Theorem~\ref{theorem_tilde} and the Assumptions.
\end{itemize}

\subsection{Application: Distributed Static Phase Estimation in Smart Grids}
\label{NLU-power} In this subsection we show that our development of the $\mathcal{NLU}$ for separably estimable observation models leads to a completely distributed solution of the static phase estimation problem in smart grids. We briefly review the application scenario in the following, for a more complete treatment of the classical problem of static phase estimation in power grids the reader is referred to one of the many existing excellent textbooks, for example,~\cite{Ilic-Zaborsky}. For our purpose, we may assume the power grid to be a physical network of $N$ generators and loads (hereafter called nodes), interconnected through transmission lines. The physical grid may then be modeled as a network $G_{p}=(V,E_{p})$, where\footnote{Note that the physical connections $E_{p}$ and the inter-node communication links $E$ (to be used by the distributed information processing algorithms) are, in general, different. This is a common feature of cyberphysical architectures, where a sensor network is instrumented on top of an existing physical infrastructure. In the following, we will assume that each physical node is equipped with a sensor for information processing, although the inter-sensor communication topology may be different from the physical inter-node connections. Also, the terms nodes and sensors will be used synonymously in the sequel.} $E_{p}$ denotes the set of transmission lines or interconnections. The physical state of a node $n$ consists of the pair $(\mathcal{V}_{n},\theta_{n})$, denoting the voltage magnitude and the phase angle respectively. The real power flowing through the transmission line connecting nodes $n$ and $l$ is then given by (see~\cite{Thorp})
\begin{equation}
\label{power-flow}
\hspace{-.025cm}
\mathcal{P}_{nl}
\hspace{-.075cm}
=
\hspace{-.075cm}
\mathcal{V}_{n}^{2}a_{nl}
\hspace{-.075cm}
-
\hspace{-.075cm}
\mathcal{V}_{n}\mathcal{V}_{l}a_{nl}\cos(\theta_{nl})+\mathcal{V}_{n}\mathcal{V}_{l}b_{nl}\sin(\theta_{nl})
\end{equation}
where $a_{nl}+jb_{nl}$ is the complex line admittance and $\theta_{nl}=\theta_{n}-\theta_{l}$. In view of the physical network structure, the following assumptions on the physical grid are supposed to hold:
\begin{itemize}
\item[\textbf{(P.1)}] The physical grid, represented by the graph $(V,E_{p})$, is connected. This is reasonable, as the physical power grid is often studied with aggregated models that have dense interconnections, see, for example, the benchmark IEEE 30 bus and IEEE 118 bus systems (\cite{Thorp}).
\item[\textbf{(P.2)}] The real and imaginary parts, $a_{nl},b_{nl}$, of the line admittance are taken to be zero, if no direct physical connection (link) exists between the nodes $n$ and $l$. Similarly, if nodes $n$ and $l$ are connected by a transmission line, we assume both the components $a_{nl},b_{nl}$, of the line admittance to be non-zero. In particular, from~\eqref{power-flow}, the real power flow $\mathcal{P}_{nl}$ between nodes $n$ and $l$ is non-zero \emph{iff} there exists a physical transmission line connecting the nodes. Also, $\mathcal{V}_{n}\neq 0$ for all $n$.
\end{itemize}
In the following, we will assume that the node voltages are known constants and the unknown parameter of interest is the vector of node phases $\mathbf{\theta}=[\theta_{1},\cdots,\theta_{N}]^{T}$. This is justified by the common assumption of phase-voltage decoupling in power grids, where the voltage magnitude is generally seen to fluctuate at a much slower time-scale than the phase (see~\cite{Ilic-Zaborsky}). Also, to keep the exposition simple, we assume that the node phase differences are small, i.e., $\cos(\theta_{nl})\approx 1$, commonly used in the steady state grid operating regime (\cite{Ilic-Zaborsky}). With these simplifications, the real power flow in a transmission line connecting nodes $n$ and $l$ is approximately given by
\begin{equation}
\label{power-flow1}
\mathcal{P}_{nl}(\mathbf{\theta})=\mathcal{V}_{n}\mathcal{V}_{l}b_{nl}\sin(\theta_{nl})
\end{equation}
The goal of centralized static phase estimation is to estimate the unknown vector $\mathbf{\theta}$ of phases by using line flow data. Since, only relative quantities (phase differences) are involved in this problem, it is customary to assume (see~\cite{Thorp}) that one of the nodes is a slack (or reference) bus, whose phase is a known constant. W.l.o.g. we assume that node $N$ is the slack bus in our system, whose phase angle $\theta_{N}$ is a known constant. Hence, the effective parameter vector is $\mathbf{\theta}=[\theta_{1},\cdots,\theta_{N-1}]^{T}\in\mathbb{R}^{N-1}$. We now provide conditions for the distributed observation model~\eqref{power-flow4} to be separably estimable. We show that our $\mathcal{NLU}$ algorithm can be used to obtain a distributed solution to this problem, leading to a consistent estimate of $\mathbf{\theta}$ at each sensor. To setup the distributed observation model, let $E_{m}\subset E_{p}$ denote the set of physical transmission lines equipped with power flow measuring devices (usually some form of relays, see~\cite{Thorp}.) The successive power flow measurements, $\{z_{nl}(i)\}$ at the physical line $(n,l)$ (assuming $(n,l)\in E_{m}$) are then noisy versions of the power flow $\mathcal{P}_{nl}$, i.e.,
\begin{align}
\label{power-flow2}
z_{nl}(i)
&=\mathcal{P}_{nl}(\mathbf{\theta})+\zeta_{nl}(i)
\\
\nonumber
&=\mathcal{V}_{n}\mathcal{V}_{l}b_{nl}\sin(\theta_{nl})+\zeta_{nl}(i)
\end{align}
where, $\{\zeta_{nl}(i)\}$ is the zero mean i.i.d. measurement noise. For distributed information processing, we assume that the measurement sequence $\{z_{nl}(i)\}$ is forwarded to one of the adjacent nodes $n$ or $l$. As will be seen, the particular choice of $n$ or $l$ is not important, as long as it stays constant for all $i$. In general, denoting by $\Omega^{m}_{n}$ to be the set of physical neighbors of node $n$ w.r.t. the physical graph $(V,E_{m})$, let $\mathcal{M}_{n}\subset\Omega^{m}_{n}$ denote the set such that $l\in\mathcal{M}_{n}$ if the line flow data $\{z_{nl}(i)\}$ is forwarded to node $n$. The observation sequence, $\{\mathbf{z}_{n}(i)\}$, is then
\begin{equation}
\label{power-flow3}
\mathbf{z}_{n}(i)=\{z_{nl}(i),~\forall~l\in\mathcal{M}_{n}\}
\end{equation}
By~\eqref{power-flow2}-\eqref{power-flow3} and the development in Section~\ref{nonlin-obs-mod}, the above distributed observation model corresponds to the signal in additive noise type. Following the notation in Section~\ref{nonlin-obs-mod}, the observation process $\{\mathbf{z}_{n}(i)\}$ may be written as
\begin{equation}
\label{power-flow4}
\mathbf{z}_{n}(i)=f_{n}(\mathbf{\theta})+\mathbf{\zeta}_{n}(i)
\end{equation}
where $f_{n}:\mathcal{U}\longmapsto\mathbb{R}^{|\mathcal{M}_{n}|}$ and $\mathbf{\zeta}_{n}(i)$ are given by
\begin{align}
\label{power-flow5}
f_{n}(\mathbf{\theta})&=[\mathcal{V}_{n}\mathcal{V}_{l}b_{nl}
\sin(\theta_{nl}),~l\in\mathcal{M}_{n}]^{T}
\\
\nonumber
\mathbf{\zeta}_{n}(i)&=[\zeta_{nl},~l\in\mathcal{M}_{n}]^{T}.
\end{align}
We now provide conditions for the distributed observation model~\eqref{power-flow4} to be separably estimable.
\begin{proposition}
\label{prop-power-flow} Depending on the phase $\theta_{N}$ of the reference bus $N$ (as to whether $\theta_{N}\in [0,\pi/2)$ or $\theta_{N}\in [-\pi/4,\pi/4)$), let the parameter domain $\mathcal{U}\subset\mathbb{R}^{N-1}$ be either $\times_{n=1}^{N-1}[0,\pi/2)$ or $\times_{n=1}^{N-1}[-\pi/4,\pi/4)$. Define, $\forall\mathbf{\theta}\in\mathcal{U}$,  $f:\mathcal{U}\longmapsto\mathbb{R}^{\sum_{n=1}^{N}|\mathcal{M}_{n}|}$ by
\begin{equation}
\label{power-flow7}
f(\mathbf{\theta})
\hspace{-.1cm}
=
\hspace{-.1cm}
[f_{1}^{T}(\mathbf{\theta})\cdots f_{N}^{T}(\mathbf{\theta})]^{T}
\hspace{-.2cm}
=
\hspace{-.1cm}
[\mathcal{P}_{nl}(\mathbf{\theta}),(n,l)
\hspace{-.05cm}
\in
\hspace{-.05cm}
E_{m}]^{T}
\hspace{-.1cm}
.
\hspace{-.05cm}
\end{equation}
Then, if the graph $(V,E_{m})$ (with $E_{m}$ as the edge set) is connected and assumption~\textbf{(P.2)} holds, $f(\cdot)$ is invertible on $\mathcal{U}$ and the observation model~\eqref{power-flow4} is separably estimable.
\end{proposition}
Before proceeding to the proof we comment on Proposition~\ref{prop-power-flow}. The observation model in~\eqref{power-flow4} is of the signal in additive noise type and hence, by the development in Section~\ref{nonlin-obs-mod}, the invertibility of $f(\cdot)$ is equivalent to separable estimability and necessary for the consistency or observability of the centralized estimator also (see the text following Proposition~\ref{prop-sigadd}.) Proposition~\ref{prop-power-flow} shows that the invertibility of $f(\cdot)$ holds if the graph formed by the physical transmission links equipped with power flow measuring devices is connected.
\begin{proof}
The proof is based on a simple inductive argument. First, we note that the continuity of $f(\cdot)$ on $\mathcal{U}$ holds trivially. To establish the invertibility of $f(\cdot)$ on $\mathcal{U}$, it suffices to show that we can uniquely recover the value of $\mathbf{\theta}\in\mathcal{U}$ given the  value $f(\mathbf{\theta})$. To this end, assume that $f(\mathbf{\theta})$ is given. Recall the form of $f(\mathbf{\theta})$, as in~\eqref{power-flow7}. Since, $\theta_{N}$ is known, given $f(\mathbf{\theta})$, the components $\theta_{n}$, $n\in\Omega^{m}_{N}$ may be uniquely determined. Indeed, knowing $f(\mathbf{\theta})$ amounts to knowing the values of the quantities $\mathcal{P}_{n,N}$, $n\in\Omega^{m}_{N}$. Hence, under assumption~\textbf{(P.2)}, and the fact that $\mathbf{\theta}\in\mathcal{U}$, we can uniquely determine $\theta_{n}$, $n\in\Omega^{m}_{N}$. To continue the induction, define $\mathcal{J}_{1}\subset V$ by $\mathcal{J}_{1}=\Omega^{m}_{N}$. Once the components $\theta_{n}$, $n\in\mathcal{J}_{1}$ are known, by using similar reasoning, the components $\theta_{l}$, $l\in\Omega^{m}_{n}$ for each $n\in\mathcal{J}_{1}$ may be uniquely determined. Hence, in the second step, the set of known components $\mathcal{J}_{2}\subset V$ is
\[
\mathcal{J}_{2}=\{l\in\Omega_{n}^{m}~|~n\in\mathcal{J}_{1}\}
\]
Continuing the same recursion, the set of components known at the $k$-th step, $\mathcal{J}_{k}$ is given by
\[
\mathcal{J}_{k}=\{l\in\Omega_{n}^{m}~|~n\in\mathcal{J}_{k-1}\}
\]
 Note that $\mathcal{J}_{1}\subset\mathcal{J}_{2}\subset\cdots \subset\mathcal{J}_{k}\subset\cdots$. However, the number of nodes is finite; hence, the sets $\mathcal{J}_{k}$ cannot increase forever. Due to the connectivity of the graph $(V,E_{m})$ in a finite number of steps $k_{0}$ (at most equal to the diameter of the graph), the process will converge with $\mathcal{J}_{k_{0}}=V$. Hence, all components of $\mathbf{\theta}$ will be uniquely determined, establishing the invertibility of $f(\cdot)$.
\end{proof}
The following result demonstrates the applicability of the $\mathcal{NLU}$ to distributed consistent phase estimation in power grids. It follows from Theorem~\ref{theorem_untrans}.
\begin{theorem}
\label{th:power-flow}
Consider the power grid described above and let the observation process $\{\mathbf{z}_{n}(i)\}$ at the $n$-th node (sensor) be given by~\eqref{power-flow4}. Let the measurement noise process $\{\mathbf{\zeta}(i)\}$ satisfy the moment conditions~\textbf{(D.4)} and the physical grid conditions in the hypothesis of Proposition~\ref{prop-power-flow} hold. Suppose the nodes are instrumented with a communication architecture satisfying assumptions~\textbf{(D.2)-(D.3)}. Let $\{\mathbf{x}_{n}(i)\}$ be the estimate sequence of the vector $\mathbf{\theta}^{\ast}$ of phase angles generated at node $n$ by an instantiation of the $\mathcal{NLU}$ distributed parameter estimation algorithm under assumption~\textbf{(D.5)}. Then,
\begin{equation}
\label{power-flow8}
\mathbb{P}_{\mathbf{\theta}^{\ast}}\left(\mathbf{x}_{n}(i)=\mathbf{\theta}^{\ast},~~\forall n\right)=1,~~\forall~\mathbf{\theta}^{\ast}\in\mathcal{U}
\end{equation}
\end{theorem}

\section{Conclusion}
\label{conclusion}
%
\label{dis_future}
This paper studies linear and nonlinear \emph{distributed} (vector) parameter estimation problems as may arise in constrained sensor networks. Our problem statement is quite general, including communication among sensors that is quantized, noisy, and with channels that fail at random times. These are characteristic of packet communication in wireless sensor networks. We introduce a generic observability condition, the separable estimability condition, that generalizes to distributed estimation the general observability condition of centralized parameter estimation. We study three recursive distributed estimators, $\mathcal{ALU}$, $\mathcal{NU}$, and $\mathcal{NLU}$. We study their asymptotic properties, namely: consistency, asymptotic unbiasedness, and for the $\mathcal{ALU}$ and $\mathcal{NU}$ algorithms their asymptotic normality. The $\mathcal{NLU}$ works in a transformed domain where the recursion is actually linear, and a final nonlinear transformation, justified by the separable estimability condition, recovers the parameter estimate (a stochastic generalization of homeomorphic filtering.)  For example, Theorem~\ref{theorem_tilde} shows that, in the
transformed domain, the $\mathcal{NLU}$ leads to consistent and
asymptotically unbiased estimators at every sensor for all
separably estimable observation models satisfying~\textbf{(D.4)}\footnote{The $\mathcal{NLU}$ requires a slightly stronger moment condition, Assumption~\textbf{(D.4)}. However, for reasonable observation noise statistics arising in practice, this assumption is justified.}. Since, the function
$h(\cdot)$ is invertible, for practical purposes, a knowledge of
$h(\mathbf{\theta}^{\ast})$ is sufficient for knowing
$\mathbf{\theta}^{\ast}$. In that respect, the algorithm
$\mathcal{NLU}$ is much more applicable than the algorithm
$\mathcal{NU}$, which requires further assumptions on the
observation model for the existence of consistent and
asymptotically unbiased estimators. However, in case, the
algorithm $\mathcal{NU}$ is applicable, it provides convergence
rate guarantees (for example, asymptotic normality) that follow
from standard stochastic approximation theory. On the other hand,
the algorithm $\mathcal{NLU}$ does not fall under the purview of
standard stochastic approximation theory (see
Section~\ref{proofs_NLU}) and hence does not inherit these
convergence rate properties. In this paper, we presented a
convergence theory of the three algorithms
 under broad conditions. An interesting
future research direction is to establish a convergence rate
theory for the $\mathcal{NLU}$ algorithm (and in general,
distributed stochastic algorithms of this form, which involve
mixed time-scale behavior and biased perturbations.)
\appendices

\renewcommand{\baselinestretch}{1.3}

\section{Some Results on Stochastic Approximation}
\label{res_stoch_app}

{\small

We present some classical results on stochastic
approximation from~\cite{Nevelson} regarding the convergence
properties of generic stochastic recursive procedures, which will
be used to characterize the convergence properties (consistency, convergence rate) of the $\mathcal{LU}$ algorithm.
\begin{theorem}
\label{RM} Let $\left\{\mathbf{x}(i)\in\mathbb{R}^{l}\right\}_{i\geq
0}$ be a random  sequence:
\begin{equation}
\label{RM:1}
\mathbf{x}(i+1)=\mathbf{x}(i)+\alpha(i)\left[R(\mathbf{x}(i))+\Gamma
\left(i+1,\mathbf{x}(i),\omega\right)\right]
\end{equation}
where, $R(\cdot):\mathbb{R}^{l}\longmapsto\mathbb{R}^{l}$ is Borel measurable and
$\left\{\Gamma(i,\mathbf{x},\omega)\right\}_{i\geq
0,~\mathbf{x}\in\mathbb{R}^{l}}$ is a family of random
vectors in $\mathbb{R}^{l}$, defined on a probability
space $(\Omega,\mathcal{F},\mathcal{P})$, and $\omega\in\Omega$ is
a canonical element. Let the following sets of
assumptions hold:
\begin{itemize}[\setlabelwidth{B.1)}]
\item{\textbf{(B.1)}}: The function $\Gamma(i,\cdot,
\cdot):\mathbb{R}^{l}\times\Omega\longrightarrow\mathbb{R}^{l}$ is
$\mathcal{B}^{l}\otimes\mathcal{F}$
measurable for every $i$; $\mathcal{B}^{l}$ is the Borel algebra of $\mathbb{R}^{l}$.
\item{\textbf{(B.2)}}: There exists a filtration
$\left\{\mathcal{F}_{i}\right\}_{i\geq 0}$ of $\mathcal{F}$, such that, for
each $i$, the family of random vectors
$\left\{\Gamma\left(i,\mathbf{x},\omega\right)
\right\}_{\mathbf{x}\in\mathbb{R}^{l}}$ is $\mathcal{F}_{i}$ measurable, zero-mean and independent of
$\mathcal{F}_{i-1}$.

(If Assumptions~\textbf{(B.1)-(B.2)} hold, $\left\{\mathbf{x}(i)\right\}_{i\geq 0}$, is Markov.)
\item{\textbf{(B.3)}}: There exists a function
$V\left(\mathbf{x}\right)\in\mathbb{C}_{2}$ with bounded second order partial derivatives and a point $\mathbf{x}^{\ast}\in\mathbb{R}^{l}$ satisfying:
\begin{align*}
&V\left(\mathbf{x}^{\ast}\right)=0,
\:\:V\left(\mathbf{x}\right)>0,\,\mathbf{x}\neq\mathbf{x}^{\ast},
\:\:\lim_{\left\|\mathbf{x}\right\|\rightarrow\infty}V\left(\mathbf{x}\right)=\infty,
\\
&
\sup_{\epsilon<\left\|\mathbf{x}-\mathbf{x}^{\ast}\right\|<
\frac{1}{\epsilon}}\left(R\left(\mathbf{x}\right),
V_{\mathbf{x}}\left(\mathbf{x}\right)\right)<0,\:\forall\epsilon>0
\end{align*}
where $V_{\mathbf{x}}\left(\mathbf{x}\right)$ denotes the gradient (vector) of $V(\cdot)$ at $\mathbf{x}$.

\item{\textbf{(B.4)}}: There exist constants $k_{1},k_{2}>0$, such
that,
\begin{align*}
&
\hspace{-.6cm}
\left\|R\left(\mathbf{x}\right)\right\|^{2}+\mathbb{E}\left[\left\|
\Gamma\left(i+1,\mathbf{x},\omega\right)\right\|^{2}\right]
\leq
k_{1}\left(1+V\left(\mathbf{x}\right)\right)-
\\
&
\hspace{3.5cm}
-
k_{2}\left(R\left(\mathbf{x}\right),
V_{\mathbf{x}}\left(\mathbf{x}\right)\right)
\end{align*}
\item{\textbf{(B.5)}}: The weight sequence $\left\{\alpha(i)\right\}_{i\geq
0}$ satisfies
\begin{equation}
\label{alphacond-b} \alpha(i)>0,\:\sum_{i\geq 0}\alpha_{i}=\infty, \: \sum_{i\geq 0}\alpha^{2}(i)<\infty
\end{equation}
\end{itemize}
\begin{itemize}[\setlabelwidth{C.1)}]
\item{\textbf{(C.1)}}: The function $R\left(\mathbf{x}\right)$ admits the
representation
\begin{equation}
\label{assc1.1}
R\left(\mathbf{x}\right)=B\left(\mathbf{x}-\mathbf{x}^{\ast}\right)
+\delta\left(\mathbf{x}\right)
\end{equation}
where
\begin{equation}
\label{assc1.2}
\lim_{\mathbf{x}\rightarrow\mathbf{x}^{\ast}}
\frac{\left\|\delta\left(\mathbf{x}\right)\right\|}{
\left\|\mathbf{x}-\mathbf{x}^{\ast}\right\|}=0
\end{equation}
(Note, in particular, if $\delta\left(\mathbf{x}\right)\equiv 0$, then
\eqref{assc1.2} is satisfied.)
\item{\textbf{(C.2)}}: The weight sequence, $\left\{\alpha(i)\right\}_{i\geq
0}$ is of the form,
\begin{equation}
\label{assc2.1} \alpha(i)=\frac{a}{i+1},~~\forall i\geq 0
\end{equation}
where $a>0$ is a constant (note that \textbf{(C.2)} implies
\textbf{(B.5)}).
\item{\textbf{(C.3)}}: Let $I$ be the $l\times l$ identity matrix and $a,B$ as in~\eqref{assc2.1} and~\eqref{assc1.1}, respectively. Then, the matrix $\Sigma= aB+\frac{1}{2}I$ is stable.
%
\item{\textbf{(C.4)}}: The entries of the matrices, $\forall
i\geq 0, x\in\mathbf{R}^{l}$,
\[
A\left(i,\mathbf{x}\right)=\mathbb{E}\left[\Gamma\left(i+1,\mathbf{x},\omega\right)
\Gamma^{T}\left(i+1,\mathbf{x},\omega\right)\right],
\]
are finite, and the following limit exists:
\[
\lim_{i\rightarrow\infty,~\mathbf{x}\rightarrow\mathbf{x}^{\ast}}
A\left(i,\mathbf{x}\right)
=S_{0}
\]
\item{\textbf{(C.5)}}: There exists $\epsilon>0$, such that
\begin{equation}
\label{assc5.1}
\hspace{-.1cm}
\lim_{R\rightarrow\infty}\sup_{\left\|\mathbf{x}-\mathbf{x}^{\ast}\right\|
<\epsilon}\sup_{i\geq 0} \int_{\left\|\Gamma\left(i+1,\mathbf{x},\omega\right)\right\|>R}
\hspace{-1.4cm}
\left\|\Gamma\left(i+1,\mathbf{x},\omega\right)\right\|^{2}dP=0
\end{equation}
\end{itemize}
Then we have the following:

Let Assumptions~\textbf{(B.1)-(B.5)} hold for
$\left\{\mathbf{x}(i)\right\}_{i\geq 0}$ in \eqref{RM:1}. Then,
starting from an arbitrary initial state, the Markov process,
$\left\{\mathbf{x}(i)\right\}_{i\geq 0}$, converges a.s. to
$\mathbf{x}^{\ast}$. In other words,
\begin{equation}
\label{RMres1}
\mathbb{P}\left[\lim_{i\rightarrow\infty}\mathbf{x}(i)=\mathbf{x}^{\ast}\right]=1
\end{equation}
The normalized process, $\left\{\sqrt{i}\left(\mathbf{x}(i)- \mathbf{x}^{\ast}\right)\right\}_{i\geq 0}$, is asymptotically normal if, besides Assumptions~\textbf{(B.1)-(B.5)}, Assumptions~\textbf{(C.1)-(C.5)} are also satisfied. In particular, as $i\rightarrow\infty$
\begin{equation}
\label{RMres2}
\sqrt{i}\left(\mathbf{x}(i)-\mathbf{x}^{\ast}\right)\Longrightarrow\, \mathcal{N}(\mathbf{0},S)
\end{equation}
where $\Longrightarrow$ denotes convergence in distribution
(weak convergence.) Also, the asymptotic variance, $S$, in
\eqref{RMres2} is
\begin{equation}
\label{RMres3} S=a^{2}\int_{0}^{\infty}e^{\Sigma v}S_{0}e^{\Sigma^{T}v}dv
\end{equation}
%
\end{theorem}
\begin{proof}
For a proof see~\cite{Nevelson} (c.f. Theorems 4.4.4, 6.6.1).
\end{proof}
}
\vspace*{-1.0cm}
\section{Proofs of Theorems~\ref{nlth},\ref{thnl}}
\label{app:NU}

\textbf{Proof of Theorem~\ref{nlth}}
{\small
\begin{proof}
 We noted the recursive scheme~\eqref{nalg2} satisfies Assumptions~\textbf{(B.1)-(B.2)} of
Theorem~\ref{RM}. To prove consistency, we verify
Assumptions~\textbf{(B.3)-(B.4)}. Let the Lyapunov function
\begin{equation}
\label{nlth5}
V\left(\mathbf{x}\right)=\|\mathbf{x}-\mathbf{1}_{N}\otimes\mathbf{\theta}^{\ast}\|^{2}
\end{equation}
Clearly,
\[
V(\mathbf{1}_{N}\otimes\mathbf{\theta}^{\ast})=0,V\left(\mathbf{x}\right)>0,
\mathbf{x}\neq\mathbf{1}_{N}\otimes\mathbf{\theta}^{\ast},
\hspace{-.2cm}
\lim_{\|\mathbf{x}\|\rightarrow\infty}V\left(\mathbf{x}\right)=\infty
\]
By Assumptions~\eqref{nlth1}-\eqref{nlth2},
$h(\cdot)$ is Lipschitz continuous and
\begin{equation}
\label{nlth7}
\left(\mathbf{\theta}-\widetilde{\mathbf{\theta}}\right)^{T}\left(h(\mathbf{\theta})-h(\widetilde{\theta})\right)>
0,~~\forall~\mathbf{\theta}\neq\widetilde{\mathbf{\theta}}\in\mathbb{R}^{M}
\end{equation}
where \eqref{nlth7} follows from the invertibility of
$h(\cdot)$ and
\begin{equation}
\label{nlth8}
h\left(\mathbf{\theta}\right)=\frac{1}{N}\sum_{n=1}^{N}h_{n}
\left(\mathbf{\theta}\right),~~\forall~\mathbf{\theta}\in\mathbb{R}^{M}
\end{equation}
Recall $R\left(\mathbf{x}\right), \Gamma\left(i+1,\mathbf{x},\omega\right)$ in
\eqref{nalg6}-\eqref{nalg7}. Then
\begin{align}
\label{nlth9}
&\left(R\left(\mathbf{x}\right),V_{\mathbf{x}}\left(\mathbf{x}\right)\right)
=
\\
&
\nonumber
\hspace{1.5cm}
-2\beta\left(\mathbf{x}-\mathbf{1}_{N}\otimes\mathbf{\theta}^{\ast}\right)^{T}
\left(\overline{L}\otimes
I_{M}\right)
\left(\mathbf{x}-\mathbf{1}_{N}
\otimes\mathbf{\theta}^{\ast}\right)
\\
&
\nonumber
\hspace{1.5cm}
-2\left(\mathbf{x}-\mathbf{1}_{N}\otimes\mathbf{\theta}^{\ast}\right)^{T}
\left[M\left(\mathbf{x}\right)-M(\mathbf{1}_{N}
\otimes\mathbf{\theta}^{\ast})\right]\nonumber
\\
\nonumber
&
=
 -2\beta\left(\mathbf{x}-\mathbf{1}_{N}
\otimes\mathbf{\theta}^{\ast}\right)^{T}\left(\overline{L}
\otimes I_{M}\right)\left(\mathbf{x}-\mathbf{1}_{N}
\otimes\mathbf{\theta}^{\ast}\right)
-
\\
\nonumber
&
\hspace{1.5cm}
-
2\sum_{n=1}^{N}\left[\left(\mathbf{x}_{n}
-\mathbf{\theta}^{\ast}\right)^{T}
\left(h_{n}(\mathbf{x}_{n})-h_{n}(\mathbf{\theta}^{\ast})\right)\right]
\leq
0
\end{align}
where the last step follows from the positive-semidefiniteness of
$\overline{L}\otimes I_{M}$ and \eqref{nlth2}. To verify
Assumption~\textbf{(B.3)}, show
\begin{equation}
\label{nlth10}
\sup_{\epsilon<\|\mathbf{x}-\mathbf{1}_{N}\mathbf{\theta}^{\ast}\|<\frac{1}{\epsilon}}\left(R\left(\mathbf{x}\right),V_{\mathbf{x}}\left(\mathbf{x}\right)\right)<0,~~\forall\epsilon>0
\end{equation}
Assume, on the contrary, \eqref{nlth10} not
satisfied. Then from \eqref{nlth9}
\begin{equation}
\label{nlth12}
\sup_{\epsilon<\|\mathbf{x}-\mathbf{1}_{N}\mathbf{\theta}^{\ast}\|<\frac{1}{\epsilon}}\left(R\left(\mathbf{x}\right),V_{\mathbf{x}}\left(\mathbf{x}\right)\right)=
0,~~\forall\epsilon>0
\end{equation}
Then, there exists a sequence, $\left\{\mathbf{x}^{k}\right\}_{k\geq 0}$ in
$\left\{\mathbf{x}\in\mathbb{R}^{NM}\left|\rule{0cm}{.35cm}\right.\epsilon<\|\mathbf{x}-\mathbf{1}_{N}\mathbf{\theta}^{\ast}\|<\frac{1}{\epsilon}\right\}$,
such that
\begin{equation}
\label{nlth13}
\lim_{k\rightarrow\infty}\left(R(\mathbf{x}^{k}),V_{\mathbf{x}}(\mathbf{x}^{k})\right)=0
\end{equation}
Since $\left\{\mathbf{x}\in\mathbb{R}^{NM}\left|\rule{0cm}{.25cm}\right.\epsilon<\|\mathbf{x}-\mathbf{1}_{N}\mathbf{\theta}^{\ast}\|<\frac{1}{\epsilon}\right\}$
is relatively compact, $\left\{\mathbf{x}^{k}\right\}_{k\geq 0}$
has a limit point, $\widehat{\mathbf{x}}$, such that
$\epsilon\leq\|\widetilde{\mathbf{x}}-\mathbf{1}_{N}\mathbf{\theta}^{\ast}\|\leq\frac{1}{\epsilon}$,
and, by continuity of
$\left(R\left(\mathbf{x}\right),V_{\mathbf{x}}\left(\mathbf{x}\right)\right)$:
\begin{equation}
\label{nlth14}
\left(R(\widetilde{\mathbf{x}}),V_{\mathbf{x}}(\widetilde{\mathbf{x}})\right)=0
\end{equation}
From \eqref{nlth2} and \eqref{nlth9}, we then have
\begin{align}
\label{nlth15}
\left(\widetilde{\mathbf{x}}-\mathbf{1}_{N}
\otimes\mathbf{\theta}^{\ast}\right)^{T}\left(\overline{L}\otimes
I_{M}\right)\left(\widetilde{\mathbf{x}}-\mathbf{1}_{N}
\otimes\mathbf{\theta}^{\ast}\right)
&=
0,
\\
\label{nlth15-b}
\hspace{-1cm}
\left(\widetilde{\mathbf{x}}_{n}-\mathbf{\theta}^{\ast}\right)^{T}
\left(h_{n}(\widetilde{\mathbf{x}}_{n})-h_{n}(\mathbf{\theta}^{\ast})\right)
&
=
0,\:\forall n
\end{align}
The equality in \eqref{nlth15} and the properties of
the Laplacian imply that $\widetilde{\mathbf{x}}\in\mathcal{C}$
and hence there exists $\mathbf{a}\in\mathbb{R}^{M}$, such
that,
\begin{equation}
\label{nlth16} \widetilde{\mathbf{x}}_{n} = \mathbf{a},~~\forall n
\end{equation}
The  inequalities in \eqref{nlth15-b} then imply
\begin{equation}
\label{nlth17}
\left(\mathbf{a}-\mathbf{\theta}^{\ast}\right)^{T}\left(h(\mathbf{a})-h(\mathbf{\theta}^{\ast})\right)=0
\end{equation}
which is a contradiction by \eqref{nlth7} since
$\mathbf{a}\neq\mathbf{\theta}^{\ast}$. Thus, we have
\eqref{nlth10} that verifies Assumption~\textbf{(B.3)}.
Finally, we note that,
\begin{align}
\label{nlth19}
&
\|R\left(\mathbf{x}\right)\|^{2}=
\left\|\beta\left(\overline{L}\otimes
I_{M}\right)\left(\mathbf{x}-\mathbf{1}_{N}
\otimes\mathbf{\theta}^{\ast}\right)+\right.
\\
&
\nonumber
\left.
\hspace{3cm}
\phantom{\overline{L}}
+
\left(M\left(\mathbf{x}\right)-M(\mathbf{1}_{N}
\otimes\mathbf{\theta}^{\ast})\right)\right\|^{2}
\\
\nonumber
&
\leq
4\beta^{2}\left\|\left(\overline{L}\otimes I_{M}\right)\left(\mathbf{x}-\mathbf{1}_{N}
\otimes\mathbf{\theta}^{\ast}\right)\right\|^{2}+
\\
&
\nonumber
\hspace{3cm}
+
4\left\|M\left(\mathbf{x}\right)-M(\mathbf{1}_{N}\otimes\mathbf{\theta}^{\ast})\right\|^{2}\nonumber
\\
\nonumber
&
\leq
4\beta^{2}\lambda_{N}(\overline{L})\|\mathbf{x}-\mathbf{1}_{N}\otimes\mathbf{\theta}^{\ast}\|^{2}+4K^{2}\|\mathbf{x}-\mathbf{1}_{N}\otimes\mathbf{\theta}^{\ast}\|^{2}
\end{align}
where the second step follows from the Lipschitz continuity of
$h_{n}(\cdot)$ and $K$ is defined in \eqref{nlth3}. To verify
Assumption~\textbf{(B.4)}, we have then along similar lines as in
Theorem~\ref{REas}
\begin{align}
\label{nlth18}
&
\|R\left(\mathbf{x}\right)\|^{2}+\mathbb{E}
\left[\left\|\Gamma\left(i+1,\mathbf{x},\omega\right)\right\|^{2}\right]
\leq
 k_{1}(1+V\left(\mathbf{x}\right))
\\
&
\hspace{3cm}
\leq
\nonumber
k_{1}(1+V\left(\mathbf{x}\right))-\left(R\left(\mathbf{x}\right),V_{\mathbf{x}}\left(\mathbf{x}\right)\right)
\end{align}
for a constant $k_{1}>0$ (the last step follows from
\eqref{nlth9}.) Hence, the assumptions are satisfied
and the claim follows.
\end{proof}
}
\textbf{Proof of Theorem~\ref{thnl}}
{\small
\begin{proof}
The recursive scheme in
\eqref{nalg2} satisfies Assumptions~\textbf{(B.1)-(B.2)} of
Theorem~\ref{RM}. To prove consistency, we verify
Assumptions~\textbf{(B.3)-(B.4)}. Consider the
Lyapunov function
\begin{equation}
\label{thnl4}
V\left(\mathbf{x}\right)=\|\mathbf{x}-\mathbf{1}_{N}\otimes\mathbf{\theta}^{\ast}\|^{2}
\end{equation}
Clearly,
\begin{align*}
V(\mathbf{1}_{N}\otimes\mathbf{\theta}^{\ast})=0,
V\left(\mathbf{x}\right)>0,
\mathbf{x}\neq\mathbf{1}_{N}\otimes\mathbf{\theta}^{\ast},
\hspace{-.2cm}
\lim_{\|\mathbf{x}\|\rightarrow\infty}
\hspace{-.2cm}
V\left(\mathbf{x}\right)=\infty
\end{align*}
Recall the definitions of
$R\left(\mathbf{x}\right),\Gamma\left(i+1,\mathbf{x},\omega\right)$ in
\eqref{nalg6}-\eqref{nalg7}, and the consensus
subspace in \eqref{nalg9}. We then have
\begin{align}
\label{thnl6}
&
\hspace{2cm}
\hspace{-.2cm}
\left(R\left(\mathbf{x}\right),V_{\mathbf{x}}\left(\mathbf{x}\right)\right) =
\\
&
\nonumber
-2\beta\left(\mathbf{x}-\mathbf{1}_{N}\otimes\mathbf{\theta}^{\ast}\right)^{T}\left(\overline{L}\otimes
I_{M}\right)\left(\mathbf{x}-\mathbf{1}_{N}
\otimes\mathbf{\theta}^{\ast}\right)-
\\
\nonumber
&
\hspace{1.5cm}
-
2\left(\mathbf{x}-\mathbf{1}_{N}\otimes\mathbf{\theta}^{\ast}\right)^{T}\left[M\left(\mathbf{x}\right)-M(\mathbf{1}_{N}\otimes\mathbf{\theta}^{\ast})\right]
\\
\nonumber
&
\leq
-2\beta\lambda_{2}(\overline{L})\|\mathbf{x}_{\mathcal{C}^{\perp}}\|^{2}
-
\\
&
\nonumber
\hspace{2.25cm}
-
2\left(\mathbf{x}-\mathbf{1}_{N}\otimes\mathbf{\theta}^{\ast}\right)^{T}
\left[M\left(\mathbf{x}\right)-M(\mathbf{x}_{\mathcal{C}})\right]
\nonumber
\\
&
\hspace{1.375cm}
-2\left(\mathbf{x}-\mathbf{1}_{N}\otimes
\mathbf{\theta}^{\ast}\right)^{T}
\left[M(\mathbf{x}_{\mathcal{C}})-
M(\mathbf{1}_{N}\otimes\mathbf{\theta}^{\ast})
\right]\nonumber
\\
\nonumber
&
\leq
-2\beta\lambda_{2}(\overline{L})
\|\mathbf{x}_{\mathcal{C}^{\perp}}\|^{2}
+
\\
&
\nonumber
\hspace{1.9cm}
+
2\left\|\left(\mathbf{x}-\mathbf{1}_{N}\otimes
\mathbf{\theta}^{\ast}\right)^{T}
\left[M\left(\mathbf{x}\right)-M(\mathbf{x}_{\mathcal{C}})\right]
\right\|
\nonumber
\\
&
\hspace{1.35cm}
-2\left(\mathbf{x}-\mathbf{1}_{N}\otimes
\mathbf{\theta}^{\ast}\right)^{T}
\left[M(\mathbf{x}_{\mathcal{C}})-M(\mathbf{1}_{N}
\otimes\mathbf{\theta}^{\ast})\right]\nonumber
\\
&
\leq
-2\beta\lambda_{2}(\overline{L})
\|\mathbf{x}_{\mathcal{C}^{\perp}}\|^{2}
+2K\|\mathbf{x}_{\mathcal{C}^{\perp}}\|\|\mathbf{x}-\mathbf{1}_{N}
\otimes\mathbf{\theta}^{\ast}\|
\nonumber
\\
&
\hspace{1.35cm}
-2\left(\mathbf{x}-\mathbf{1}_{N}
\otimes\mathbf{\theta}^{\ast}\right)^{T}
\left[M(\mathbf{x}_{\mathcal{C}})
-M(\mathbf{1}_{N}\otimes\mathbf{\theta}^{\ast})\right]
\nonumber
\\
\nonumber
&
=
-2\beta\lambda_{2}(\overline{L})
\|\mathbf{x}_{\mathcal{C}^{\perp}}\|^{2}+
2K\|\mathbf{x}_{\mathcal{C}^{\perp}}\|\|\mathbf{x}-
\mathbf{1}_{N}\otimes\mathbf{\theta}^{\ast}\|-
\\
&
\nonumber
-
2\mathbf{x}_{\mathcal{C}^{\perp}}^{T}
\left[M(\mathbf{x}_{\mathcal{C}})-M(\mathbf{1}_{N}\otimes
\mathbf{\theta}^{\ast})\right]
\nonumber
\\
&
\hspace{1.15cm}
 -2\left(\mathbf{x}_{\mathcal{C}}-
\mathbf{1}_{N}\otimes\mathbf{\theta}^{\ast}\right)^{T}
\left[M(\mathbf{x}_{\mathcal{C}})-M(\mathbf{1}_{N}\otimes
\mathbf{\theta}^{\ast})\right]
\nonumber
\\
\nonumber
&
\leq
-2\beta\lambda_{2}(\overline{L})
\|\mathbf{x}_{\mathcal{C}^{\perp}}\|^{2}+2K
\|\mathbf{x}_{\mathcal{C}^{\perp}}\|
\|\mathbf{x}-\mathbf{1}_{N}\otimes\mathbf{\theta}^{\ast}\|+
\\
\nonumber
&
\hspace{2.9cm}
2\left\|\mathbf{x}_{\mathcal{C}^{\perp}}^{T}
\left[M(\mathbf{x}_{\mathcal{C}})-M(\mathbf{1}_{N}
\otimes\mathbf{\theta}^{\ast})\right]\right\|
\nonumber
\\
&
\hspace{1.05cm}
-2\left(\mathbf{x}_{\mathcal{C}}-
\mathbf{1}_{N}\otimes\mathbf{\theta}^{\ast}\right)^{T}
\left[M(\mathbf{x}_{\mathcal{C}})-M(\mathbf{1}_{N}\otimes
\mathbf{\theta}^{\ast})\right]
\nonumber
\\
\nonumber
&
\leq
-2\beta\lambda_{2}(\overline{L})
\|\mathbf{x}_{\mathcal{C}^{\perp}}\|^{2}
+2K\|\mathbf{x}_{\mathcal{C}^{\perp}}\|\|\mathbf{x}
-\mathbf{1}_{N}\otimes\mathbf{\theta}^{\ast}\|
+
\\
\nonumber
&
\hspace{.4cm}
+
2K\|\mathbf{x}_{\mathcal{C}^{\perp}}\|\|\mathbf{x}_{\mathcal{C}}
-\mathbf{1}_{N}\otimes\mathbf{\theta}^{\ast}\|
-2\gamma\left\|\mathbf{x}_{\mathcal{C}}
-\mathbf{1}_{N}\otimes\mathbf{\theta}^{\ast}\right\|^{2}\nonumber
\\
&
\nonumber
= \left(-2\beta\lambda_{2}(\overline{L})+2K\right)
\|\mathbf{x}_{\mathcal{C}^{\perp}}\|^{2}
+
\\
&
\nonumber
\hspace{.4cm}
+
4K\|\mathbf{x}_{\mathcal{C}^{\perp}}\|\|\mathbf{x}_{\mathcal{C}}-\mathbf{1}_{N}\otimes\mathbf{\theta}^{\ast}\|-2\gamma\left\|\mathbf{x}_{\mathcal{C}}-\mathbf{1}_{N}\otimes\mathbf{\theta}^{\ast}\right\|^{2}
\end{align}
where the second to last step is justified because
$\mathbf{x}_{\mathcal{C}}=\mathbf{1}_{N}\otimes\widetilde{\mathbf{y}}$
for some $\widetilde{\mathbf{y}}\in\mathbb{R}^{M}$ and
\begin{align}
\label{thnl7}
&
\left(\mathbf{x}_{\mathcal{C}}-\mathbf{1}_{N}\otimes\mathbf{\theta}^{\ast}\right)^{T}\left[M(\mathbf{x}_{\mathcal{C}})-M(\mathbf{1}_{N}\otimes\mathbf{\theta}^{\ast})\right]
 =
 \\
 &
 \nonumber
 \hspace{3.5cm}
\sum_{n=1}^{N}\left(\widetilde{\mathbf{y}}-\mathbf{\theta}^{\ast}\right)^{T}\left[h_{n}(\widetilde{\mathbf{y}})-h_{n}(\mathbf{\theta}^{\ast})\right]\nonumber
\\ & = \left(\widetilde{\mathbf{y}}-\mathbf{\theta}^{\ast}\right)^{T}\sum_{n=1}^{N}\left[h_{n}(\widetilde{\mathbf{y}})-h_{n}(\mathbf{\theta}^{\ast})\right]\nonumber
\\ & = N\left(\widetilde{\mathbf{y}}-\mathbf{\theta}^{\ast}\right)^{T}\left[h(\widetilde{\mathbf{y}})-h(\mathbf{\theta}^{\ast})\right]\nonumber
\geq
N\gamma\left\|\widetilde{\mathbf{y}}-\mathbf{\theta}^{\ast}\right\|^{2}\nonumber
\\
\nonumber
&
=
\gamma\left\|\mathbf{x}_{\mathcal{C}}-\mathbf{1}_{N}\otimes\mathbf{\theta}^{\ast}\right\|^{2}
\end{align}
It can be shown that, if
$\beta>\frac{K^{2}+K\gamma}{\gamma\lambda_{2}{\overline{L}}}$, the
term on the R.H.S. of \eqref{thnl6} is always non-positive.
We thus have
\begin{equation}
\label{thnl8}
\left(R\left(\mathbf{x}\right),V_{\mathbf{x}}\left(\mathbf{x}\right)\right)\leq
0,~~\forall\mathbf{x}\in\mathbb{R}^{MN}
\end{equation}
By the continuity of $\left(R\left(\mathbf{x}\right),V_{\mathbf{x}}
\left(\mathbf{x}\right)\right)$ and the relative compactness of
$\left\{\mathbf{x}\in\mathbb{R}^{NM}\left|\rule{0cm}{.35cm}\right.
\epsilon<\left\|\mathbf{x}-\mathbf{1}_{N}\mathbf{\theta}^{\ast}
\right\|<\frac{1}{\epsilon}\right\}$,
we can show along similar lines as in Theorem~\ref{nlth} that
\begin{equation}
\label{thnl9}
\sup_{\epsilon<\|\mathbf{x}-\mathbf{1}_{N}\mathbf{\theta}^{\ast}\|<\frac{1}{\epsilon}}\left(R\left(\mathbf{x}\right),V_{\mathbf{x}}\left(\mathbf{x}\right)\right)<0,~~\forall\epsilon>0
\end{equation}
verifying Assumption~\textbf{(B.3)}. Assumption~\textbf{(B.4)} is verified in similar manner to Theorem~\ref{nlth}
and the result follows.
\end{proof}
}
\vspace{-.4cm}
\section{Proof of Lemma~\ref{prop_alpha}}
\label{proof_prop_alpha}
%
%
%
{\small
\begin{proof}[Proof of Lemma~\ref{prop_alpha}]
We prove for the case $\delta_{1}< 1$ first. Consider $j$
sufficiently large, such that,
\begin{equation}
\label{prop_alpha04} r_{1}(i)\leq 1,~~\forall i\geq j
\end{equation}
Then, for $k\geq j$, using  $1-a\leq e^{-a}$,
$0\leq a\leq 1$:
\begin{equation}
\label{prop_alpha3} \prod_{l=k+1}^{i-1}\left(1-r_{1}(l)\right)\leq
e^{-\sum_{l=k+1}^{i-1}r_{1}(l)}
\end{equation}
It follows from the properties of the Riemann integral that
\begin{eqnarray}
\label{prop_alpha4} \sum_{l=k+1}^{i-1}r_{1}(l) & = &
\sum_{l=k+1}^{i-1}\frac{a_{1}}{(l+1)^{\delta_{1}}}
\\ & \geq & a_{1}\int_{k+2}^{i+1}\frac{1}{t^{\delta_{1}}}dt\nonumber \\
& = &
\nonumber
\frac{a_{1}}{1-\delta_{1}}\left[(i+1)^{1-\delta_{1}}-(k+2)^{1-\delta_{1}}\right]
\end{eqnarray}
We thus have from \eqref{prop_alpha3}-\eqref{prop_alpha4}
\begin{align}
\label{prop_alpha5}
&\sum_{k=j}^{i-1}\left[\left(\prod_{l=k+1}^{i-1}\left(1-r_{1}(l)\right)\right)r_{2}(l)\right]
\leq
\\
&
\nonumber
\sum_{k=j}^{i-1}\left[e^{-\frac{a_{1}}{1-\delta_{1}}(i+1)^{1-\delta_{1}}}
e^{\frac{a_{1}}{1-\delta_{1}}(k+2)^{1-\delta_{1}}}\right]
\frac{a_{2}}{(k+1)^{\delta_{2}}}
=
\\
\nonumber
&
a_{2}e^{-\frac{a_{1}}{1-\delta_{1}}(i+1)^{1-\delta_{1}}}\sum_{k=j}^{i-1}\left[e^{\frac{a_{1}}{1-\delta_{1}}(k+2)^{1-\delta_{1}}}\frac{1}{(k+1)^{\delta_{2}}}\right]
\end{align}
%
From properties of Riemann integration, for
 $j$ large enough:
\begin{align}
\label{prop_alpha07}
&
\hspace{-.5cm}
\sum_{k=j}^{i-1}\left[e^{\frac{a_{1}}{1-\delta_{1}}(k+2)^{1-\delta_{1}}}\frac{1}{(k+1)^{\delta_{2}}}\right]
\leq
\\
\nonumber
&
\hspace{+2.5cm}
\sum_{k=j}^{i-1}\left[e^{\frac{a_{1}}{1-\delta_{1}}(k+2)^{1-\delta_{1}}}\frac{1}{(\frac{k}{2}+1)^{\delta_{2}}}\right]\nonumber
\\
\nonumber
&
=
2^{\delta_{2}}\sum_{k=j}^{i-1}\left[e^{\frac{a_{1}}{1-\delta_{1}}(k+2)^{1-\delta_{1}}}\frac{1}{(k+2)^{\delta_{2}}}\right]\nonumber
\\
\nonumber
&
=
2^{\delta_{2}}\sum_{k=j+2}^{i+1}\left[e^{\frac{a_{1}}{1-\delta_{1}}k^{1-\delta_{1}}}\frac{1}{k^{\delta_{2}}}\right]\nonumber
\\
\nonumber
&
=
2^{\delta_{2}}e^{\frac{a_{1}}{1-\delta_{1}}(i+1)^{1-\delta_{1}}}
\frac{1}{(i+1)^{\delta_{2}}}+
\\
\nonumber
&
\hspace{2.75cm}
+
2^{\delta_{2}}\sum_{k=j+2}^{i}\left[e^{\frac{a_{1}}{1-\delta_{1}}k^{1-\delta_{1}}}\frac{1}{k^{\delta_{2}}}\right]\nonumber
\end{align}
\begin{align}
\nonumber
&
\leq
2^{\delta_{2}}e^{\frac{a_{1}}{1-\delta_{1}}(i+1)^{1-\delta_{1}}}
\frac{1}{(i+1)^{\delta_{2}}}+
\\
\nonumber
&
\hspace{2.75cm}
+
2^{\delta_{2}}\int_{j+2}^{i+1}\left[e^{\frac{a_{1}}{1-\delta_{1}}t^{1-\delta_{1}}}\frac{1}{t^{\delta_{2}}}\right]dt
\end{align}
Again by the fundamental theorem of calculus,
\begin{align}
\nonumber
&
e^{\frac{a_{1}}{1-\delta_{1}}(i+1)^{1-\delta_{1}}}
=
a_{1}\int_{j+2}^{i+1}\left[e^{\frac{a_{1}}{1-\delta_{1}}t^{1-\delta_{1}}}\frac{1}{t^{\delta_{1}}}\right]dt+C_{1}
\\
\label{prop_alpha08}
&
\hspace{1cm}
= a_{1}\int_{j+2}^{i+1}\left[e^{\frac{a_{1}}{1-\delta_{1}}t^{1-\delta_{1}}}\frac{1}{t^{\delta_{2}}}t^{\delta_{2}-\delta_{1}}\right]dt+C_{1}
\end{align}
where $C_{1}=C_{1}(j)>0$ for sufficiently large $j$. From
\eqref{prop_alpha07}-\eqref{prop_alpha08} we have
\begin{align}
\label{prop_alpha09}
&\sum_{k=j}^{i-1}\left[\left(\prod_{l=k+1}^{i-1}\left(1-r_{1}(l)\right)\right)r_{2}(i)\right]
 =
 \\
 \nonumber
 &
 \hspace{.5cm}
a_{2}e^{-\frac{a_{1}}{1-\delta_{1}}(i+1)^{1-\delta_{1}}}
\sum_{k=j}^{i-1}
\left[e^{\frac{a_{1}}{1-\delta_{1}}(k+2)^{1-\delta_{1}}}
\frac{1}{(k+1)^{\delta_{2}}}
\right]
\leq
\nonumber
\\
&
\nonumber
\leq
\frac{2^{\delta_{2}}a_{2}
e^{\frac{a_{1}}{1-\delta_{1}}(i+1)^{1-\delta_{1}}}
\hspace{-.3cm}
\frac{1}{(i+1)^{\delta_{2}}}
\hspace{-.1cm}
+
\hspace{-.1cm}
2^{\delta_{2}}a_{2}
\hspace{-.1cm}
\int_{j+2}^{i+1}
\hspace{-.15cm}
\left[e^{\frac{a_{1}}{1-\delta_{1}}t^{1-\delta_{1}}}
\frac{1}{t^{\delta_{2}}}
\right]
\hspace{-.1cm}
dt}{e^{\frac{a_{1}}{1-\delta_{1}}(i+1)^{1-\delta_{1}}}}\nonumber
\\
&
\nonumber
= \frac{2^{\delta_{2}}a_{2}}{(i+1)^{\delta_{2}}}
+\frac{2^{\delta_{2}}a_{2}\int_{j+2}^{i+1}\left[e^{\frac{a_{1}}{1-\delta_{1}}t^{1-\delta_{1}}}\frac{1}{t^{\delta_{2}}}\right]dt}{e^{\frac{a_{1}}{1-\delta_{1}}(i+1)^{1-\delta_{1}}}}\nonumber
\\
&
\nonumber
\leq
\frac{2^{\delta_{2}}a_{2}}{(i+1)^{\delta_{2}}}+\frac{2^{\delta_{2}}a_{2}\int_{j+2}^{i+1}\left[e^{\frac{a_{1}}{1-\delta_{1}}t^{1-\delta_{1}}}\frac{1}{t^{\delta_{2}}}\right]dt}{a_{1}\int_{j+2}^{i+1}\left[e^{\frac{a_{1}}{1-\delta_{1}}t^{1-\delta_{1}}}\frac{1}{t^{\delta_{2}}}t^{\delta_{2}-\delta_{1}}\right]dt+C_{1}}
\end{align}
The second term stays bounded if
$\delta_{1}=\delta_{2}$ and goes to zero as $i\rightarrow\infty$
if $\delta_{1}<\delta_{2}$, thus establishing the Lemma for the
case $\delta_{1}<1$. Also, in the case $\delta_{1}=\delta_{2}$, we
have from \eqref{prop_alpha09}:
\begin{align}
\label{prop_alpha0100}
&\sum_{k=j}^{i-1}\left[\left(\prod_{l=k+1}^{i-1}\left(1-r_{1}(l)\right)\right)r_{2}(i)\right]
 \leq
\frac{2^{\delta_{2}}a_{2}}{(i+1)^{\delta_{2}}}+
\\
&
\nonumber
\hspace{1.5cm}
+\frac{2^{\delta_{2}}a_{2}}{a_{1}+
C_{1}\left[\int_{j+2}^{i+1}\left[e^{\frac{a_{1}}{1-\delta_{1}}t^{1-\delta_{1}}}\frac{1}{t^{\delta_{2}}}\right]dt\right]^{-1}}
\nonumber
\\
&
\nonumber
\hspace{1.5cm}
\leq 2^{\delta_{2}}a_{2}+\frac{2^{\delta_{2}}a_{2}}{a_{1}}
\end{align}
thus making the choice of $B$ in \eqref{prop_alpha1}
independent of $i,j$.

Now consider the case $\delta_{1}=1$. Consider $j$ sufficiently
large, such that,
\begin{equation}
\label{prop_alpha010} r_{1}(i)\leq 1,~~\forall i\geq j
\end{equation}
Using a similar set of manipulations for $k\geq j$, we have
\begin{eqnarray}
 \prod_{l=k+1}^{i-1}\left(1-r_{1}(l)\right) & \leq &
 e^{-a_{1}\sum_{l=k+1}^{i-1}\frac{1}{l+1}}\nonumber \\ & \leq &
 e^{-a_{1}\int_{k+2}^{i+1}\frac{1}{t}dt}\nonumber \\ & = &
 e^{-a_{1}\ln\left(\frac{i+1}{k+2}\right)}\nonumber \\ & = &
 \nonumber
 \frac{(k+2)^{a_{1}}}{(i+1)^{a_{1}}}
 \end{eqnarray}
We thus have
\begin{align}
&\sum_{k=j}^{i-1}
\hspace{-.075cm}
\left[
\hspace{-.075cm}
\left(\prod_{l=k+1}^{i-1}\left(1-r_{1}(l)\right)
\hspace{-.05cm}
\right)
\hspace{-.05cm}
r_{2}(i)
\hspace{-.05cm}
\right]
 \leq
\frac{a_{2}}{(i+1)^{a_{1}}}\sum_{k=j}^{i-1}\frac{(k+2)^{a_{1}}}{(k+1)^{\delta_{2}}}
\nonumber
\\
&
\hspace{4.2cm}
\leq
\frac{2^{\delta_{2}}a_{2}}{(i+1)^{a_{1}}}\sum_{k=j}^{i-1}\frac{(k+2)^{a_{1}}}{(k+2)^{\delta_{2}}}
\nonumber
\\
&
\nonumber
\hspace{4.2cm}
=\frac{2^{\delta_{2}}a_{2}}{(i+1)^{a_{1}}}\sum_{k=j+2}^{i+1}\frac{k^{a_{1}}}{k^{\delta_{2}}}
\end{align}
Now, if $a_{1}\geq\delta_{2}$, then
\begin{align}
\label{prop_alpha013}
&\sum_{k=j}^{i-1}\left[\left(\prod_{l=k+1}^{i-1}\left(1-r_{1}(l)\right)\right)r_{2}(i)\right]
\leq
\\
&
\nonumber
\hspace{.25cm}
\leq
\frac{2^{\delta_{2}}a_{2}}{(i+1)^{a_{1}}}\sum_{k=j+2}^{i+1}k^{a_{1}-\delta_{2}}\nonumber
\\
&
\nonumber
\hspace{.25cm}
=
\frac{2^{\delta_{2}}a_{2}}{(i+1)^{a_{1}}}\left[(i+1)^{a_{1}-\delta_{2}}+\sum_{k=j+2}^{i}k^{a_{1}-\delta_{2}}\right]\nonumber
\\
&
\nonumber
\hspace{.25cm}
\leq
\frac{2^{\delta_{2}}a_{2}}{(i+1)^{a_{1}}}\left[(i+1)^{a_{1}-\delta_{2}}+\int_{j+2}^{i+1}t^{a_{1}-\delta_{2}}dt\right]\nonumber
\\
&
\nonumber
\hspace{.25cm}
=
\hspace{-.01cm}
\frac{2^{\delta_{2}}a_{2}}{(i+1)^{\delta_{2}}}
\hspace{-.01cm}
+
\hspace{-.01cm}
\frac{2^{\delta_{2}}a_{2}}{a-\delta_{2}+1}
\frac{(i+1)^{a-\delta_{2}+1}
\hspace{-.02cm}
-
\hspace{-.02cm}
(j+2)^{a-\delta_{2}+1}}{(i+1)^{a_{1}}}
\end{align}
The second term is bounded if $\delta_{2}=1$
and vanishes if $\delta_{2}>1$. If $a_{1}<\delta_{2}$
is resolved similarly.
%
%
\end{proof}
}
\vspace*{-.5cm}
\section{Proofs of Lemmas~\ref{nlulm},\ref{circcons}}
\label{proof_circcons}
\label{Appen}

\textbf{Proof of Lemma~\ref{nlulm}}

{\small

\begin{proof}
It follows from \eqref{nlu04} and \eqref{nlu6}, and the fact that
\begin{equation}
\label{nlulm3} \left(\mathbf{1}_{N}\otimes
I_{M}\right)^{T}\left(\overline{L}\otimes I_{M}\right)=\mathbf{0}
\end{equation}
that the evolution of the sequence,
$\left\{\widetilde{\mathbf{x}}^{\circ}_{\mbox{\scriptsize{avg}}}(i)\right\}_{i\geq
0}$ is given by
\begin{equation}
\label{nlulm4}
\hspace{-.05cm}
\widetilde{\mathbf{x}}^{\circ}_{\mbox{\scriptsize{avg}}}(i+1)
\hspace{-.045cm}
=
\hspace{-.045cm}
\widetilde{\mathbf{x}}^{\circ}_{\mbox{\scriptsize{avg}}}(i)
\hspace{-.045cm}
-
\hspace{-.045cm}
\alpha(i)
\hspace{-.06cm}
\left[
\hspace{-.035cm}
\widetilde{\mathbf{x}}^{\circ}_{\mbox{\scriptsize{avg}}}(i)
\hspace{-.035cm}
-
\hspace{-.045cm}
\frac{1}{N}
\hspace{-.095cm}
\sum_{n=1}^{N}g_{n}(\mathbf{z}_{n}(i))
\hspace{-.035cm}
\right]
\end{equation}
We note that \eqref{nlulm4} can be written as
\[
\widetilde{\mathbf{x}}^{\circ}_{\mbox{\scriptsize{avg}}}
(i
\hspace{-.05cm}
+
\hspace{-.05cm}
1)
\hspace{-.05cm}
=
\hspace{-.05cm}
\widetilde{\mathbf{x}}^{\circ}_{\mbox{\scriptsize{avg}}}(i)
\hspace{-.05cm}
+
\hspace{-.05cm}
\alpha(i)
\hspace{-.1cm}
\left[R(\widetilde{\mathbf{x}}^{\circ}_{\mbox{\scriptsize{avg}}}(i))
\hspace{-.05cm}
+
\hspace{-.05cm}
\Gamma(i
\hspace{-.05cm}
+
\hspace{-.05cm}
1,\widetilde{\mathbf{x}}^{\circ}_{\mbox{\scriptsize{avg}}}(i),\omega)\right]
\]
where
\begin{align}
\label{nlulm7}
R(\mathbf{y})&=-\left(\mathbf{y}-h(\mathbf{\theta}^{\ast})\right),
\\
\label{nlulm7-b}
\Gamma(i+1,\mathbf{y},\omega)&=
\frac{1}{N}\sum_{n=1}^{N}g_{n}(\mathbf{z}_{n}(i))-
h(\mathbf{\theta}^{\ast}), \mathbf{y}\in\mathbb{R}^{M}
\end{align}
Such a definition of $R(\cdot),\Gamma(\cdot)$ clearly satisfies
Assumptions~\textbf{(B.1)-(B.2)} of Theorem~\ref{RM}. Now, defining
\begin{equation}
\label{nlulm8}
V(\mathbf{y})=\|\mathbf{y}-h(\mathbf{\theta}^{\ast})\|^{2}
\end{equation}
we have
\begin{equation}
\label{nlulm9}
\hspace{-.05cm}
V\left(h(\mathbf{\theta}^{\ast})\right)=0,
\hspace{-.025cm}
V(\mathbf{y})>0,
\hspace{-.025cm}
\mathbf{y}\neq h(\mathbf{\theta}^{\ast}),
\hspace{-.2cm}
\lim_{\|\mathbf{y}\|\rightarrow\infty}
\hspace{-.1cm}
V(\mathbf{y})=\infty
\end{equation}
Also, we have for $\epsilon>0$
\begin{align}
\label{nlulm10}
&\sup_{\epsilon<\|\mathbf{y}-h(\mathbf{\theta}^{\ast})\|<\frac{1}{\epsilon}}\left(R(\mathbf{y}),V_{\mathbf{y}}(\mathbf{y})\right)
=
\\
&
\nonumber
\hspace{2cm}
=\sup_{\epsilon<\|\mathbf{y}-h(\mathbf{\theta}^{\ast})\|<\frac{1}{\epsilon}}\left(-2\|\mathbf{y}-h(\mathbf{\theta}^{\ast})\|^{2}\right)\nonumber
\\
&
\nonumber
\hspace{2cm}
\leq
 -2\epsilon^{2}\nonumber
 \\
&
\nonumber
\hspace{2cm}
<
0
\end{align}
thus verifying Assumption~\textbf{(B.3)}. Finally from
\eqref{nlprobform4} and \eqref{nlulm7}-\eqref{nlulm7-b}, we have
\begin{align}
\label{nlulm11}
&\|R(\mathbf{y})\|^{2}+\mathbb{E}_{\mathbf{\theta}^{\ast}}
\left[\left\|\Gamma(i+1,\mathbf{y},\omega)\right\|^{2}\right]
=
\\
&
\nonumber
\hspace{3cm}
=
\|\mathbf{y}-h(\mathbf{\theta}^{\ast})\|^{2}+\eta(\mathbf{\theta}^{\ast})\nonumber
\\
&
\nonumber
\hspace{3cm}
\leq
k_{1}(1+V(\mathbf{y}))\nonumber
\\
&
\nonumber
\hspace{3cm}
\leq
k_{1}(1+V(\mathbf{y}))-\left(R(\mathbf{y}),V_{\mathbf{y}}(\mathbf{y})\right)
\end{align}
for $k_{1}=\max(1,\eta(\mathbf{\theta}^{\ast}))$. Thus the
Assumptions~\textbf{(B.1)-(B.4)} are satisfied, and we have the claim
in \eqref{nlulm1}.


To establish \eqref{nlulm2}, we note that, for sufficiently large
$i$,
\begin{align}
\label{nlulm12}
&
\mathbb{E}_{\mathbf{\theta}^{\ast}}
\left[\left\|\widetilde{\mathbf{x}}^{\circ}_{\mbox{\scriptsize{avg}}}(i)
-h(\mathbf{\theta}^{\ast})\right\|^{2}\right]
=
\\
&
\nonumber
\hspace{1cm}
=
(1-\alpha(i-1))^{2}\mathbb{E}_{\mathbf{\theta}^{\ast}}
\left[\left\|\widetilde{\mathbf{x}}^{\circ}_{\mbox{\scriptsize{avg}}}(i-1)
-h(\mathbf{\theta}^{\ast})\right\|^{2}\right]
+
\\
&
\nonumber
\hspace{3cm}
+
\alpha^{2}(i-1)\eta(\mathbf{\theta}^{\ast})
\\
&
\nonumber
\hspace{1cm}
\leq
 (1-\alpha(i-1))\mathbb{E}_{\mathbf{\theta}^{\ast}}
\left[\left\|\widetilde{\mathbf{x}}^{\circ}_{\mbox{\scriptsize{avg}}}(i-1)
-h(\mathbf{\theta}^{\ast})\right\|^{2}\right]+
\\
&
\nonumber
\hspace{3cm}
+\alpha^{2}(i-1)\eta(\mathbf{\theta}^{\ast})
\end{align}
where the last step follows from the fact that
$0\leq(1-\alpha(i))\leq 1$ for sufficiently large $i$. Continuing
the recursion in \eqref{nlulm12}, we have for sufficiently
large $j\leq i$
\begin{align}
\label{nlulm13}
&\mathbb{E}_{\mathbf{\theta}^{\ast}}
\left[\left\|\widetilde{\mathbf{x}}^{\circ}_{\mbox{\scriptsize{avg}}}(i)
-h(\mathbf{\theta}^{\ast})\right\|^{2}\right]
\leq
\\
&
\nonumber
\hspace{1cm}
\leq
\left(\prod_{k=j}^{i-1}(1-\alpha(k))\right)
\left\|\widetilde{\mathbf{x}}^{\circ}_{\mbox{\scriptsize{avg}}}(0)
-h(\mathbf{\theta}^{\ast})\right\|^{2}
+
\\
&
\nonumber
\hspace{2cm}
+
\eta(\mathbf{\theta}^{\ast})\sum_{k=j}^{i-1}
\left[\left(\prod_{l=k+1}^{i-1}
\left(1-\alpha(l)\right)\right)\alpha^{2}(k)\right]\nonumber
\\
&
\nonumber
\hspace{1cm}
\leq
\left(e^{-\sum_{k=j}^{i-1}\alpha(k)}\right)\left\|\widetilde{\mathbf{x}}^{\circ}_{\mbox{\scriptsize{avg}}}(0)-h(\mathbf{\theta}^{\ast})\right\|^{2} +
\\
&
\nonumber
\hspace{2cm}
+
\eta(\mathbf{\theta}^{\ast})\sum_{k=0}^{i-1}\left[\left(\prod_{l=k+1}^{i-1}\left(1-\alpha(l)\right)\right)\alpha^{2}(k)\right]
\end{align}
From Assumption~\textbf{(D.5)}, we note that
$\sum_{k=j}^{i-1}\alpha(k)\rightarrow\infty$ as
$i\rightarrow\infty$ because $0.5<\tau_{1}\leq 1$. Thus, the first
term in \eqref{nlulm13} goes to zero as $i\rightarrow\infty$.
The second term in \eqref{nlulm13} falls under the purview of
Lemma~\ref{prop_alpha} with $\delta_{1}=\tau_{1}$ and
$\delta_{2}=2\tau_{1}$ and hence goes to zero as
$i\rightarrow\infty$. We thus have
\begin{equation}
\label{nlulm14}
\lim_{i\rightarrow\infty}\mathbb{E}_{\mathbf{\theta}^{\ast}}
\left[\left\|\widetilde{\mathbf{x}}^{\circ}_{\mbox{\scriptsize{avg}}}(i)
-h(\mathbf{\theta}^{\ast})\right\|^{2}\right]=0
\end{equation}
\end{proof}
}



\textbf{Proof of Lemma~\ref{circcons}}

{\small

\begin{proof}
Recall from \eqref{nlu04} and \eqref{nlulm4} that the evolution of
the sequences
$\left\{\widetilde{\mathbf{x}}^{\circ}(i)\right\}_{i\geq 0}$ and
$\left\{\widetilde{\mathbf{x}}^{\circ}_{\mbox{\scriptsize{avg}}}(i)\right\}_{i\geq 0}$ are
given by
\begin{align}
\label{circcons2}
\widetilde{\mathbf{x}}^{\circ}(i+1)
&
=
\widetilde{\mathbf{x}}^{\circ}(i)-\beta(i)\left(\overline{L}\otimes
I_{M}\right)\widetilde{\mathbf{x}}^{\circ}(i)
-
\\
&
\nonumber
\hspace{1cm}
-
\alpha(i)\left[\widetilde{\mathbf{x}}(i)-J(\mathbf{z}(i))\right]
\label{circcons3}
\\
\widetilde{\mathbf{x}}^{\circ}_{\mbox{\scriptsize{avg}}}(i+1)
&
=
\widetilde{\mathbf{x}}^{\circ}_{\mbox{\scriptsize{avg}}}(i)
-
\\
&
\nonumber
\hspace{1cm}
-
\alpha(i)\left[\widetilde{\mathbf{x}}^{\circ}_{\mbox{\scriptsize{avg}}}(i)-\frac{1}{N}\sum_{n=1}^{N}g_{n}(\mathbf{z}_{n}(i))\right]
\end{align}
To establish the claim~\eqref{circcons1},
Lemma~\ref{circcons}, we prove
\begin{equation}
\label{circcons4}
\mathbb{P}_{\mathbf{\theta}^{\ast}}\left[\lim_{i\rightarrow\infty}\left\|\widetilde{\mathbf{x}}^{\circ}(i)-\left(\mathbf{1}_{N}\otimes\widetilde{\mathbf{x}}^{\circ}_{\mbox{\scriptsize{avg}}}(i)\right)\right\|=0\right]=1
\end{equation}
Recall the matrix
\begin{equation}
\label{circcons5}
P=\frac{1}{N}\left(\mathbf{1}_{N}\otimes
I_{M}\right)\left(\mathbf{1}_{N}\otimes I_{M}\right)^{T}
\end{equation}
and note that
\[
P\widetilde{\mathbf{x}}^{\circ}(i)=
\mathbf{1}_{N}\otimes\widetilde{\mathbf{x}}^{\circ}_{\mbox{\scriptsize{avg}}}(i),~~P\mathbf{1}_{N}\otimes\widetilde{\mathbf{x}}^{\circ}_{\mbox{\scriptsize{avg}}}(i)=\mathbf{1}_{N}\otimes\widetilde{\mathbf{x}}^{\circ}_{\mbox{\scriptsize{avg}}}(i), \forall i
\]
From \eqref{circcons2}-\eqref{circcons3}, we then have
\begin{align}
\label{circcons7}
&
\widetilde{\mathbf{x}}^{\circ}(i+1)
-\left(\mathbf{1}_{N}
\otimes\widetilde{\mathbf{x}}^{\circ}_{\mbox{\scriptsize{avg}}}(i+1)
\right)
=
\\
&
\nonumber
=
\left[I_{NM}-\beta(i)\left(\overline{L}\otimes
I_{M}\right)-\alpha(i)I_{NM}-P\right]
\left[\widetilde{\mathbf{x}}^{\circ}(i)
-
\right.
\\
&
\nonumber
\hspace{5cm}
\left.
-\left(\mathbf{1}_{N}\otimes
\widetilde{\mathbf{x}}^{\circ}_{\mbox{\scriptsize{avg}}}(i)\right)
\right]
\nonumber
\\
&
\nonumber
\hspace{3.75cm}
+\alpha(i)\left[J(\mathbf{z}(i))-PJ(\mathbf{z}(i))\right]
\end{align}
Choose $\delta$ satisfying
\begin{equation}
\label{circcons8}
0<\delta<\tau_{1}-\frac{1}{2+\epsilon_{1}}-\tau_{2}.
\end{equation}
Such a choice exists by Assumption~\textbf{(D.5)}.
Now claim:
\begin{equation}
\label{circcons9}
\mathbb{P}_{\mathbf{\theta}^{\ast}}\left[\lim_{i\rightarrow\infty}\frac{1}{(i+1)^{\frac{1}{2+\epsilon_{1}}+\delta}}\left\|J(\mathbf{z}(i))-PJ(\mathbf{z}(i))\right\|=0\right]=1
\end{equation}
Indeed, consider any $\epsilon>0$. We then have from
Assumption~\textbf{(D.4)} and Chebyshev's inequality
\begin{align}
\nonumber
&\sum_{i\geq 0}
\mathbb{P}_{\mathbf{\theta}^{\ast}}
\left[\frac{1}{(i+1)^{\frac{1}{2+\epsilon_{1}}+\delta}}
\left\|J(\mathbf{z}(i))-PJ(\mathbf{z}(i))
\right\|>\epsilon
\right]
\leq
\\
&
\nonumber
\hspace{4.15cm}
\leq
\hspace{-.125cm}
\sum_{i\geq
0}\frac{1}{(i+1)^{1+\delta(2+\epsilon_{1})}\epsilon^{2+\epsilon_{1}}}
\nonumber
\\
&
\mathbb{E}_{\mathbf{\theta}}
\left[\left\|J(\mathbf{z}(i))
-PJ(\mathbf{z}(i))\right\|^{2+\epsilon_{1}}\right]
\nonumber
\hspace{-.1cm}
=
\hspace{-.1cm}
\frac{\kappa(\mathbf{\theta}^{\ast})}{\epsilon^{2+\epsilon_{1}}}
\hspace{-.125cm}
\sum_{i\geq
0}\frac{1}{(i+1)^{1+\delta(2+\epsilon_{1})}}\nonumber
\\
&
\hspace{4.15cm}
<
\nonumber
\infty
\end{align}
It then follows from the Borel-Cantelli Lemma
(see~\cite{Kallenberg}) that for arbitrary $\epsilon>0$
\begin{equation}
\label{circcons11}
\mathbb{P}_{\mathbf{\theta}^{\ast}}\left[\frac{1}{(i+1)^{\frac{1}{2+\epsilon_{1}}+\delta}}\left\|J(\mathbf{z}(i))-PJ(\mathbf{z}(i))\right\|>\epsilon~\mbox{i.o.}~\right]=0
\end{equation}
where i.o. stands for infinitely often. Since the above holds for
$\epsilon$ arbitrarily small, we have (see~\cite{Kallenberg}) the
a.s. claim in \eqref{circcons9}.

Consider the set $\Omega_{1}\subset\Omega$ with
$\mathbb{P}_{\mathbf{\theta}^{\ast}}\left[\Omega_{1}\right]=1$,
where the a.s. property in \eqref{circcons9} holds. Also,
consider the set $\Omega_{2}\subset\Omega$ with
$\mathbb{P}_{\mathbf{\theta}^{\ast}}\left[\Omega_{2}\right]=1$,
where the sequence
$\left\{\widetilde{\mathbf{x}}^{\circ}_{\mbox{\scriptsize{avg}}}(i)\right\}_{i\geq
0}$ converges to $h(\mathbf{\theta}^{\ast})$. Let
$\Omega_{3}=\Omega_{1}\cap\Omega_{2}$. It is clear that
$\mathbb{P}_{\mathbf{\theta}^{\ast}}\left[\Omega_{3}\right]=1$. We
will now show that, on $\Omega_{3}$, the sample paths of the
sequence $\left\{\widetilde{\mathbf{x}}^{\circ}(i)\right\}_{i\geq 0}$
converge to $\left(\mathbf{1}_{N}\otimes
h(\mathbf{\theta}^{\ast})\right)$, thus proving the Lemma. In the
following we index the sample paths by $\omega$ to emphasize the
fact that we are establishing properties pathwise.

From \eqref{circcons7}, we have on $\omega\in\Omega_{3}$
\begin{align*}
&
\left\|\widetilde{\mathbf{x}}^{\circ}(i+1,\omega)
-\left(\mathbf{1}_{N}
\otimes\widetilde{\mathbf{x}}^{\circ}_{\mbox{\scriptsize{avg}}}
(i+1,\omega)\right)\right\|
\leq
\\
&
\hspace{.5cm}
\leq
\left\|I-\beta(i)\left(\overline{L}\otimes
I_{M}\right)-\alpha(i)I_{NM}-P\right\|
\\
&
\hspace{3.5cm}
\left\|\widetilde{\mathbf{x}}^{\circ}(i,\omega)
-\left(\mathbf{1}_{N}\otimes
\widetilde{\mathbf{x}}^{\circ}_{\mbox{\scriptsize{avg}}}(i,\omega)\right)\right\|
\nonumber
\\
\nonumber
&
\hspace{1.0cm}
+\frac{a}{(i+1)^{\tau_{1}-\frac{1}{2+\epsilon_{1}}-\delta}}
\\
&
\hspace{1.95cm}
\left\|\frac{1}{(i+1)^{\frac{1}{2+\epsilon_{1}}+\delta}}
\left[J(\mathbf{z}(i,\omega))-PJ(\mathbf{z}(i,\omega))\right]\right\|
\end{align*}
For sufficiently large $i$, we have
\begin{equation}
\label{circcons13}
\hspace{-.05cm}
\left\|I
\hspace{-.1cm}
-
\hspace{-.1cm}
\beta(i)\left(\overline{L}\otimes
I_{M}\right)
\hspace{-.1cm}
-
\hspace{-.1cm}
\alpha(i)I_{NM}-P\right\|\leq
1-\beta(i)\lambda_{2}(\overline{L})
\end{equation}
From \eqref{circcons11} for $\omega\in\Omega_{3}$ we can
choose $\epsilon>0$ and $j(\omega)$ such that $\forall i\geq j(\omega)$
\begin{equation}
\label{circcons14}
\left\|\frac{1}{(i+1)^{\frac{1}{2+\epsilon_{1}}
+\delta}}
\left[J(\mathbf{z}(i,\omega))-
PJ(\mathbf{z}(i,\omega))\right]
\right\|\leq\epsilon.
\end{equation}
Let $j(\omega)$ be sufficiently large such that
\eqref{circcons13} is also satisfied in addition to
\eqref{circcons14}. We then have for
$\omega\in\Omega_{3},~i\geq j(\omega)$
\begin{align}
\label{circcons15}
&\left\|\widetilde{\mathbf{x}}^{\circ}(i,\omega)
-\left(\mathbf{1}_{N}\otimes
\widetilde{\mathbf{x}}^{\circ}_{\mbox{\scriptsize{avg}}}(i,\omega)\right)\right\|
\leq
\\
&
\nonumber
\leq
\left(\prod_{k=j(\omega)}^{i-1}\left(1-\beta(k)\lambda_{2}(\overline{L})
\right)\right)
\left\|\widetilde{\mathbf{x}}^{\circ}(j(\omega),\omega)
-
\right.
\\
&
\nonumber
\hspace{4.0cm}
-
\left.
\left(\mathbf{1}_{N}
\otimes\widetilde{\mathbf{x}}^{\circ}_{\mbox{\scriptsize{avg}}}
(j(\omega),\omega)\right)\right\|
+
\nonumber
\\
&
\nonumber
+
a\epsilon
\hspace{-.25cm}
\sum_{k=j(\omega)}^{i-1}
\left[\left(\prod_{l=k+1}^{i-1}\left(1-\beta(l)\lambda_{2}(\overline{L})
\right)\right)\frac{1}{(k+1)^{\tau_{1}-\frac{1}{2+\epsilon_{1}}-\delta}}
\right]
\nonumber
\end{align}
For the first term on the R.H.S. of \eqref{circcons15} we
note that
\begin{eqnarray}
\label{circcons16}
\prod_{k=j(\omega)}^{i-1}\left(1-\beta(k)\lambda_{2}(\overline{L})\right)
& \leq &
e^{-\lambda_{2}(\overline{L})\sum_{k=j(\omega)}^{i-1}\beta(k)}
\\
\nonumber
& = &
e^{-b\lambda_{2}(\overline{L})\sum_{k=j(\omega)}^{i-1}\frac{1}{(k+1)^{\tau_{2}}}}
\end{eqnarray}
which goes to zero as $i\rightarrow\infty$ since $\tau_{2}<1$ by
Assumption~\textbf{(D.5)}. Hence the first term on the R.H.S. of
\eqref{circcons15} goes to zero as $i\rightarrow\infty$. The
summation in the second term on the R.H.S. of
\eqref{circcons15} falls under the purview of
Lemma~\ref{prop_alpha} with $\delta_{1}=\tau_{2}$ and
$\delta_{2}=\tau_{1}-\frac{1}{2+\epsilon_{1}}-\delta$. It follows
from the choice of $\delta$ in \eqref{circcons8} and
Assumption~\textbf{(D.5)} that $\delta_{1}<\delta_{2}$ and hence
the term
$\sum_{k=j(\omega)}^{i-1}\left[\left(\prod_{l=k+1}^{i-1}\left(1-\beta(l)\lambda_{2}(\overline{L})\right)\right)\frac{1}{(k+1)^{\tau_{1}-\frac{1}{2+\epsilon_{1}}-\delta}}\right]\rightarrow
0$ as $i\rightarrow\infty$. We then conclude from
\eqref{circcons15} that, for $\omega\in\Omega_{3}$
\begin{equation}
\label{circcons17}
\lim_{i\rightarrow\infty}\left\|\widetilde{\mathbf{x}}^{\circ}(i,\omega)-\left(\mathbf{1}_{N}\otimes\widetilde{\mathbf{x}}^{\circ}_{\mbox{\scriptsize{avg}}}(i,\omega)\right)\right\|=0
\end{equation}
The Lemma then follows from the fact that
$\mathbb{P}_{\mathbf{\theta}^{\ast}}\left[\Omega_{3}\right]=1$.

To establish \eqref{circcons200}, we have from
\eqref{circcons7}
%
\begin{align*}
\nonumber
&
\left\|\widetilde{\mathbf{x}}^{\circ}(i+1)-\left(\mathbf{1}_{N}
\otimes\widetilde{\mathbf{x}}^{\circ}_{\mbox{\scriptsize{avg}}}(i+1)
\right)\right\|^{2}
\leq
\\
&
\nonumber
\hspace{1cm}
\leq
\left\|I-\beta(i)\left(\overline{L}\otimes
I_{M}\right)-\alpha(i)I_{NM}-P\right\|^{2}
\\
&
\nonumber
\hspace{3cm}
\left\|\widetilde{\mathbf{x}}^{\circ}(i)-\left(\mathbf{1}_{N}
\otimes\widetilde{\mathbf{x}}^{\circ}_{\mbox{\scriptsize{avg}}}(i)
\right)\right\|^{2}
+
\\
\nonumber
&
\hspace{.65cm}
+ 2\alpha(i)\left\|I-\beta(i)\left(\overline{L}\otimes
I_{M}\right)-\alpha(i)I_{NM}-P\right\|
\\
&
\nonumber
\hspace{.15cm}
\left\| \widetilde{\mathbf{x}}^{\circ}(i)-\left(\mathbf{1}_{N}
\otimes\widetilde{\mathbf{x}}^{\circ}_{\mbox{\scriptsize{avg}}}(i)
\right)\right\|\left\|J(\mathbf{z}(i))-PJ(\mathbf{z}(i))\right\|
+
\\
&
\hspace{2.75cm}
+\alpha^{2}(i)\left\|J(\mathbf{z}(i))-PJ(\mathbf{z}(i))\right\|^{2}
\end{align*}
Taking expectations on both sides and from \eqref{nlu0001}
\begin{align}
\nonumber
&
\mathbb{E}_{\mathbf{\theta}^{\ast}}
\left[\left\|\widetilde{\mathbf{x}}^{\circ}(i+1)
-\left(\mathbf{1}_{N}\otimes
\widetilde{\mathbf{x}}^{\circ}_{\mbox{\scriptsize{avg}}}(i+1)\right)
\right\|^{2}\right]
\leq
\\
&
\nonumber
\hspace{1cm}
\leq\left\|I-\beta(i)\left(\overline{L}\otimes
I_{M}\right)-\alpha(i)I_{NM}-P\right\|^{2}
\\
&
\nonumber
\hspace{3.0cm}
\mathbb{E}_{\mathbf{\theta}^{\ast}} \left[
\left\|\widetilde{\mathbf{x}}^{\circ}(i)
-\left(\mathbf{1}_{N}\otimes
\widetilde{\mathbf{x}}^{\circ}_{\mbox{\scriptsize{avg}}}(i)\right)
\right\|^{2}\right]+
\\
\nonumber
&
\hspace{.35cm}
+2\alpha(i)\left\|I-\beta(i)\left(\overline{L}\otimes
I_{M}\right)-\alpha(i)I_{NM}-P\right\|\kappa_{1} \left(\mathbf{\theta}^{\ast}\right)
\\
&
\nonumber
\hspace{3cm}
\mathbb{E}_{\mathbf{\theta}^{\ast}}
\left[ \left\|\widetilde{\mathbf{x}}^{\circ}(i)
-\left(\mathbf{1}_{N}\otimes
\widetilde{\mathbf{x}}^{\circ}_{\mbox{\scriptsize{avg}}}(i)\right)
\right\|^{2}\right]
+
\\
\nonumber
&
\hspace{.2cm}
+2\alpha(i)\left\|I-\beta(i)\left(\overline{L}\otimes
I_{M}\right)-\alpha(i)I_{NM}-P\right\|
\kappa_{1}\left(\mathbf{\theta}^{\ast}\right)
+
\\
&
\nonumber
\hspace{5.75cm}
+\alpha^{2}(i)\kappa_{2}(\mathbf{\theta}^{\ast})
\end{align}
where we used the inequality that $\forall i$
\[
 \left\|\widetilde{\mathbf{x}}^{\circ}(i)
\hspace{-.1cm}
-
\hspace{-.1cm}
\left(\mathbf{1}_{N}\otimes
\widetilde{\mathbf{x}}^{\circ}_{\mbox{\scriptsize{avg}}}(i)\right)
\hspace{-.05cm}
\right\|
\hspace{-.05cm}
\leq
\hspace{-.1cm}
\left\|
\widetilde{\mathbf{x}}^{\circ}(i)
\hspace{-.05cm}
-
\hspace{-.075cm}
\left(\mathbf{1}_{N}\otimes
\widetilde{\mathbf{x}}^{\circ}_{\mbox{\scriptsize{avg}}}(i)\right)
\right\|^{2}
\hspace{-.1cm}
+
\hspace{-.1cm}
1.
\]
 Choose $j$ sufficiently large such that $\forall i\geq j$
\[
 \left\|I-\beta(i)\left(\overline{L}\otimes
I_{M}\right)-\alpha(i)I_{NM}-P\right\|
1-\beta(i)\lambda_{2}(\overline{L}).
\]
For $i\geq j$, it can be shown that, for $c_{1}>0$ a constant:
\begin{align}
\label{circcons21}
&
\mathbb{E}_{\mathbf{\theta}^{\ast}}
\left[\left\|\widetilde{\mathbf{x}}^{\circ}(i+1)
-\left(\mathbf{1}_{N}\otimes
\widetilde{\mathbf{x}}^{\circ}_{\mbox{\scriptsize{avg}}}(i+1)\right)
\right\|^{2}\right]
\leq
\\
&
\nonumber
\leq
\left[1-\beta(i)\lambda_{2}(\overline{L})+2\alpha(i)
\kappa_{1}(\mathbf{\theta}^{\ast})\right]
\\
&
\nonumber
\hspace{1cm}
\mathbb{E}_{\mathbf{\theta}^{\ast}}
\left[ \left\|\widetilde{\mathbf{x}}^{\circ}(i)
-\left(\mathbf{1}_{N}\otimes
\widetilde{\mathbf{x}}^{\circ}_{\mbox{\scriptsize{avg}}}(i)\right)
\right\|^{2}\right]
+\alpha(i)c_{1}.
\end{align}
 Now choose $j_{1}\geq j$ and
$0<c_{2}<\lambda_{2}(\overline{L})$\footnote{Such a choice exists
because $\tau_{1}>\tau_{2}$.} such that,
\begin{equation}
\label{circcons21.1}
1-\beta(i)\lambda_{2}(\overline{L})+2\alpha(i)\kappa_{1}(\mathbf{\theta}^{\ast})\leq
1-\beta(i)c_{2},~~\forall i\geq j_{1}
\end{equation}
Then the claim in~\eqref{circcons200} follows because  for $i\geq j_{1}$
{
\small
\begin{align*}
&
\mathbb{E}_{\mathbf{\theta}^{\ast}} \left[
\hspace{-.08cm}\rule{0cm}{.325cm}
\left\|\widetilde{\mathbf{x}}^{\circ}(i)
-\left(\mathbf{1}_{N}\otimes
\widetilde{\mathbf{x}}^{\circ}_{\mbox{\scriptsize{avg}}}(i)\right)
\right\|^{2}\right]
\leq
\\
&
\nonumber
\left(\prod_{k=j_{1}}^{i-1}\left(1-\beta(k)c_{2}\right)\right)
\hspace{-.075cm}
\mathbb{E}_{\mathbf{\theta}^{\ast}}
\hspace{-.075cm}
\left[
\hspace{-.08cm}\rule{0cm}{.325cm} \left\|
\widetilde{\mathbf{x}}^{\circ}(j_{1})-\left(\mathbf{1}_{N}\otimes
\widetilde{\mathbf{x}}^{\circ}_{\mbox{\scriptsize{avg}}}(j)\right)
\right\|^{2}\right]
\hspace{-.075cm}
+
\\
&
\nonumber
\hspace{2cm}
+c_{1}
\sum_{k=j_{1}}^{i-1}\left[\left(\prod_{l=k+1}^{i-1}\left(1-
\beta(l)c_{2}\right)\right) \alpha(k)\right]
\end{align*}
}
and the first and second terms on the R.H.S.~of \eqref{circcons21} vanish as $i\rightarrow\infty$ by the argument in
\eqref{circcons16} and Lemma~\ref{prop_alpha}, respectively.
\end{proof}
}
\vspace*{-.75cm}
\section{Proof of Lemma~\ref{hatconslemma}}
\label{proof_hatconslemma}
{\small
\begin{proof}[Proof of Lemma~\ref{hatconslemma}]
From \eqref{nlu04} and \eqref{hatcons} we have
\begin{align}
\label{hatcons2}
&
\widehat{\mathbf{x}}(i+1)-\widetilde{\mathbf{x}}^{\circ}(i+1)=
\left[I_{NM}-\beta(i)\left(\overline{L}\otimes
I_{M}\right)-
\right.
\\
&
\nonumber
\left.
\phantom{\overline{L}\otimes
I_{M}}
-\alpha(i)I_{NM}\right]
\left[\widehat{\mathbf{x}}(i)-\widetilde{\mathbf{x}}^{\circ}(i)\right]
-\beta(i)\left(\mathbf{\Upsilon}(i)+\mathbf{\Psi}(i)\right)
\end{align}
For sufficiently large $j$, we have
\begin{equation}
\label{hatcons3} \left\|I-\beta(i)\left(\overline{L}\otimes
I_{M}\right)-\alpha(i)I_{NM}\right\|\leq 1-\alpha(i), \forall
i\geq j
\end{equation}
We then have from \eqref{hatcons2}, for $i\geq j$,
\begin{align}
\label{hatcons5}
&
\mathbb{E}_{\mathbf{\theta}^{\ast}}
\left[\left\|\widehat{\mathbf{x}}(i+1)
-\widetilde{\mathbf{x}}^{\circ}(i+1)\right\|^{2}\right]
\leq
\\
&
\nonumber
\leq
\left(1-\alpha(i)\right)^{2}\mathbb{E}_{\mathbf{\theta}^{\ast}}
\left[\left\|\widehat{\mathbf{x}}(i)
-\widetilde{\mathbf{x}}^{\circ}(i)\right\|^{2}\right]
+
\\
&
\nonumber
\hspace{2.5cm}
+\beta^{2}(i)\mathbb{E}_{\mathbf{\theta}^{\ast}}
\left[\left\|\mathbf{\Upsilon}(i)
+\mathbf{\Psi}(i)\right\|^{2}\right]\nonumber
\\
&
\leq
\left(1-\alpha(i)\right)\mathbb{E}_{\mathbf{\theta}^{\ast}}
\left[\left\|\widehat{\mathbf{x}}(i)
-\widetilde{\mathbf{x}}^{\circ}(i)\right\|^{2}\right]+\eta_{q}\beta^{2}(i)
\end{align}
where the last step follows from the fact that
$0\leq(1-\alpha(i))\leq 1$ for $i\geq j$ and \eqref{dith40}.
Continuing the recursion, we have
\begin{align}
\nonumber
&
\hspace{-.325cm}
\mathbb{E}_{\mathbf{\theta}^{\ast}}
\hspace{-.075cm}
\left[\left\|\widehat{\mathbf{x}}(i)
-\widetilde{\mathbf{x}}^{\circ}(i)\right\|^{2}\right]
\hspace{-.075cm}
\leq
\hspace{-.075cm}
\left(\prod_{k=j}^{i-1}(1-\alpha(k))\right)
\hspace{-.125cm}
\left\|\widehat{\mathbf{x}}(j)-
\widetilde{\mathbf{x}}^{\circ}(j)\right\|^{2}
\hspace{-.075cm}
+
\\
&
\label{hatcons6}
\hspace{1.5cm}
+
\eta_{q}\sum_{k=j}^{i-1}\left[\left(\prod_{l=k+1}^{i-1}\left(1-\alpha(l)\right)\right)\beta^{2}(k)\right]
\end{align}
The first and second terms on the R.H.S.~of \eqref{hatcons6} vanish as $i\rightarrow\infty$, respectively because~1) of an argument similar to the proof of Lemma~\ref{nlulm}, and~2) by  Lemma~\ref{prop_alpha}, with $\delta_{1}=\tau_{1}, \delta_{2}=2\tau_{2}$,
since by Assumption~\textbf{(D.5)}, $2\tau_{2}>\tau_{1}$. Thus:
\begin{equation}
\label{hatcons7}
\lim_{i\rightarrow\infty}\mathbb{E}_{\mathbf{\theta}^{\ast}}
\left[\left\|\widehat{\mathbf{x}}(i)
-\widetilde{\mathbf{x}}^{\circ}(i)\right\|^{2}\right]=0
\end{equation}
which shows that the sequence
$\left\{\left\|\widehat{\mathbf{x}}(i)-\widetilde{\mathbf{x}}^{\circ}(i)\right\|\right\}_{i\geq
0}$ converges to 0 in $\mathcal{L}_{2}$ (mean-squared sense).
We then have from Lemma~\ref{circcons}
\begin{align}
\label{hatcons202}
&
\lim_{i\rightarrow\infty}\mathbb{E}_{\mathbf{\theta}^{\ast}}
\left[\left\|\widehat{\mathbf{x}}(i)-\mathbf{1}_{N}\otimes
h(\mathbf{\theta}^{\ast})\right\|^{2}\right]
\leq
\\
&
\nonumber
\hspace{1cm}
\leq
2\lim_{i\rightarrow\infty}\mathbb{E}_{\mathbf{\theta}^{\ast}}
\left[\left\|\widehat{\mathbf{x}}(i)
-\widetilde{\mathbf{x}}^{\circ}(i)\right\|^{2}\right]+
\\
&
\nonumber
\hspace{2cm}
+2\lim_{i\rightarrow\infty}\mathbb{E}_{\mathbf{\theta}^{\ast}}
\left[\left\|\widetilde{\mathbf{x}}^{\circ}(i)-\mathbf{1}_{N}\otimes
h(\mathbf{\theta}^{\ast})\right\|^{2}\right]
\nonumber
= 0
\end{align}
thus establishing the claim in \eqref{hatcons200}.

We now show that the sequence
$\left\{\left\|\widehat{\mathbf{x}}(i)-\widetilde{\mathbf{x}}^{\circ}(i)\right\|\right\}_{i\geq
0}$ also converges a.s. to a finite random variable. Choose $j$
sufficiently large as in \eqref{hatcons3}. We then have from
\eqref{hatcons2}
\begin{align}
\label{hatcons8}
&
\widehat{\mathbf{x}}(i)-\widetilde{\mathbf{x}}^{\circ}(i)
=
\\
&
\nonumber
\left(\prod_{k=j}^{i-1}\left(I_{NM}-\beta(k)\left(\overline{L}\otimes
I_{M}\right)-\alpha(k)I\right)
\hspace{-.1cm}
\right)
\hspace{-.1cm}
\left(\widehat{\mathbf{x}}(j)-\widetilde{\mathbf{x}}^{\circ}(j)\right)
-
\\
&
\nonumber
\hspace{.25cm}
-\sum_{k=j}^{i-1}\left[\left(\prod_{l=k+1}^{i-1}
\left(I_{NM}-\beta(l)\left(\overline{L}\otimes
I_{M}\right)-\alpha(l)I\right)\right)
\right.
\\
&
\nonumber
\hspace{5.15cm}
\left.
\phantom{\prod_{l=k+1}^{i-1}}
\beta(k)\mathbf{\Upsilon}(k)\right]
-
\nonumber
\\
&
\nonumber
\hspace{.25cm}
-\sum_{k=j}^{i-1}\left[\left(\prod_{l=k+1}^{i-1}\left(I_{NM}-\beta(l)\left(\overline{L}\otimes
I_{M}\right)-\alpha(l)I\right)\right)
\right.
\\
&
\nonumber
\hspace{5.15cm}
\left.
\phantom{\prod_{l=k+1}^{i-1}}
\beta(k)\mathbf{\Psi}(k)\right]
\end{align}
The first term on the R.H.S. of \eqref{hatcons8} converges
a.s. to zero as $i\rightarrow\infty$ by a similar argument as in
the proof of Lemma~\ref{nlulm}. Since the sequence
$\left\{\mathbf{\Upsilon}(i)\right\}_{i\geq 0}$ is i.i.d., the second term is
a weighted summation of independent random vectors. Define the
triangular array of weight matrices, $\left\{A_{i,k},~j\leq k\leq
i-1\right\}_{i>j}$, by
\begin{equation}
\label{hatcons9}
A_{i,k}=\prod_{l=k+1}^{i-1}\left(I_{NM}-\beta(l)\left(\overline{L}\otimes
I_{M}\right)-\alpha(l)I\right)\beta(k)
\end{equation}
We then have
\begin{align}
\label{hatcons10}
&
\sum_{k=j}^{i-1}\left[\left(\prod_{l=k+1}^{i-1}\left(I_{NM}-\beta(l)\left(\overline{L}\otimes
I_{M}\right)-\alpha(l)I\right)\right)
\right.
\\
&
\nonumber
\hspace{3.30cm}
\left.
\phantom{\prod_{l=k+1}^{i-1}}
\beta(k)\mathbf{\Upsilon}(k)\right]
= \sum_{k=j}^{i-1}A_{i,k}\mathbf{\Upsilon}(k)
\end{align}
By Lemma~\ref{prop_alpha} and Assumption~\textbf{(D.5)} we note
that
\begin{align}
\label{hatcons11}
&
\limsup_{i\rightarrow\infty}\sum_{k=j}^{i-1}\left\|A_{i,k}\right\|^{2}
\leq
\\
&
\nonumber
\hspace{2cm}
\leq
\limsup_{i\rightarrow\infty}\sum_{k=j}^{i-1}\left[\left(\prod_{l=k+1}^{i-1}\left(1-\alpha(l)\right)\right)\beta^{2}(k)\right]\nonumber
\\
&
\nonumber
\hspace{2cm}
=
 0
\end{align}
It then follows that
\begin{equation}
\label{hatcons12}
\sup_{i>j}\sum_{k=j}^{i-1}\left\|A_{i,k}\right\|^{2}=C_{3}<\infty
\end{equation}
The sequence
$\left\{\sum_{k=j}^{i-1}A_{i,k}\mathbf{\Upsilon}(k)\right\}_{i>j}$
then converges a.s. to a finite random vector by standard results
from the limit theory of weighted summations of independent random
vectors (see~\cite{Chow,ChowLai,Stout}).

In a similar way, the last term on the R.H.S of
\eqref{hatcons8} converges a.s. to a finite random vector
since by the properties of dither the sequence
$\left\{\mathbf{\Psi}(i)\right\}_{i\geq 0}$ is i.i.d. It then follows from
\eqref{hatcons8} that the sequence
$\left\{\widehat{\mathbf{x}}(i)-\widetilde{\mathbf{x}}^{\circ}(i)\right\}_{i\geq
0}$ converges a.s. to a finite random vector, which in turn
implies that the sequence
$\left\{\left\|\widehat{\mathbf{x}}(i)-\widetilde{\mathbf{x}}^{\circ}(i)\right\|\right\}_{i\geq
0}$ converges a.s. to a finite random variable. However, we have
already shown that the sequence
$\left\{\left\|\widehat{\mathbf{x}}(i)-\widetilde{\mathbf{x}}^{\circ}(i)\right\|\right\}_{i\geq
0}$ converges in mean-squared sense to 0. It then follows from the
uniqueness of the mean-squared and a.s. limit, that the sequence
$\left\{\left\|\widehat{\mathbf{x}}(i)-\widetilde{\mathbf{x}}^{\circ}(i)\right\|\right\}_{i\geq
0}$ converges a.s. to 0. In other words,
\begin{equation}
\label{hatcons13}
\mathbb{P}_{\mathbf{\theta}^{\ast}}\left[\lim_{i\rightarrow\infty}\left\|\widehat{\mathbf{x}}(i)-\widetilde{\mathbf{x}}^{\circ}(i)\right\|=0\right]=1
\end{equation}
The claim in \eqref{hatcons1} then follows from
\eqref{hatcons13} and Lemma~\ref{circcons}.

\end{proof}
}

\section{Proofs of Theorems~\ref{theorem_tilde},\ref{theorem_untrans}}
\label{proof_theorem_tilde}

\textbf{Proof of Theorem~\ref{theorem_tilde}}

{\small

\begin{proof}
Recall the evolution of the sequences
$\left\{\widetilde{\mathbf{x}}(i)\right\}_{i\geq 0}$,
$\left\{\widehat{\mathbf{x}}(i)\right\}_{i\geq 0}$ in
\eqref{nlu3} and \eqref{hatcons}.

Then writing $L(i)=\overline{L}+\widetilde{L}(i)$ and using the
fact that
\begin{equation}
\label{tildecons8} \left(\widetilde{L}(i)\otimes
I_{M}\right)\widehat{\mathbf{x}}(i) =
\left(\widetilde{L}(i)\otimes
I_{M}\right)\widehat{\mathbf{x}}_{\mathcal{C}^{\perp}}(i),~\forall
i
\end{equation}
 we have from \eqref{nlu3} and \eqref{hatcons}
\begin{align}
\label{tildecons6}
&
\hspace{-.175cm}
\widetilde{\mathbf{x}}(i+1)-\widehat{\mathbf{x}}(i+1)=
\left[I_{NM}-\beta(i)\left(L(i)\otimes
I_{M}\right)-
\right.
\\
&
\nonumber
\hspace{.5cm}
\left.
-
\alpha(i)I_{NM}\right]
\left(\widetilde{\mathbf{x}}(i)-\widehat{\mathbf{x}}(i)\right)-\beta(i)\left(\widetilde{L}(i)\otimes
I_{M}\right)\widehat{\mathbf{x}}_{\mathcal{C}^{\perp}}(i)
\end{align}
For ease of notation, introduce the sequence
$\left\{\mathbf{y}(i)\right\}_{i\geq 0}$, given by
\begin{equation}
\label{tildecons9}
\mathbf{y}(i)=\widetilde{\mathbf{x}}(i)-\widehat{\mathbf{x}}(i)
\end{equation}
To prove \eqref{theorem_tilde1}, it clearly suffices (from
Lemma~\ref{hatconslemma}) to prove
\begin{equation}
\label{tildecons10}
\mathbb{P}_{\mathbf{\theta}^{\ast}}\left[\lim_{i\rightarrow\infty}\mathbf{y}(i)=\mathbf{0}\right]=1
\end{equation}
From \eqref{tildecons6}, the evolution of the sequence $\left\{\mathbf{y}(i)\right\}_{i\geq 0}$ is:
\begin{align}
\nonumber
&
\hspace{-.1cm}
\mathbf{y}(i+1)=\left[I_{NM}-\beta(i)\left(\overline{L}\otimes
I_{M}\right)-\alpha(i)I_{NM}\right]\mathbf{y}(i)
-
\\
&
\label{tildecons11}
\hspace{-.225cm}
-
\beta(i)\left(\widetilde{L}(i)\otimes
I_{M}\right)\mathbf{y}(i)-\beta(i)\left(\widetilde{L}(i)\otimes
I_{M}\right)\widehat{\mathbf{x}}_{\mathcal{C}^{\perp}}(i)
\end{align}
The sequence $\left\{\mathbf{y}(i)\right\}_{i\geq 0}$ is not uniformly bounded, in general,  because of
$\beta(i)\left(\widetilde{L}(i)\otimes
I_{M}\right)\widehat{\mathbf{x}}_{\mathcal{C}^{\perp}}(i)$. However, from Lemma~\ref{hatconslemma}:
\begin{equation}
\label{tildecons7}
\mathbb{P}_{\mathbf{\theta}^{\ast}}\left[\lim_{i\rightarrow\infty}\widehat{\mathbf{x}}_{\mathcal{C}^{\perp}}(i)=\mathbf{0}\right]=1
\end{equation}
and, hence, asymptotically, its effect diminishes. However, $\left\{\widehat{\mathbf{x}}_{\mathcal{C}^{\perp}}(i)\right\}_{i\geq
0}$ is not uniformly bounded over sample paths and, hence, we use
truncation arguments (see, e.g., \cite{Nevelson}). For a
scalar $a$, define its truncation $(a)^{R}$ at level $R>0$ by
\begin{equation}
\label{tildecons12} (a)^{R}=\left\{\begin{array}{ll}
                                \frac{a}{|a|}\min(|a|,R) & \mbox{if
                                $a\neq 0$}\\
                                0 & \mbox{if $a=0$}
                                \end{array}
                                \right.
\end{equation}
For a vector, the truncation operation applies componentwise. For
$R>0$, we also consider the sequences,
$\left\{\mathbf{y}_{R}(i)\right\}_{i\geq 0}$:
\begin{align}
\nonumber&
\hspace{-.1cm}
\mathbf{y}_{R}(i+1)=\left[I_{NM}-\beta(i)\left(\overline{L}\otimes
I_{M}\right)-\alpha(i)I_{NM}\right]\mathbf{y}_{R}(i)
-
\\
&
\nonumber
\hspace{2.0cm}
-
\beta(i)\left(\widetilde{L}(i)\otimes
I_{M}\right)\mathbf{y}_{R}(i)
-
\\
&
\label{tildecons13}
\hspace{2.0cm}
-
\beta(i)\left(\widetilde{L}(i)\otimes
I_{M}\right)\left(\widehat{\mathbf{x}}_{\mathcal{C}^{\perp}}(i)\right)^{R}
\end{align}
We will show that for every $R>0$
\begin{equation}
\label{tildecons14}
\mathbb{P}_{\mathbf{\theta}^{\ast}}\left[\lim_{i\rightarrow\infty}\mathbf{y}_{R}(i)=\mathbf{0}\right]=1
\end{equation}
Now, the sequence
$\left\{\widehat{\mathbf{x}}_{\mathcal{C}^{\perp}}(i)\right\}_{i\geq 0}$
converges a.s. to zero, and, hence, for every $\epsilon>0$, there
exists $R(\epsilon)>0$ (see~\cite{Kallenberg}), such that
\begin{equation}
\label{tildecons15}
\mathbb{P}_{\mathbf{\theta}^{\ast}}\left[\sup_{i\geq
0}\left\|\widehat{\mathbf{x}}_{\mathcal{C}^{\perp}}(i)-\left(\widehat{\mathbf{x}}_{\mathcal{C}^{\perp}}(i)\right)^{R(\epsilon)}\right\|=0\right]>1-\epsilon
\end{equation}
and, hence, from \eqref{tildecons11}-\eqref{tildecons13}
\begin{equation}
\label{tildecons16}
\mathbb{P}_{\mathbf{\theta}^{\ast}}\left[\sup_{i\geq
0}\left\|\mathbf{y}(i)-\mathbf{y}^{R(\epsilon)}(i)\right\|=0\right]>1-\epsilon
\end{equation}
This, together with \eqref{tildecons14}, will then imply
\begin{equation}
\label{tildecons17}
\mathbb{P}_{\mathbf{\theta}^{\ast}}\left[\lim_{i\rightarrow\infty}\mathbf{y}(i)=\mathbf{0}\right]>1-\epsilon
\end{equation}
Since $\epsilon>0$ is arbitrary in \eqref{tildecons17}, we
will be able to conclude \eqref{theorem_tilde1}. Thus, the proof
reduces to establishing \eqref{tildecons14} for every $R>0$,
which is carried out in the following.

For a given $R>0$ consider the recursion given in
\eqref{tildecons13}. Choose $\varepsilon_{1}>0$ and
$\varepsilon_{2}<0$ such that
\begin{equation}
\label{tildecons17-b} 1-\varepsilon_{2}<2\tau_{2}-\varepsilon_{1}.
\end{equation}
Because $\tau_{2}>.5$ in
Assumption~\textbf{(D.5)} permits such choice of
$\varepsilon_{1},\varepsilon_{2}$. Let $\rho>0$ be constant and define
$V:\mathbb{N}\times\mathbb{R}^{NM}\longmapsto\mathbb{R}^{+}$
\begin{equation}
\label{tildecons18}
V\left(i,\mathbf{x}\right)=i^{\varepsilon_{1}}\mathbf{x}^{T}\left(\overline{L}\otimes
I_{M}\right)\mathbf{x}+\rho i^{\varepsilon_{2}}.
\end{equation}
 Recall the filtration
$\left\{\mathcal{F}_{i}\right\}_{i\geq 0}$  in \eqref{nalg8}
\[
\mathcal{F}_{i}=\sigma\left(\mathbf{x}(0),\left\{L(j),\left\{\mathbf{z}_{n}(j)\right\}_{1\leq
N},~\mathbf{\Upsilon}(j),\mathbf{\Psi}(j)\right\}_{0\leq
j<i}\right)
\]
to which all the processes of interest are adapted. We now show
that there exists an integer $i_{R}>0$ sufficiently large, such
that the process $\left\{V(i,\mathbf{y}_{R}(i))\right\}_{i\geq i_{R}}$ is a
non-negative supermartingale w.r.t. the filtration
$\left\{\mathcal{F}_{i}\right\}_{i\geq i_{R}}$. To this end, we note that, using the recursion~\eqref{tildecons13}:
\begin{align}
\label{tildecons20}
&\mathbb{E}_{\mathbf{\theta}^{\ast}}
\left[V(i+1,\mathbf{y}_{R}(i+1))\,|\,\mathcal{F}_{i}\right]-V(i,\mathbf{y}_{R}(i))
=
\\
&
\nonumber
\hspace{.5cm}
(i+1)^{\varepsilon_{1}}\mathbf{y}_{R}^{T}(i+1)\left(\overline{L}\otimes
I_{M}\right)\mathbf{y}_{R}(i+1)+\rho
(i+1)^{\varepsilon_{2}}
\\
\nonumber
&
\hspace{3.15cm}
 -i^{\varepsilon_{1}}\mathbf{y}_{R}^{T}(i)\left(\overline{L}\otimes I_{M}\right)\mathbf{y}_{R}(i)-\rho i^{\varepsilon_{2}}
\\
&
\nonumber
\hspace{0.0cm}
=
(i+1)^{\varepsilon_{1}}\left[\mathbf{y}_{R,\mathcal{C}^{\perp}}^{T}(i)\left(\overline{L}\otimes
I_{M}\right)\mathbf{y}_{R,\mathcal{C}^{\perp}}(i)
-
\right.
\\
&
\nonumber
\hspace{2.4cm}
\left.
-2\beta(i)\mathbf{y}_{R,\mathcal{C}^{\perp}}^{T}(i)\left(\overline{L}\otimes I_{M}\right)^{2}\mathbf{y}_{R,\mathcal{C}^{\perp}}(i)
-
\right.
\\
&
\nonumber
\hspace{0.0cm}
 \left.
-2\alpha(i)\mathbf{y}_{R,\mathcal{C}^{\perp}}^{T}(i)
\left(\overline{L}\otimes
I_{M}\right)\mathbf{y}_{R,\mathcal{C}^{\perp}}(i)
+
\right.
\\
&
\nonumber
\hspace{1.8cm}
\left.
+
2\beta(i)\alpha(i)\mathbf{y}_{R,\mathcal{C}^{\perp}}^{T}(i)\left(\overline{L}\otimes I_{M}\right)^{2}\mathbf{y}_{R,\mathcal{C}^{\perp}}(i)
+
\right.
\\
\nonumber
&
\hspace{0.0cm}
\left.
+\beta^{2}(i)\mathbf{y}_{R,\mathcal{C}^{\perp}}^{T}(i)\left(\overline{L}\otimes
I_{M}\right)^{3}\mathbf{y}_{R,\mathcal{C}^{\perp}}(i)+
\right.
\\
&
\nonumber
\hspace{2.55cm}
\left.
+\alpha^{2}(i)\mathbf{y}_{R,\mathcal{C}^{\perp}}^{T}(i)\left(\overline{L}\otimes I_{M}\right)\mathbf{y}_{R,\mathcal{C}^{\perp}}(i)
+
\right.
\\
 \nonumber
&
\hspace{0.0cm}
\left.
+
\beta^{2}(i)\mathbb{E}_{\mathbf{\theta}^{\ast}}
\left[\mathbf{y}_{R,\mathcal{C}^{\perp}}^{T}(i)
\left(\widetilde{L}(i)\otimes
I_{M}\right)\left(\overline{L}\otimes
I_{M}\right)
\right.
\right.
\\
&
\nonumber
\hspace{3.45cm}
\left.
\left.
\left(\widetilde{L}(i)\otimes I_{M}\right)\mathbf{y}_{R,\mathcal{C}^{\perp}}(i)
\left|\rule{0cm}{.35cm}\right.\mathcal{F}_{i}\right]
+
\right.
\\
\nonumber
&
\hspace{0.0cm}
 \left.
+2\beta^{2}(i)\mathbb{E}_{\mathbf{\theta}^{\ast}}
\left[\mathbf{y}_{R,\mathcal{C}^{\perp}}^{T}(i)\left(\widetilde{L}(i)\otimes
I_{M}\right)\left(\overline{L}\otimes
I_{M}\right)
\right.
\right.
\\
&
\nonumber
\hspace{3.3cm}
\left.
\left.
\left(\widetilde{L}(i)\otimes
I_{M}\right)\left(\widehat{\mathbf{x}}_{\mathcal{C}^{\perp}}(i)\right)^{R}\left|\rule{0cm}{.35cm}\right.
\mathcal{F}_{i}\right]
+
\right.
\\
\nonumber
&
\hspace{0.0cm}
 \left.
 +\beta^{2}(i)\mathbb{E}_{\mathbf{\theta}^{\ast}}
\left[\left(\widehat{\mathbf{x}}_{\mathcal{C}^{\perp}}^{T}(i)\right)^{R}
\left(\widetilde{L}(i)\otimes I_{M}\right)\left(\overline{L}\otimes I_{M}\right)
\right.
\right.
\\
&
\nonumber
\hspace{3.3cm}
\left.
\left.
\left(\widetilde{L}(i)\otimes I_{M}\right)\left(\widehat{\mathbf{x}}_{\mathcal{C}^{\perp}}(i)\right)^{R}
|\mathcal{F}_{i}\right]\right]
+
\\
\nonumber
&
\hspace{0.0cm}
 +(i+1)^{\varepsilon_{2}}
 -i^{\varepsilon_{1}}\mathbf{y}_{R,\mathcal{C}^{\perp}}^{T}(i)
\left(\overline{L}\otimes I_{M}\right)\mathbf{y}_{R,\mathcal{C}^{\perp}}(i)
-\rho i^{\varepsilon_{2}}
\nonumber
\end{align}
where we repeatedly used the fact that
\begin{align*}
\left(\overline{L}\otimes
I_{M}\right)\mathbf{y}_{R}(i)
&=
\left(\overline{L}\otimes
I_{M}\right)\mathbf{y}_{R,\mathcal{C}^{\perp}}(i)
\\
\left(\widetilde{L}(i)\otimes I_{M}\right)\mathbf{y}_{R}(i)
&=
\left(\widetilde{L}(i)\otimes
I_{M}\right)\mathbf{y}_{R,\mathcal{C}^{\perp}}(i)
\end{align*}
and $\widetilde{L}(i)$ is independent of $\mathcal{F}_{i}$.

In going to the next step we use the following inequalities, where $c_{1}>0$ is a constant:
\begin{align}
\label{tildecons22}
&
\hspace{-.2cm}
\mathbf{y}_{R,\mathcal{C}^{\perp}}^{T}(i)\left(\overline{L}\otimes
I_{M}\right)^{2}\mathbf{y}_{R,\mathcal{C}^{\perp}}(i)
\geq
\lambda_{2}^{2}(\overline{L})
\left\|\mathbf{y}_{R,\mathcal{C}^{\perp}}(i)\right\|^{2}
\\
&
\nonumber
=
\frac{\lambda_{2}^{2}(\overline{L})}{\lambda_{N}(\overline{L})}
\lambda_{N}(\overline{L})\left\|\mathbf{y}_{R,\mathcal{C}^{\perp}}(i)
\right\|^{2}
\nonumber
\\
&
\nonumber
\geq \frac{\lambda_{2}^{2}(\overline{L})}{\lambda_{N}(\overline{L})}
\mathbf{y}_{R,\mathcal{C}^{\perp}}^{T}(i)
\left(\overline{L}\otimes I_{M}\right)
\mathbf{y}_{R,\mathcal{C}^{\perp}}(i)
\\
\label{tildecons23}
&
\mathbf{y}_{R,\mathcal{C}^{\perp}}^{T}(i)
\left(\overline{L}\otimes
I_{M}\right)^{2}
\mathbf{y}_{R,\mathcal{C}^{\perp}}(i)
\leq
\lambda_{N}^{2}(\overline{L})
\left\|\mathbf{y}_{R,\mathcal{C}^{\perp}}(i)\right\|^{2}
\nonumber
\\
&
=
\frac{\lambda_{N}^{2}(\overline{L})}{\lambda_{2}(\overline{L})}
\lambda_{2}(\overline{L})
\left\|\mathbf{y}_{R,\mathcal{C}^{\perp}}(i)\right\|^{2}
\nonumber
\\
&
\leq \frac{\lambda_{N}^{2}(\overline{L})}{\lambda_{2}(\overline{L})}
\mathbf{y}_{R,\mathcal{C}^{\perp}}^{T}(i)\left(\overline{L}
\otimes I_{M}\right)\mathbf{y}_{R,\mathcal{C}^{\perp}}(i)
\\
\label{tildecons24}
&
\mathbf{y}_{R,\mathcal{C}^{\perp}}^{T}(i)\left(\overline{L}\otimes
I_{M}\right)^{3}\mathbf{y}_{R,\mathcal{C}^{\perp}}(i)
\leq
\lambda_{N}^{3}(\overline{L})
\left\|\mathbf{y}_{R,\mathcal{C}^{\perp}}(i)\right\|^{2}
\nonumber
\\
&
=
\frac{\lambda_{N}^{3}(\overline{L})}{\lambda_{2}(\overline{L})}
\lambda_{2}(\overline{L})
\left\|\mathbf{y}_{R,\mathcal{C}^{\perp}}(i)\right\|^{2}\nonumber
\\
&
\leq \frac{\lambda_{N}^{3}(\overline{L})}{\lambda_{2}(\overline{L})}\mathbf{y}_{R,\mathcal{C}^{\perp}}^{T}(i)\left(\overline{L}\otimes I_{M}\right)\mathbf{y}_{R,\mathcal{C}^{\perp}}(i)
\end{align}
\begin{align}
\nonumber
&\mathbb{E}_{\mathbf{\theta}^{\ast}}
\left[\mathbf{y}_{R,\mathcal{C}^{\perp}}^{T}(i)\left(\widetilde{L}(i)\otimes
I_{M}\right)\left(\overline{L}\otimes
I_{M}\right)
\right.
\\
&
\nonumber
\hspace{2cm}
\left.
\left(\widetilde{L}(i)\otimes I_{M}\right)
\mathbf{y}_{R,\mathcal{C}^{\perp}}(i)\left|\rule{0cm}{.35cm}\right.
\mathcal{F}_{i}\right]
\leq
\\
&
\nonumber
\leq
\lambda_{N}(\overline{L})\mathbb{E}_{\mathbf{\theta}^{\ast}}
\left[\left\|\left(\widetilde{L}(i)\otimes
I_{M}\right)\mathbf{y}_{R,\mathcal{C}^{\perp}}(i)
\right\|^{2}\,|\,\mathcal{F}_{i}\right]
\\
\nonumber
&
\leq
c_{1}\lambda_{N}(\overline{L})\mathbb{E}_{\mathbf{\theta}^{\ast}}
\left[\left\|\mathbf{y}_{R,\mathcal{C}^{\perp}}(i)
\right\|^{2}\left|\rule{0cm}{.35cm}\right.\mathcal{F}_{i}\right]\nonumber
\\
&
\nonumber
=
c_{1}\lambda_{N}(\overline{L})\left\|\mathbf{y}_{R,\mathcal{C}^{\perp}}(i)\right\|^{2}\nonumber
\\
&
\label{tildecons25}
\leq \frac{c_{1}\lambda_{N}(\overline{L})}{\lambda_{2}}\mathbf{y}_{R,\mathcal{C}^{\perp}}^{T}(i)\left(\overline{L}\otimes I_{M}\right)\mathbf{y}_{R,\mathcal{C}^{\perp}}(i)
\\
&
\label{tildecons26}
\mathbb{E}_{\mathbf{\theta}^{\ast}}
\left[\mathbf{y}_{R,\mathcal{C}^{\perp}}^{T}(i)
\left(\widetilde{L}(i)\otimes I_{M}\right)
\left(\overline{L}\otimes I_{M}\right)
\right.
\\
&
\nonumber
\hspace{2cm}
\left.
\left(\widetilde{L}(i)\otimes I_{M}\right)
\left(\widehat{\mathbf{x}}_{\mathcal{C}^{\perp}}(i) \right)^{R}
\left|\rule{0cm}{.35cm}\right.\mathcal{F}_{i}\right]
\\
&
\nonumber
\leq
\mathbb{E}_{\mathbf{\theta}^{\ast}}
\left[
\rule{0cm}{.4cm}
\left\|
\mathbf{y}_{R,\mathcal{C}^{\perp}}^{T}(i)
\right\|
\,\left\|
\left(\widetilde{L}(i)\otimes I_{M}\right)
\right\|
\right.
\\
&
\label{eq:aux1}
\left.
\left\|
\left(\overline{L}\otimes I_{M}\right)
\right\|
\,
\left\|
\left(\widetilde{L}(i)\otimes I_{M}\right)
\right\|
\,
\left\|
\left(\widehat{\mathbf{x}}_{\mathcal{C}^{\perp}}(i)\right)^{R}
\right\|
\left|\rule{0cm}{.3cm}
\right.
\mathcal{F}_{i}\rule{0cm}{.4cm}\right]
\\
&
\label{eq:aux2}
\leq
Rc_{1}\lambda_{N}(\overline{L})
\left\|
\mathbf{y}_{R,\mathcal{C}^{\perp}}(i)
\right\|
\\
&
\nonumber
\leq
Rc_{1}\lambda_{N}(\overline{L})+
Rc_{1}\lambda_{N}(\overline{L})
\left\|
\mathbf{y}_{R,\mathcal{C}^{\perp}}(i)
\right\|^{2}
\\
&
\nonumber
\leq
Rc_{1}\lambda_{N}(\overline{L})
+\frac{Rc_{1}\lambda_{N}(\overline{L})}{\lambda_{2}(\overline{L})}
\mathbf{y}_{R,\mathcal{C}^{\perp}}^{T}(i)
\left(\overline{L}\otimes I_{M}\right) \mathbf{y}_{R,\mathcal{C}^{\perp}}(i)
\\
&
\label{tildecons27}
\mathbb{E}_{\mathbf{\theta}^{\ast}}
\left[
\left(\widehat{\mathbf{x}}_{\mathcal{C}^{\perp}}^{T}(i)\right)^{R}
\left(\widetilde{L}(i)\otimes I_{M}\right) \left(\overline{L}\otimes I_{M}\right)
\right.
\\
&
\hspace{2cm}
\nonumber
\left.
\left(\widetilde{L}(i)\otimes I_{M}\right)
\left(\widehat{\mathbf{x}}_{\mathcal{C}^{\perp}}(i)\right)^{R}
\,\left|\rule{0cm}{.35cm}\right.
\,\mathcal{F}_{i}\rule{0cm}{.45cm}\right]
\\
&
\leq
R^{2}c_{1}\lambda_{N}(\overline{L})
\\
&
\label{tildecons28}
(i+1)^{\varepsilon_{1}}-i^{\varepsilon_{1}}
\leq
\varepsilon_{1}(i+1)^{\varepsilon_{1}-1}
\\
&
\label{tildecons29}
\rho(i+1)^{\varepsilon_{2}}-\rho
i^{\varepsilon_{2}}
\leq
\rho\varepsilon_{2}i^{\varepsilon_{2}-1}.
\end{align}
 We go from \eqref{eq:aux1} to \eqref{eq:aux2} because $\left\|\left(\widehat{\mathbf{x}}_{\mathcal{C}^{\perp}}(i)
\right)^{R}\right\|\leq R$.
Using inequalities~\eqref{tildecons22}-\eqref{tildecons29}, we have
from \eqref{tildecons20}
\begin{align}
\label{tildecons30}
&\mathbb{E}_{\mathbf{\theta}^{\ast}}
\left[V(i+1,\mathbf{y}_{R}(i+1))\left|\rule{0cm}{.35cm}\right.\mathcal{F}_{i}
\right]-V(i,\mathbf{y}_{R}(i))
\leq
\\
&
\nonumber
(i+1)^{\varepsilon_{1}}\left[\frac{\varepsilon_{1}}{(i+1)^{1}}-2\beta(i)
\frac{\lambda_{2}^{2}(\overline{L})}{\lambda_{N}(\overline{L})}
-2\alpha(i)
+
\right.
\\
&
\nonumber
\left.
+2\beta(i)\alpha(i)\frac{\lambda_{N}^{2}(\overline{L})}{
\lambda_{2}(\overline{L})}
\hspace{-.055cm}
+
\hspace{-.055cm}
\beta^{2}
\hspace{-.05cm}
(i)\frac{\lambda_{N}^{3}(\overline{L})}{
\lambda_{2}(\overline{L})}
\hspace{-.055cm}
+
\hspace{-.055cm}
\alpha^{2}
\hspace{-.05cm}
(i)
\hspace{-.055cm}
+
\hspace{-.055cm}
\beta^{2}
\hspace{-.05cm}
(i)\frac{c_{1}\lambda_{N}(\overline{L})}{
\lambda_{2}}
\hspace{-.03cm}
+
\right.
\\
&
\nonumber
\left.
+
2\beta^{2}(i)\frac{Rc_{1}\lambda_{N}(\overline{L})}{
\lambda_{2}(\overline{L})}\right]
\mathbf{y}_{R,\mathcal{C}^{\perp}}^{T}(i)\left(\overline{L}\otimes
I_{M}\right)\mathbf{y}_{R,\mathcal{C}^{\perp}}(i)
+
\\
&
\nonumber
+\left[\frac{1}{2\tau_{2}-\varepsilon_{1}}
\left(2Rc_{1}\lambda_{N}(\overline{L})
+R^{2}c_{1}\lambda_{N}(\overline{L})\right)
+\rho\varepsilon_{2}i^{\varepsilon_{2}-1}\right]
\end{align}
For the first term on the R.H.S.~of \eqref{tildecons30}
involving
$\mathbf{y}_{R,\mathcal{C}^{\perp}}^{T}(i)\left(\overline{L}\otimes
I_{M}\right)\mathbf{y}_{R,\mathcal{C}^{\perp}}(i)$, the
coefficient $-2\beta(i)(i+1)^{\varepsilon_{1}}$ dominates all
other coefficients eventually ($\tau_{2}<1$ by
Assumption~\textbf{(D.5)}); hence, the first term on the R.H.S.~of \eqref{tildecons30} becomes negative eventually (for
sufficiently large $i$). The second term on the R.H.S.~of
\eqref{tildecons30} becomes negative eventually because
$\rho\varepsilon_{2}<0$ and
$1-\varepsilon_{2}<2\tau_{2}-\varepsilon_{1}$ by assumption. Hence
there exists sufficiently large $i$, say $i_{R}$, such that,
\[
\mathbb{E}_{\mathbf{\theta}^{\ast}}\left[V(i+1,\mathbf{y}_{R}(i+1))\left|\rule{0cm}{.35cm}\right.\mathcal{F}_{i}\right]-V(i,\mathbf{y}_{R}(i))\leq
0, \forall i\geq i_{R}.
\]
This shows  $\left\{V(i,\mathbf{y}_{R}(i))\right\}_{i\geq
i_{R}}$ is a non-negative supermartingale w.r.t.~the filtration
$\left\{\mathcal{F}_{i}\right\}_{i\geq i_{R}}$. Thus,
$\left\{V(i,\mathbf{y}_{R}(i))\right\}_{i\geq i_{R}}$ converges a.s. to a
finite random variable (see~\cite{Kallenberg}). Clearly,
the sequence $\rho i^{\varepsilon_{2}}$ goes to zero as
$\varepsilon_{2}<0$. Then:
\[
\mathbb{P}_{\mathbf{\theta}^{\ast}}\left[\lim_{i\rightarrow\infty}i^{\varepsilon_{1}}\mathbf{y}_{R}^{T}(i)\left(\overline{L}\otimes
I_{M}\right)\mathbf{y}_{R}(i)~\mbox{exists and is finite}\right]=1
\]
Since $i^{\varepsilon_{1}}\rightarrow\infty$ as
$i\rightarrow\infty$, it follows
\begin{equation}
\label{tildecons33}
\mathbb{P}_{\mathbf{\theta}^{\ast}}\left[\lim_{i\rightarrow\infty}\mathbf{y}_{R}^{T}(i)\left(\overline{L}\otimes
I_{M}\right)\mathbf{y}_{R}(i)=0\right]=1
\end{equation}
Since $\mathbf{y}_{R}^{T}(i)\left(\overline{L}\otimes
I_{M}\right)\mathbf{y}_{R}(i)\geq\lambda_{2}(\overline{L})\left\|\mathbf{y}_{R,\mathcal{C}^{\perp}}(i)\right\|^{2}$,
from \eqref{tildecons33} we have
\begin{equation}
\label{tildecons34}
\mathbb{P}_{\mathbf{\theta}^{\ast}}\left[\lim_{i\rightarrow\infty}\mathbf{y}_{R,\mathcal{C}^{\perp}}(i)=0\right]=1
\end{equation}
To establish \eqref{tildecons14} we note that
\begin{equation}
\label{tildecons35}
\mathbf{y}_{R,\mathcal{C}}(i)=\mathbf{1}_{N}\otimes\mathbf{y}_{R,\mbox{\scriptsize{avg}}}(i)
\end{equation}
where
\begin{equation}
\label{tildecons36}
\mathbf{y}_{R,\mbox{\scriptsize{avg}}}(i+1)=\left(1-\alpha(i)\right)\mathbf{y}_{R,\mbox{\scriptsize{avg}}}(i)
\end{equation}
Since $\sum_{i\geq 0}\alpha(i)=\infty$, it follows from standard
arguments that
$\mathbf{y}_{R,\mbox{\scriptsize{avg}}}(i)\rightarrow 0$ as
$i\rightarrow\infty$. We then have from \eqref{tildecons35}
\begin{equation}
\label{tildecons37}
\mathbb{P}_{\mathbf{\theta}^{\ast}}\left[\lim_{i\rightarrow\infty}\mathbf{y}_{R,\mathcal{C}}(i)=0\right]=1
\end{equation}
which together with \eqref{tildecons34} establishes
\eqref{tildecons14}. The claim in \eqref{theorem_tilde1}
then follows from the arguments above.

We now prove the claim in \eqref{theorem_tilde2}. Recall the 
 matrix $P$ in \eqref{circcons5}. Using the fact,
\begin{equation}
\label{tildecons39} P\left(L(i)\otimes
I_{M}\right)=P\left(\overline{L}\otimes
I_{M}\right)=\mathbf{0},~~\forall i
\end{equation}
we have
\begin{align}
\nonumber
&
 P\widetilde{\mathbf{x}}(i+1) =
P\widetilde{\mathbf{x}}(i)-\alpha(i)
\left[P\widetilde{\mathbf{x}}(i)-PJ(\mathbf{z}(i))\right]
\\
&
\label{tildecons40}
\hspace{3cm}
-\beta(i)P\left(\mathbf{\Upsilon}(i)+\mathbf{\Psi}(i)\right)
\end{align}
and similarly
\begin{align}
\nonumber
&
P\widehat{\mathbf{x}}(i+1) =
P\widehat{\mathbf{x}}(i)-\alpha(i)
\left[P\widehat{\mathbf{x}}(i)-PJ(\mathbf{z}(i))\right]
\\
&
\label{tildecons41}
\hspace{3.0cm}
-\beta(i)P\left(\mathbf{\Upsilon}(i)+\mathbf{\Psi}(i)\right)
\end{align}
Since the sequences $\left\{P\widetilde{\mathbf{x}}(i)\right\}_{i\geq 0}$ and
$\left\{P\widehat{\mathbf{x}}(i)\right\}_{i\geq 0}$ follow the same recursion
and start with the same initial state
$P\widetilde{\mathbf{x}}(0)$, they are equal, and we have~$\forall
i$
\begin{eqnarray}
\label{tildecons42} P\mathbf{y}(i) & = &
P\left(\widetilde{\mathbf{x}}(i)-\widehat{\mathbf{x}}(i)\right)\nonumber
\\ &=&0
\end{eqnarray}
From \eqref{tildecons11} we then have
\begin{align*}
&
\mathbf{y}(i+1)=\left[I_{NM}-\beta(i)\left(\overline{L}\otimes
I_{M}\right)-\alpha(i)I_{NM}-P\right]\mathbf{y}(i)
-
\\
&
\hspace{1.15cm}
-
\beta(i)\left(\widetilde{L}(i)\otimes
I_{M}\right)\mathbf{y}(i)
-
\beta(i)\left(\widetilde{L}(i)\otimes
I_{M}\right)\widehat{\mathbf{x}}(i)
\end{align*}
By Lemma~\ref{hatconslemma}, to prove the claim in
\eqref{theorem_tilde1}, it suffices to prove 
\begin{equation}
\label{tildecons44}
\lim_{i\rightarrow\infty}\mathbb{E}_{\mathbf{\theta}^{\ast}}
\left[\left\|\mathbf{y}(i)\right\|^{2}\right]=0
\end{equation}

From Lemma~\ref{hatconslemma}, we note that the sequence
$\{\widehat{\mathbf{x}}(i)\}_{i\geq 0}$ converges in
$\mathcal{L}_{2}$ to $\mathbf{1}_{N}\otimes
h(\mathbf{\theta}^{\ast})$ and hence $\mathcal{L}_{2}$ bounded,
i.e., there exists constant $c_{3}>0$, such that,
\begin{equation}
\label{tildecons48} \sup_{i\geq
0}\mathbb{E}_{\mathbf{\theta}^{\ast}}\left[\left\|\widehat{\mathbf{x}}(i)\right\|^{2}\right]\leq
c_{3}<\infty
\end{equation}
Choose $j$ large enough, such that, for $i\geq j$
\[
\left\|I_{NM}-\beta(i)\left(\overline{L}\otimes
I_{M}\right)-\alpha(i)I_{NM}-P\right\|\leq
1-\beta(i)\lambda_{2}(\overline{L})
\]
Noting that $\widetilde{L}(i)$ is independent of $\mathcal{F}_{i}$
and $\left\|\widetilde{L}(i)\right\|\leq c_{2}$ for some constant
$c_{2}>0$, we have for $i\geq j$,
\begin{align}
\label{tildecons50}
&\mathbb{E}_{\mathbf{\theta}^{\ast}}
\left[\left\|\mathbf{y}(i+1)\right\|^{2}\right]
=
\\
&
\nonumber
\mathbb{E}_{\mathbf{\theta}^{\ast}}\left[\mathbf{y}^{T}(i)\left(I_{NM}-\beta(i)\left(\overline{L}\otimes
I_{M}\right)-\alpha(i)I_{NM}-P\right)^{2}\mathbf{y}(i)\right.
\\
&
\nonumber
 \left. +
\beta^{2}(i)\mathbf{y}^{T}(i)\left(\widetilde{L}(i)\right)^{2}\mathbf{y}(i)+\beta^{2}(i)\widehat{\mathbf{x}}^{T}(i)\left(\widetilde{L}(i)\right)^{2}\widehat{\mathbf{x}}(i)\right.
\\
&
\nonumber
\left.
+
\beta^{2}(i)\mathbf{y}^{T}(i)\left(\widetilde{L}(i)\right)^{2}\widehat{\mathbf{x}}(i)\right]
\\
&
\nonumber
\leq
\left(1-\beta(i)\lambda_{2}(\overline{L})\right)\mathbb{E}_{\mathbf{\theta}^{\ast}}\left[\left\|\mathbf{y}(i)\right\|^{2}\right]+c^{2}_{2}\beta^{2}(i)\mathbb{E}_{\mathbf{\theta}^{\ast}}\left[\left\|\mathbf{y}(i)\right\|^{2}\right]
\\
&
\nonumber
+
c_{2}^{2}c_{3}\beta^{2}(i)+\left(2\beta^{2}(i)c_{2}^{2}c_{3}^{\frac{1}{2}}\right)\mathbb{E}^{\frac{1}{2}}_{\mathbf{\theta}^{\ast}}\left[\left\|\mathbf{y}(i)\right\|^{2}\right]
\\
&
\nonumber
\leq
\hspace{-.03cm}
\left(
\hspace{-.03cm}
1-\beta(i)\lambda_{2}(\overline{L})+c^{2}_{2}\beta^{2}(i)
+2\beta^{2}(i)c_{2}^{2}c_{3}^{\frac{1}{2}}
\hspace{-.03cm}
\right)\mathbb{E}_{\mathbf{\theta}^{\ast}}\left[\left\|\mathbf{y}(i)\right\|^{2}\right]
\\
&
\nonumber
+
\beta^{2}(i)\left(c_{2}^{2}c_{3}+2c_{2}^{2}c_{3}^{\frac{1}{2}}\right)
\end{align}
where in the last step we used the inequality
\begin{equation}
\label{tildecons51}
\mathbb{E}^{\frac{1}{2}}_{\mathbf{\theta}^{\ast}}\left[\left\|\mathbf{y}(i)\right\|^{2}\right]\leq
\mathbb{E}_{\mathbf{\theta}^{\ast}}\left[\left\|\mathbf{y}(i)\right\|^{2}\right]+1
\end{equation}
Now similar to Lemma~\ref{circcons}, choose $j_{1}\geq j$ and
$0<c_{4}<\lambda_{2}(\overline{L})$, such that,
\[
1-\beta(i)\lambda_{2}(\overline{L})+c^{2}_{2}\beta^{2}(i)+2\beta^{2}(i)c_{2}^{2}c_{3}^{\frac{1}{2}}\leq
1-\beta(i)c_{4},~~\forall i\geq j_{1}
\]
Then, for $i\geq j_{1}$, from \eqref{tildecons50}
\begin{align}
&
\nonumber
\mathbb{E}_{\mathbf{\theta}^{\ast}}
\left[
\left\|\mathbf{y}(i+1)\right\|^{2}\right]
\leq\left(1-\beta(i)c_{4}\right)
\mathbb{E}_{\mathbf{\theta}^{\ast}}
\left[
\left\|\mathbf{y}(i)\right\|^{2}
\right]
+
\\
&
\nonumber
\hspace{4cm}
+
\beta^{2}(i)\left(c_{2}^{2}c_{3}+2c_{2}^{2}c_{3}^{\frac{1}{2}}\right)
\end{align}
from which we conclude that
$\lim_{i\rightarrow\infty}\mathbb{E}_{\mathbf{\theta}^{\ast}}
\left[\left\|\mathbf{y}(i)\right\|^{2}\right]=0$ by
Lemma~\ref{prop_alpha} (see also Lemma~\ref{circcons}.)

\end{proof}
}


%
%

\textbf{Proof of Theorem~\ref{theorem_untrans}}

{\small

\begin{proof}
Consistency follows from the fact that by Theorem~\ref{theorem_tilde} 
the sequence $\left\{\widetilde{\mathbf{x}}(i)\right\}_{i\geq 0}$
converges a.s. to $\mathbf{1}_{N}\otimes h(\mathbf{\theta}^{\ast})$,
and the function $h^{-1}(\cdot)$ exists and is continuous on the open set $\mathcal{U}$.

To establish the second claim, we note that, if $h^{-1}(\cdot)$ is
Lipschitz continuous, there exists constant $k>0$, such that
\[
\left\|h^{-1}(\widetilde{\mathbf{y}}_{1})-h^{-1}(\widetilde{\mathbf{y}}_{2})\right\|\leq
k\left\|\widetilde{\mathbf{y}}_{1}-\widetilde{\mathbf{y}}_{2}\right\|,~~\forall~\widetilde{\mathbf{y}}_{1},\widetilde{\mathbf{y}}_{2}\in\mathbb{R}^{M}
\]
Since $\mathcal{L}_{2}$ convergence implies $\mathcal{L}_{1}$, we
then have from Theorem~\ref{theorem_tilde} 
 for $1\leq n\leq N$
\begin{align}
\label{mainNLU4}
&
\lim_{i\rightarrow\infty}\left\|\mathbb{E}_{\mathbf{\theta}^{\ast}}
\left[\mathbf{x}_{n}(i)-\mathbf{\theta}^{\ast}\right]\right\|
\leq
\lim_{i\rightarrow\infty}\mathbb{E}_{\mathbf{\theta}^{\ast}}
\left[\left\|\mathbf{x}_{n}(i)-\mathbf{\theta}^{\ast}\right\|\right]\nonumber
\\
&
\nonumber
\hspace{2cm}
=
\lim_{i\rightarrow\infty}\mathbb{E}_{\mathbf{\theta}^{\ast}}
\left[\left\|h^{-1}\left(\widetilde{\mathbf{x}}_{n}(i)\right)
-h^{-1}\left(h(\mathbf{\theta}^{\ast})\right)\right\|\right]\nonumber
\\
&
\nonumber
\hspace{2cm}
\leq
k\lim_{i\rightarrow\infty}\mathbb{E}_{\mathbf{\theta}^{\ast}}
\left[\left\|\widetilde{\mathbf{x}}_{n}(i)
-h(\mathbf{\theta}^{\ast})\right\|\right]\nonumber
\\
&
\nonumber
\hspace{2cm}
=
0
\end{align}
which establishes the theorem.
\end{proof}
}

\begin{IEEEbiography}{Soummya Kar} (S'05--M'10) received
the B.Tech. degree in Electronics and Electrical
Communication Engineering from the Indian Institute
of Technology, Kharagpur, India, in May 2005
and the Ph.D. degree in electrical and computer
engineering from Carnegie Mellon University,
Pittsburgh, PA, in 2010. From June 2010 to May 2011 he was with the EE Department at Princeton University as a Postdoctoral Research Associate.
 He is currently an Assistant Research Professor of ECE at Carnegie Mellon University.
 His research interests include performance
analysis and inference in large-scale networked systems, adaptive stochastic
systems, stochastic approximation, and large deviations.
\end{IEEEbiography}

\vspace*{-1.0cm}
\begin{IEEEbiography}{Jos\'{e} M. F. Moura} (S'71--M'75--SM'90--F'94) received  degrees from Instituto Superior T\'ecnico (IST),
Portugal, and from the Massachusetts Institute
of Technology (MIT), Cambridge, MA.
He  is University Professor at Carnegie Mellon University (CMU), was on the faculty at IST and has been visiting Professor at MIT.
%
 His interests include statistical and algebraic signal and image processing.

Dr.~Moura is Division~IX Director of the IEEE, was President of the IEEE Signal Processing Society (SPS), and  Editor in Chief of the  IEEE Transactions on Signal Processing.
%
 Dr.~Moura is a  Fellow from IEEE and the AAAS and an  Academy of Sciences of Portugal corresponding member. He has received several awards including the  SPS Technical Achievement Award. In 2010, he was elected University Professor at CMU.
\end{IEEEbiography}

\vspace*{-2cm}
\begin{IEEEbiography}{Kavita Ramanan} received her Ph.D. from the Division of Applied Mathematics at Brown University in 1998. After a post-doctoral position at the Technion in Haifa, Israel, Dr.~Ramanan joined the Mathematical Sciences Center of Bell Laboratories, Lucent, as a Member of Technical Staff. In 2003, Dr.~Ramanan returned to academia as an Associate Professor at the Department of Mathematical Sciences at Carnegie Mellon University (CMU), Pittsburgh. In 2010, she moved to the Division of Applied Mathematics at Brown University where she is a Professor of Applied Mathematics. Dr.~Ramanan works on probability theory, stochastic processes and their applications. Since 2008, she has also been an adjunct professor at the Chennai Mathematical Institute, India.
\end{IEEEbiography}
\end{document}